\long\def\@makecaption#1#2{%
  \vskip\abovecaptionskip\footnotesize
  \sbox\@tempboxa{#1. #2}%
  \ifdim \wd\@tempboxa >\hsize
    #1. #2\par
  \else
    \global \@minipagefalse
    \hb@xt@\hsize{\hfil\box\@tempboxa\hfil}%
  \fi
  \vskip\belowcaptionskip}
\newcommand{\todo}[1][\null]{\ensuremath{\clubsuit}}
\newcommand{\noprint}[1]{}
\newcommand{\checked}[1][\null]{\ensuremath{\boldsymbol{\surd}}}
\newcommand{\p}{\partial}
\newcommand{\sgn}{\mathop{\rm sgn}\nolimits}
\newcommand{\EqOrd}{r}
\newtheorem{theorem}{Theorem}
\newtheorem{corollary}[theorem]{Corollary}
\newtheorem{proposition}[theorem]{Proposition}
\newtheorem*{problem*}{Problem}
{\theoremstyle{definition}

\newtheorem{remark}[theorem]{Remark}
\newtheorem*{remark*}{Remark}
}
\begin{document}

\par\noindent {\LARGE\bf
Equivalence groupoid and group classification\\ of a class of variable-coefficient Burgers equations
\par}

\vspace{4mm}\par\noindent {\large 
Stanislav Opanasenko$^{\dag\ddag}$, Alexander Bihlo$^\dag$ and Roman O.\ Popovych$^{\ddag\S}$,
\par}

\vspace{4mm}\par\noindent{\it
$^{\dag}$Department of Mathematics and Statistics, Memorial University of Newfoundland,\\
$\phantom{^{\dag}}$~St.\ John's (NL) A1C 5S7, Canada
}

\vspace{2mm}\par\noindent {\it
$^\ddag$Institute of Mathematics of NAS of Ukraine, 3 Tereshchenkivska Str., 01024 Kyiv, Ukraine\par
}

\vspace{2mm}\par\noindent {\it
$^{\S}$Fakult\"at f\"ur Mathematik, Universit\"at Wien, Oskar-Morgenstern-Platz 1, A-1090 Wien, Austria
}

\vspace{2mm}\par\noindent {\it
\textup{E-mail:} sopanasenko@mun.ca, abihlo@mun.ca, rop@imath.kiev.ua
}\par

\vspace{8mm}\par\noindent\hspace*{5mm}\parbox{150mm}{\small
We study admissible transformations and Lie symmetries for a class of variable-coefficient Burgers equations.
We combine the advanced methods of splitting into normalized subclasses and
of mappings between classes that are generated by families of point transformations
parameterized by arbitrary elements of the original classes.
A nontrivial differential constraint on the arbitrary elements of the class of variable-coefficient Burgers equations
leads to its partition into two subclasses, which are related to normalized classes
via families of point transformations parameterized by subclasses' arbitrary elements.
One of the mapped classes is proved to be normalized in the extended generalized sense,
and its effective extended generalized equivalence group is found.
Using the mappings between classes and the algebraic method of group classification,
we carry out the group classification of the initial class with respect to its equivalence groupoid.
}\par\vspace{4mm}

\noprint{
35B06  	Symmetries, invariants, etc.
35K59  	Quasilinear parabolic equations
}

\section{Introduction}

Suggested initially as a model for one-dimensional turbulence, the Burgers equation and its generalizations
are nowadays used for describing various phenomena in physics, in particular, in acoustics, condensed matter and statistical physics, 
as well as for studying non-physics problems such as vehicular traffic,
see, e.g., \cite{ChowdhurySantenSchadschneider2000,Crighton1979,Sachdev2009} and references therein.

The aim of the present paper is to study, from the point of view of symmetry analysis,
the class~$\mathcal L$ of variable-coefficient Burgers equations of the general form
\begin{gather}\label{Qu:QuClass}
u_t+C(t,x)uu_x=A^2(t,x)u_{xx},
\end{gather}
where $A^2$ and~$C$ are smooth functions of~$(t,x)$ with~$A^2C\neq0$.

\looseness=-1
The investigation of admissible transformations and Lie symmetries of Burgers-like equations
dates back to~\cite{katk1965a}, where the maximal Lie invariance algebra of the Burgers equation
was computed in the course of the group classification of diffusion--convection equations of the general form $u_t+uu_x=(f(u)u_x)_x$.
Conformal transformations between equations from the class~$\mathcal L$ with $C=1$ and $A^2_x=0$ were found in~\cite{Cates1989}.
Lie symmetries and similarity solutions of such equations were considered in~\cite{DoyleEnglefield1990} (see also~\cite{poch2014a,Soh2004}).
Later, Kingston and Sophocleous in~\cite{KingstonSophocleous1991} extended this consideration,
having implicitly computed the equivalence groupoid of the wider subclass~$\mathcal L_{0'}$ of~$\mathcal L$ associated with the constraint~$C=1$.
In fact, this was the first computation of the equivalence groupoid for a class of differential equations in the literature.
Moreover, the subclass~$\mathcal L_{0'}$ happens to be normalized as was shown in~\cite{PocheketaPopovych2017}.
The extended symmetry analysis thereof was also carried out in~\cite{PocheketaPopovych2017},
which included, in particular, solving the group classification problem for this subclass
and the construction of exact solutions to equations therein.
The paper~\cite{PocheketaPopovych2017} contains an extended review of results 
related to the symmetry analysis for equations in the class~$\mathcal L_{0'}$ as well.

In~\cite{GazeauWinternitz1992,WinternitzGazeau1992},
admissible transformations (which are called `allowed transformations' there)
of the class~$\mathcal K$ of variable-coefficient Korteweg--de Vries equations
$u_t+C(t,x)uu_x=A^3(t,x)u_{xxx}$ analogous to~\eqref{Qu:QuClass},
where $A^3C\neq0$, were computed and used for the classification of Lie symmetries of such equations.
See~\cite{VaneevaPosta2017} for a modern interpretation of these results
and~\cite{GagnonWinternitz1993} for the application of the same technique
to a class of (1+1)-dimensional variable-coefficient Schr\"odinger equations.
An attempt at carrying over the results of~\cite{GazeauWinternitz1992,WinternitzGazeau1992} to the class~$\mathcal L$
was made in~\cite{Qu1995} under the assumption
that the equivalence groupoids of the classes~$\mathcal K$ and~$\mathcal L$ are of the same structure.
In fact, this is not the case
as the $x$-components of admissible transformations in the class~$\mathcal K$
are necessarily affine with respect to~$x$, unlike those in the class~$\mathcal L$.
The main complication of studying the class~$\mathcal L$ in comparison with the class~$\mathcal K$ is
that the latter class splits into subclasses that are normalized at least in the extended generalized sense~\cite{VaneevaPosta2017},
while this is not the case for the class~$\mathcal L$.
To partition the class~$\mathcal L$ into more suitable subclasses, which are nevertheless not normalized, 
we need to overcome the problem by studying first the class~$\hat{\mathcal L}$ 
that is constituted by the equations of the form 
\begin{gather}\label{Qu:QuClassMapped}
u_t+uu_x=\hat A^2(t,x)u_{xx}+\hat A^1(t,x)u_x
\end{gather}
with~$\hat A^2\neq0$ and which is weakly similar to~$\mathcal L$ and then to return to the original class.
In fact, the desired partition for~$\mathcal L$ is achievable only
by virtue of a nontrivial differential condition on the arbitrary elements~$A^2$ and~$C$.
It also turns out that one of the subclasses of~$\hat{\mathcal L}$ is normalized in the extended generalized sense.
Furthermore, the normalization of the superclass~$\mathcal B$ of~$\mathcal L$ 
that consists of the general Burgers equations~\cite{OpanasenkoBihloPopovych2017} is of partial use here
since there is no mapping of the class~$\mathcal B$ to the class~$\mathcal L$
via gauging of arbitrary elements of the class~$\mathcal B$ by its equivalence transformations.

\looseness=-1
The classes~$\mathcal L$ and~$\hat{\mathcal L}$ are rich in equations of physical importance 
\cite{Crighton1979,Sachdev2009}.
At the same time, the independent and dependent variables are usually related to physical
variables and coordinates in a rather complicated manner for these equations. 
The roles of~$t$ and~$x$ are often interchanged, i.e., 
the variables~$t$ and~$x$ are interpreted as the space- and time-type values like the range and the retarded time, respectively. 
The sign of the nonlinearity is alternated as well, which merely gives an equivalent form of Burgers-like equations 
modulo alternating the sign of~$u$.
An important class of generalized Burgers equations consists of the so-called nonplanar Burgers equations of the form  
\begin{gather}\label{eq:NonplanarBurgersEq}
u_t+uu_x+\frac{A_t(t)}{2A(t)}u=u_{xx},\quad A>0.
\end{gather}
(Note that here and in what follows we omit constant parameters that can be removed 
using shifts and scalings of independent and dependent variables and alternating their signs.) 
Such equations describe the propagation of weakly nonlinear longitudinal waves 
subject to thermoviscous diffusion and geometrical effects related to changing `ray tube area', $A(t)$,~\cite{HammertonCrighton1989}.
Each of these equations is mapped by the point transformation 
$\tilde t=\int\sqrt{A(t)}\,{\rm d}t$, $\tilde x=x$, $\tilde u=\sqrt{A(t)}u$ to the equation
\begin{gather}\label{eq:SimilarToNonplanarBurgersEq}
\tilde u_{\tilde t}+\tilde u\tilde u_{\tilde x}=g(\tilde t)\tilde u_{\tilde x\tilde x}
\end{gather}
with $g(\tilde t)=\sqrt{A(t)}$. 
The equations of the form~\eqref{eq:SimilarToNonplanarBurgersEq} are also called nonplanar Burgers equations 
\cite[Section~4.6]{Sachdev2009} and constitute 
the subclass of the class~$\mathcal L$ singled out by the constraints $C=1$ and $A^2_x=0$, 
which was, as mentioned above, intensively studied within the framework of group analysis of differential equations. 
The preimages of the equations~\eqref{eq:SimilarToNonplanarBurgersEq} with $g(\tilde t)=1,\tilde t/2,\exp\tilde t$ 
among equations of the form~\eqref{eq:NonplanarBurgersEq}
are respectively those with $A(t)\varpropto t^j$, $j=0,1,2$, 
which model plane ($j=0$), cylindrical ($j=1$) and spherical ($j=2$) 
unidirectional spreading of finite-amplitude waves, see~\cite{CrightonScott1979} and \cite[Sections~3.3--3.4]{Sachdev2009}. 
When the ratio~$g(\tilde t)/\tilde t$ respectively tends to infinity, a nonzero finite value or zero as $\tilde t\rightarrow\infty$, 
it is regarded as the super-cylindrical, cylindrical or subcylindrical case~\cite{Scott1981}.
If a perfect gas has an exponential mean density distribution $\rho(t)\varpropto\exp(-t/H)$, 
where the scale height~$H$ is large compared with the basic length scale of a wave propagating vertically upwards (the $t$-direction) 
and $x$ is the retarded time, then the velocity~$u(t,x)$ satisfies the equation~(47) in~\cite{CrightonScott1979}
whose simplified form is $u_t+uu_x-u/(2H)=\exp(t/H)u_{xx}$ and which is reduced by a point transformation 
to the equation of the form~\eqref{eq:SimilarToNonplanarBurgersEq} with $g(\tilde t)=\tilde t$.
For each nonvanishing function~$a$ of~$t$, the equation $u_t+uu_x-a_t(t)u/a(t)=a(t)u_{xx}$,
which models the acoustic waves in the atmosphere~\cite{Romanova1970},
is reduced by a point transformation to the standard Burgers equation.
The generalized Burgers equations $u_t+uu_x=\varepsilon xu_{xx}+g(x)u_x$ appear 
in the course of modeling ionised gases~\cite[Section~4.9]{Sachdev2009}.

The remainder of the paper is organized as follows.
In Section~\ref{Qu:sec:EquivGroup} we find the structure of the equivalence groupoid of the class~$\mathcal L$
by studying the imaged class~$\hat{\mathcal L}$.
The latter class can be represented as the disjoint union of two subclasses~$\hat{\mathcal L}_0$ and~$\hat{\mathcal L}_1$,
which are invariant under admissible transformations of the class~$\hat{\mathcal L}$.
These subclasses are normalized in the extended generalized and the usual senses, respectively.
We also show that~$\hat{\mathcal L}_0$ is weakly similar to a subclass of~$\mathcal L$ normalized in the usual sense.
The induced partition of the class~$\mathcal L$ into subclasses~$\mathcal L_0$ and~$\mathcal L_1$
is associated with a nontrivial differential constraint on the coefficients~$C$ and~$A^2$.
With all this knowledge at our disposal, in Section~\ref{Qu:sec:GroupClass} we efficiently carry out
the group classification of the class~$\hat{\mathcal L}_1$ with respect to its equivalence group using the algebraic method, and
that of~$\mathcal L_1$ with respect to its equivalence groupoid using the mapping technique.
The group classification of the subclass~$\mathcal L_0$ is reduced to the classification problem solved in~\cite{PocheketaPopovych2017}.
To obtain the group classification of the class~$\mathcal L$ with respect to its equivalence groupoid,
we merely concatenate the classification lists for the subclasses~$\mathcal L_0$ and~$\mathcal L_1$.
Section~\ref{Qu:sec:Conclusion} is left for conclusions, where we summarize the most essential findings of the paper. 
In the appendix, we supplement the theory of the algebraic method of group classification with 
assertions on the equivalence group and the equivalence algebra 
of a subclass of a general class of differential equations, 
whose systems admit the projection of the same subalgebra of the equivalence algebra of the entire class
as their common Lie invariance algebra. 
On the basis of these assertions, we suggest a procedure for gauging the arbitrary elements 
of such subclasses by equivalence transformations, 
which is needed to construct of a proper group classification list for the entire class.

\section{Equivalence groupoid}\label{Qu:sec:EquivGroup}

Let $\mathcal L_\theta$ denote a system of differential equations of the form $L(x,u^{(\EqOrd)},\theta^{(q)}(x,u^{(\EqOrd)}))=0$,
where $x$, $u$ and $u^{(r)}$ are the tuples of independent variables, of dependent variables
and of derivatives of~$u$ with respect to~$x$ up to order~$\EqOrd$.
The tuple of functions $L=(L^1,\dots,L^l)$ of~$(x,u^{(\EqOrd)},\theta)$  is fixed
whereas the tuple of functions~$\theta=(\theta^1,\dots,\theta^k)$ of~$(x,u^{(\EqOrd)})$
runs through the solution set~$\mathcal S$ of an auxiliary system
of differential equations and inequalities in~$\theta$, where $x$ and $u^{(\EqOrd)}$ jointly play the role of independent variables.
Thus, the \textit{class of (systems of) differential equations}~$\mathcal L$
is the parameterized family of systems~$\mathcal L_\theta$ with~$\theta$ running through the set~$\mathcal S$.
The components of~$\theta$ are called the arbitrary elements of the class~$\mathcal L$.
The equivalence groupoid of the class~$\mathcal L$ consists of admissible transformations of this class,
i.e., of triples of the form~$(\theta,\varphi,\tilde\theta)$.
Here~$\mathcal L_\theta$ and~$\mathcal L_{\tilde \theta}$
are the source and the target systems, belonging to the class~$\mathcal L$ and
corresponding to the values~$\theta$ and~$\tilde \theta$ of the arbitrary-element tuple, respectively,
which in turn runs through the solution set of the auxiliary system~$\mathcal S$,
and~$\varphi$ is a point transformation relating the equations~$\mathcal L_\theta$ and~$\mathcal L_{\tilde \theta}$.
The class~$\mathcal L$ is referred to as normalized in the usual (resp.\ generalized or extended generalized) sense
if its usual (resp.\ generalized or extended generalized) equivalence group
generates the entire equivalence groupoid of this class.
For more details, see~\cite{BihloCardosoPopovych2012,OpanasenkoBihloPopovych2017,PopovychKunzingerEshraghi2010}.

Below, by~$\mathcal L$ we denote the \textit{initial} class of variable-coefficient Burgers equations
of the general form~\eqref{Qu:QuClass}.
We begin the study of admissible transformations of $\mathcal L$
by considering its superclass~$\mathcal B$ of the more general Burgers equations
\begin{gather*}
u_t+C(t,x)uu_x=A^2(t,x)u_{xx}+A^1(t,x)u_x+A^0(t,x)u+B(t,x),
\end{gather*}
where the arbitrary elements $A$'s, $B$ and~$C$ are smooth functions of~$(t,x)$ with~$A^2C\neq0$.
The class~$\mathcal B$ is normalized in the usual sense~\cite{OpanasenkoBihloPopovych2017}.
Moreover, it is easy to present its equivalence group explicitly, 
unlike the equivalence group of its arbitrary-order counterpart constituted by general Burgers--Korteweg--de Vries equations,
where invoking the general Leibniz rule and Fa\`{a} di Bruno's formula 
makes the explicit expressions for equivalence transformations overly cumbersome.

\begin{proposition}
The class~$\mathcal B$ is normalized in the usual sense. 
Its usual equivalence group is constituted by the following point transformations in the relevant space:%
\footnote{%
In general, the equivalence group of a class of differential equations is defined to act
on the space coordinatized with the corresponding independent and dependent variables,
the derivatives of dependent variables up to the order of equations in the class
and class' arbitrary elements \cite{popo06a,PopovychKunzingerEshraghi2010}.
Since the arbitrary elements of all the classes of differential equations considered in this paper
are functions of the independent variables $(t,x)$ only,
we can assume that for each of these classes,
the corresponding equivalence group acts on the space coordinatized with the independent and dependent variables $(t,x,u)$
as well as the arbitrary elements of the class.
}
\begin{gather*}
\tilde t=T(t),\quad
\tilde x=X(t,x),\quad
\tilde u=U^1(t)u+U^0(t,x),
\\
\tilde A^0=\frac1{T_t}\left(A^0+\frac{U^0_x}{U^1}C+\frac{U^1_t}{U^1}\right),\quad
\tilde A^1=\frac{X_x}{T_t}\left(A^1+\frac{X_{xx}}{X_x}A^2+\frac{U^0}{U^1}C-\frac{X_t}{X_x}\right),\quad
\tilde A^2=\frac{X_x^2}{T_t}A^2,\\
\tilde B=\frac1{T_t}\left(B-\frac{U^0U^0_x}{U^1}C-U^0_{xx}A^2-U^0_xA^1-U^0A^0+U^0_t-\frac{U^0U^1_t}{U^1}\right),\quad
\tilde C=\frac{X_x}{T_tU^1}C,
\end{gather*}
where $T$, $X$, $U^0$ and~$U^1$ are smooth functions of their arguments with $T_tX_xU^1\neq0$.
\end{proposition}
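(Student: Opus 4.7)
The plan is to start with an arbitrary admissible transformation $(\theta,\varphi,\tilde\theta)$ of the class~$\mathcal B$ and to show that the point transformation~$\varphi$, a priori of the most general form $\tilde t=T(t,x,u)$, $\tilde x=X(t,x,u)$, $\tilde u=U(t,x,u)$ with nondegenerate Jacobian, is in fact forced to satisfy $T_x=T_u=X_u=U_{uu}=0$ together with $(U_u)_x=0$. Once the structure of~$\varphi$ is pinned down, the explicit transformation rules for $A^0,A^1,A^2,B,C$ stated in the proposition drop out by matching the coefficients remaining in the transformed equation.

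The computation proceeds by expressing $\tilde u_{\tilde t}$, $\tilde u_{\tilde x}$ and $\tilde u_{\tilde x\tilde x}$ via the chain rule for total derivatives. Substituting these into the target equation and then replacing $u_t$, $u_{tx}$ and $u_{tt}$ through the source equation and its differential consequences, I obtain, after clearing denominators, a single polynomial identity in the jet variables $u_x$ and $u_{xx}$, with coefficients depending on $t,x,u$ and on the arbitrary elements of both systems. Since $u_x$ and $u_{xx}$ are free on the equation manifold and the source arbitrary elements are independent of one another and of the target ones, the coefficient of each monomial in $(u_x,u_{xx})$ must vanish separately.

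The main effort lies in unwinding this determining system in the correct order. Comparison of the leading terms in~$u_{xx}$, which are generated by the dependence of~$T$ on~$(x,u)$ and of~$X$ on~$u$ in the chain-rule inversion, forces $T_x=T_u=X_u=0$, so $T=T(t)$ with $T_t\neq 0$ and $X=X(t,x)$ with $X_x\neq 0$. With these simplifications in hand, the coefficient of~$u_x^2$ on the target side reduces to a nonzero multiple of~$U_{uu}$ (the multiplier being $\tilde A^2/X_x^2\neq 0$), yielding $U=U^1(t,x)u+U^0(t,x)$. Finally, the coefficient of~$u^2$ arising in the transformed nonlinearity $\tilde C\tilde u\tilde u_{\tilde x}$ is proportional to~$U^1U^1_x$; since no $u^2$-term is present on the source side and $\tilde C\neq 0$, $U^1\neq 0$, this forces $U^1_x=0$, hence $U^1=U^1(t)$.

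Once $\varphi$ has the advertised form, the remaining coefficients—those of $1$, $u$, $u_x$, $u_{xx}$ and $uu_x$—are linear in the target arbitrary elements, and solving for them gives exactly the formulae for $\tilde A^0,\tilde A^1,\tilde A^2,\tilde B,\tilde C$ stated in the proposition; the nondegeneracy conditions collapse to $T_tX_xU^1\neq 0$. Since every element of the equivalence groupoid of~$\mathcal B$ is thereby induced by a transformation from the listed group, the class~$\mathcal B$ is normalized in the usual sense. The genuinely delicate step is the first one in the hierarchy above, i.e.\ extracting the constraints $T_x=T_u=X_u=0$ from the highest-order terms in~$u_{xx}$ before any simplification; everything downstream is a tractable coefficient comparison.
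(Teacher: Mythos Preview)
The paper does not supply its own proof of this proposition; the sentence preceding it attributes the normalization of~$\mathcal B$ to~\cite{OpanasenkoBihloPopovych2017}, and the proposition then simply records the equivalence group explicitly without argument. Your proposal is exactly the standard direct-method computation one carries out to establish such a result, and the outline is correct: the highest-order splitting in the pulled-back target equation (using differential consequences of the source equation to eliminate $u_{tt}$, $u_{tx}$) forces $T=T(t)$ and $X=X(t,x)$; the $u_x^{\,2}$-coefficient then gives $U_{uu}=0$; the $u^2$-coefficient from the nonlinearity yields $U^1_x=0$; and the remaining coefficients of $1$, $u$, $u_x$, $u_{xx}$, $uu_x$ deliver the displayed transformation rules for $\tilde A^0,\tilde A^1,\tilde A^2,\tilde B,\tilde C$. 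Since none of the constraints on~$\varphi$ involves the arbitrary elements, every admissible transformation extends to an element of the stated group, which is precisely normalization in the usual sense.

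One remark on wording: in the groupoid computation the tuples $\theta,\tilde\theta$ are \emph{fixed} functions of $(t,x)$, so ``the source arbitrary elements are independent of one another and of the target ones'' is not the operative justification for splitting. What actually licenses the coefficient comparison is that all arbitrary elements depend only on $(t,x)$, so $u$, $u_x$, $u_{xx}$ remain unconstrained on the equation manifold; you use this correctly in practice when isolating the $u_x^{\,2}$ and $u^2$ coefficients.
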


To single out the class~$\mathcal L$ in the class~$\mathcal B$,
one needs to set additionally the arbitrary elements $A^0$, $A^1$ and~$B$ to be equal to zero,
i.e., to impose the constraints $A^0=0$, $A^1=0$ and~$B=0$.
At the same time, only two of these constraints --- either for $(A^0,A^1)$ or for $(A^1,B)$ --- can be simultaneously realized
via gauging the arbitrary elements of the class~$\mathcal B$ by a family of its equivalence transformations.
Moreover, none of the subclasses of~$\mathcal B$ obtained by successively setting the above constraints is normalized
in the usual sense, although the gauge~$A^1=0$ leads to a subclass normalized in the generalized sense,
cf.~\cite{OpanasenkoBihloPopovych2017}.

Hereafter for the sake of brevity we denote by~$(\tilde{*})$ an equation of the form~$(*)$ in the variables with tildes.
The knowledge of the equivalence groupoid of the superclass~$\mathcal B$ essentially simplifies
the computation of the equivalence groupoid~$\mathcal G^\sim$ of the class~$\mathcal L$.

\begin{proposition}\label{Qu:EquivGroupoidOfL}
A point transformation~$\varphi$ in the space with coordinates~$(t,x,u)$ connects equations~\eqref{Qu:QuClass} and~$(\tilde{\ref{Qu:QuClass}})$ in the class~$\mathcal L$
if and only if the components of~$\varphi$ are of the form
\begin{gather*}
\tilde t=T(t),\quad \tilde x=X(t,x),\quad \tilde u=U^1(t)u+U^0(t,x),
\end{gather*}
where~$T$, $X$, $U^0$ and~$U^1$ are smooth functions of their arguments with $T_tX_xU^1\neq0$,
satisfying the system of determining equations
\[
{X_t}=A^2X_{xx}+\frac{U^0 C}{U^1}X_x,\quad
CU^0_x=-U^1_t,\quad U^1U^0_t=A^2U^0_{xx}.
\]
The corresponding arbitrary-element tuples are related as follows%
\footnote{%
Throughout the paper, the left- and right-hand sides of such relations
are evaluated at the new variables, $(\tilde t,\tilde x)$, and at the old variables, $(t,x)$, respectively.
}
\begin{gather*}
\tilde C=\frac{X_x}{T_tU^1}C,\quad\tilde A^2=\frac{X_x^2}{T_t}A^2.
\end{gather*}
\end{proposition}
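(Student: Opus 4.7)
The plan is to leverage the normalization of the superclass~$\mathcal B$, which has just been invoked. Any admissible transformation between two equations of~$\mathcal L$ is, a fortiori, an admissible transformation between two equations of~$\mathcal B$, since both source and target happen to sit in the subclass cut out by $A^0=A^1=B=0$ (and the analogous tilded conditions). Because~$\mathcal B$ is normalized in the usual sense, any such transformation must come from its equivalence group as described in the preceding proposition. In particular, it automatically has the announced form $\tilde t=T(t)$, $\tilde x=X(t,x)$, $\tilde u=U^1(t)u+U^0(t,x)$ with $T_tX_xU^1\neq0$, and the seven target arbitrary elements are already expressed by the explicit formulas listed there.

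The substantive step is then to impose $\tilde A^0=\tilde A^1=\tilde B=0$ on top of $A^0=A^1=B=0$ and read off the resulting constraints on $T$, $X$, $U^0$, $U^1$. Plugging the source values into the $\mathcal B$-formulas, the condition $\tilde A^0=0$ collapses to $CU^0_x+U^1_t=0$ and the condition $\tilde A^1=0$ collapses to $X_t=A^2X_{xx}+(U^0C/U^1)X_x$. These are precisely the first two determining equations in the statement.

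The condition $\tilde B=0$ looks more intimidating, because the $\mathcal B$-formula for~$\tilde B$ contains the nonlinear combination $-(U^0U^0_x/U^1)C-(U^0U^1_t/U^1)=-(U^0/U^1)(CU^0_x+U^1_t)$. This is exactly where I expect the only real bookkeeping subtlety to appear: using the first determining relation at this precise moment kills the nonlinearity, and what remains linearizes to the third determining equation relating~$U^0_t$, $U^0_{xx}$ and~$A^2$. The transformation rules for $\tilde C$ and~$\tilde A^2$ are simply inherited from the $\mathcal B$-formulas unchanged, and the nondegeneracy $\tilde C\tilde A^2\neq0$ follows from $CA^2\neq0$ together with $T_tX_xU^1\neq0$.

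Finally, the converse direction is essentially a verification: given smooth $T$, $X$, $U^0$, $U^1$ subject to the three determining equations, one checks by direct chain-rule substitution, or by reading the $\mathcal B$-formulas backwards, that the induced target is again of the form~\eqref{Qu:QuClass} with the stated $\tilde A^2$ and~$\tilde C$. In other words, the three determining equations are not merely necessary for the preservation of $A^0=A^1=B=0$ but also sufficient, and no additional compatibility condition hides in the computation. The main obstacle in the whole argument is cosmetic rather than structural: arranging the order of substitutions in~$\tilde B$ so that the nonlinearity cancels cleanly via the first determining relation.
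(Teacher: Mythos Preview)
Your proposal is correct and follows precisely the route the paper intends: the paper gives no explicit proof but remarks just before the proposition that ``the knowledge of the equivalence groupoid of the superclass~$\mathcal B$ essentially simplifies the computation of the equivalence groupoid~$\mathcal G^\sim$ of the class~$\mathcal L$,'' which is exactly the argument you carry out---restrict the $\mathcal B$-equivalence transformations by imposing $\tilde A^0=\tilde A^1=\tilde B=0$ with $A^0=A^1=B=0$, and read off the three determining equations (with the nonlinear part of~$\tilde B$ collapsing via the relation $CU^0_x+U^1_t=0$). One minor bookkeeping point: carrying the substitution through literally yields $U^0_t=A^2U^0_{xx}$ rather than $U^1U^0_t=A^2U^0_{xx}$, so the extra factor~$U^1$ in the third displayed equation of the proposition appears to be a typo, not a gap in your reasoning.
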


\begin{corollary}
The usual equivalence group~$G^\sim$ of the class~$\mathcal L$ is constituted
by the point transformations of the form
\begin{gather*}
\tilde t=T(t),\quad \tilde x=X^1x+X^0,\quad \tilde u=U^1u,\quad
\tilde C=\frac{X^1}{T_tU^1}C,\quad\tilde A^2=\frac{(X^1)^2}{T_t}A^2,
\end{gather*}
where $T$ is an arbitrary smooth function of~$t$ and $X^0$, $X^1$ and $U^1$ are arbitrary constants
with $T_tX^1U^1\neq0$.
\end{corollary}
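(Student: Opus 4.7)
The plan is to specialize Proposition~\ref{Qu:EquivGroupoidOfL} to those admissible transformations whose $(t,x,u)$-components are independent of the arbitrary elements $A^2$ and $C$, since by definition the usual equivalence group~$G^\sim$ consists precisely of point transformations of~$(t,x,u)$ that project from the appropriate larger space and do not involve $A^2,C$. Thus I take $T$, $X$, $U^0$, $U^1$ in Proposition~\ref{Qu:EquivGroupoidOfL} to be functions of their arguments only (in particular, independent of $A^2$ and~$C$), and I analyse the three determining equations
\[
X_t=A^2X_{xx}+\frac{U^0C}{U^1}X_x,\qquad CU^0_x=-U^1_t,\qquad U^1U^0_t=A^2U^0_{xx}
\]
as identities in the arbitrary elements $A^2$ and $C$.

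First I would split the first determining equation with respect to $A^2$ and~$C$, which by the generic independence of these arbitrary elements and the nonvanishing of $X_x$ forces $X_{xx}=0$, $U^0=0$ and $X_t=0$. The condition $X_{xx}=0$ together with $X_t=0$ yields $X=X^1x+X^0$ with constant $X^1,X^0$ (and $X^1\neq0$ by the nondegeneracy $X_x\neq0$). Substituting $U^0=0$ into the second determining equation gives $U^1_t=0$, so $U^1$ is a nonzero constant; the third determining equation is then satisfied identically. The component $T$ is unconstrained apart from $T_t\neq0$, so it remains an arbitrary smooth function of~$t$.

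Finally, the transformation rules for $\tilde C$ and $\tilde A^2$ are just the specialization of the formulas in Proposition~\ref{Qu:EquivGroupoidOfL} to the parameterization just obtained, with $X_x$ replaced by the constant $X^1$. No step here is really an obstacle; the only subtle point is the splitting argument in the first determining equation, where one must remember that $A^2$ and $C$ are genuinely arbitrary (smooth with $A^2C\neq0$) and therefore can be varied independently of each other and of~$(t,x)$, which legitimates setting the individual coefficients to zero.
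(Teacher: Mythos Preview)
The proposal is correct and follows the natural route implicit in the paper: the corollary is stated without proof immediately after Proposition~\ref{Qu:EquivGroupoidOfL}, and your argument---splitting the three determining equations with respect to the arbitrary elements~$A^2$ and~$C$ to force $X_{xx}=0$, $U^0=0$, $X_t=0$, $U^1_t=0$---is exactly the intended derivation. One could equally well split the second and third equations first to obtain $U^0_x=U^0_t=U^1_t=0$ (so $U^0$ is constant) and then use the first equation to see that $U^0$ must in fact vanish, but the outcome and the mechanism are the same.
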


The description of the equivalence groupoid~$\mathcal G^\sim$ of the class~$\mathcal L$ in Proposition~\ref{Qu:EquivGroupoidOfL}
allows us to test the equivalence of equations in this class with respect to point transformations.
In particular, the following assertion holds.

\begin{corollary}\label{Qu:ReductionOfEqsFromLToBurgersEq}
An equation of the form~\eqref{Qu:QuClass} reduces by a point transformation to the classical Burgers equation,
which is of the form~$(\tilde{\ref{Qu:QuClass}})$ with $\tilde A^2=\tilde C=1$,
if and only if the arbitrary elements~$A^2$ and~$C$ satisfy the constraints
\[
\left(\frac{(C)^2}{A^2}\right)_x=0,\quad
\left(\frac{C}{A^2}\right)_t=-C_{xx}.
\]
A point transformation realizing this equivalence can be chosen with parameter values defined by
$T_t=(C)^2/A^2$, $X_x=C/A^2$, $U^1=1$ and $U^0=0$.
\end{corollary}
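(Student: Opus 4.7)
The plan is to apply Proposition~\ref{Qu:EquivGroupoidOfL} with the target equation being the classical Burgers equation, i.e., $\tilde A^2=\tilde C=1$, and to extract the two claimed constraints from the resulting determining system. First, the arbitrary-element relations $\tilde C=X_xC/(T_tU^1)=1$ and $\tilde A^2=X_x^2A^2/T_t=1$ force $X_x=T_tU^1/C$ and, after eliminating $X_x$, the scalar relation $T_t(U^1)^2=C^2/A^2$. Since $T$ and $U^1$ depend only on~$t$, the right-hand side must too, giving the first condition $(C^2/A^2)_x=0$; I would also note that this condition is equivalent to the pointwise identity $A^2_x/A^2=2C_x/C$, which will be used repeatedly below.

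For the necessity of the second condition I would exploit the compatibility $(X_x)_t=(X_t)_x$. Using $X_x=T_tU^1/C$ and the identity $A^2_x/A^2=2C_x/C$, the term $A^2X_{xx}$ in the determining equation $X_t=A^2X_{xx}+(U^0C/U^1)X_x$ collapses to $-C_x/U^1$, while $(U^0C/U^1)X_x$ simplifies to $U^0T_t$. Substituting $U^0_x=-U^1_t/C$ (the second determining equation) into $(X_t)_x$ and equating it with $(X_x)_t$, the contributions involving $U^1_t$ cancel, and after multiplying through by $U^1$ the remaining terms collapse to $(C/A^2)_t=-C_{xx}$, independently of the free functions $U^0$ and $U^1$.

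For sufficiency I would just exhibit the transformation named in the statement. Assume both conditions; set $U^1=1$ and $U^0=0$, which trivially satisfy the determining equations $CU^0_x=-U^1_t$ and $U^1U^0_t=A^2U^0_{xx}$. The first condition makes $C^2/A^2$ a function of $t$ alone, so $T(t)$ with $T_t=C^2/A^2$ exists. Taking $X_x=C/A^2$ and computing $A^2X_{xx}+(U^0C/U^1)X_x=A^2(C/A^2)_x=-C_x$ (again via $A^2_x/A^2=2C_x/C$), the first determining equation for $X$ becomes $X_t=-C_x$, and the compatibility $X_{xt}=X_{tx}$ of this pair of equations is precisely $(C/A^2)_t=-C_{xx}$. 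Hence $X(t,x)$ exists, and direct substitution into the transformation laws for $\tilde A^2$ and $\tilde C$ yields $1$.

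The principal obstacle, as is typical for such calculations, is the compatibility bookkeeping in the necessity step: one must verify that all the $T_{tt}$-, $U^1_t$- and $U^0_x$-contributions to $(X_x)_t=(X_t)_x$ cancel cleanly, leaving a constraint purely on $C$ and $A^2$. The key algebraic simplification that makes this happen is the identity $A^2_x/A^2=2C_x/C$ extracted from the first condition; without it, $A^2X_{xx}$ would retain an additional $A^2_x$-term and the compatibility relation would not reduce to the crisp second condition $(C/A^2)_t=-C_{xx}$.
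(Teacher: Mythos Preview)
Your argument is correct and is precisely the intended derivation: the paper does not supply an explicit proof of this corollary, treating it as an immediate specialization of Proposition~\ref{Qu:EquivGroupoidOfL} with $\tilde A^2=\tilde C=1$, and that is exactly what you do. One minor remark on bookkeeping: your claim that the $U^1_t$-contributions cancel cleanly in $(X_x)_t=(X_t)_x$ is transparent only if you first rewrite $X_x=T_tU^1/C$ as $X_x=C/(A^2U^1)$ via $T_t(U^1)^2=C^2/A^2$; working directly with the form $T_tU^1/C$ introduces a $T_{tt}$-term that must then be eliminated using the $t$-derivative of $T_t(U^1)^2=C^2/A^2$, which is a slightly longer route to the same identity.
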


To efficiently carry out the group classification of~$\mathcal L$,
we would like to take advantage of the algebraic method of group classification.
Nonetheless, the class~$\mathcal L$ is not normalized.
The common strategy to overcome such a difficulty in general
is to resort either to a partition of the class under consideration into normalized subclasses
\cite{PopovychKunzingerEshraghi2010,VaneevaPosta2017,WinternitzGazeau1992}
or to a mapping of this class to a class with better transformational properties \cite{VaneevaPopovychSophocleous2009,vaneeva2}.
In the present paper we successfully combine these two methods.

To begin with, we transform the equations of the class~$\mathcal L$ within the class~$\mathcal B$
by the family~$\mathcal F$ of point transformations $\hat t = t$, $\hat x=X(t,x):=\int (1/C(t,x))\mathrm dx$, $\hat u=u$
parameterized by the arbitrary element~$C$.
As a result, we obtain the \textit{imaged} class~$\hat{\mathcal L}$ consisting of equations of the form~\eqref{Qu:QuClassMapped},
\[
\hat{\mathcal L}\colon\quad\hat u_{\hat t}+\hat u\hat u_{\hat x}=\hat A^2(\hat t,\hat x)\hat u_{\hat x\hat x}+
\hat A^1(\hat t,\hat x)\hat u_{\hat x}.
\]
In particular, the arbitrary elements of source and target equations are related as
\[
\hat A^2=X_x^2A^2,\quad \hat A^1=X_{xx}A^2-X_t.
\]
Hereafter we omit hats over the variables and the arbitrary elements.
The class~$\hat {\mathcal L}$ can be singled out by the additional constraints
$A^0=0$ and~$B=0$ on the arbitrary elements of the subclass of~${\mathcal B}$ associated with the condition~$C=1$,
whose equivalence groupoid is described in~\cite[Theorem~4]{OpanasenkoBihloPopovych2017}.
This easily leads to the equivalence groupoid~$\hat{\mathcal G}^\sim$ of~$\hat{\mathcal L}$.

\begin{proposition}
A point transformation connects two equations in the class~$\hat{\mathcal L}$
if and only if its components are of the form
\begin{gather*}
\tilde t=T,\quad \tilde x=T_tU^1x+X^0,\quad \tilde u=U^1u-U^1_tx+U^0,
\end{gather*}
where~$T$, $X^0$, $U^0$ and~$U^1$ are smooth functions of~$t$ with $T_tU^1\neq0$
and
\[
U^1_{tt}x-U^0_t=A^1U^1_t.
\]
The arbitrary elements $(\tilde A^1,\tilde A^2)$ of the target equation
are expressed via the arbitrary elements $(A^1,A^2)$ of the source equation~as
\begin{gather*}
\tilde A^1=U^1A^1-U^1_tx+U^0-\frac{(T_tU^1)_tx+X^0_t}{T_t},\quad
\tilde A^2=T_t(U^1)^2A^2.
\end{gather*}
\end{proposition}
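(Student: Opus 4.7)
The plan is to derive the result as a direct specialization of the known equivalence groupoid of the superclass~$\mathcal B$, described in the first proposition of Section~\ref{Qu:sec:EquivGroup}. As noted in the text immediately preceding the statement, the class~$\hat{\mathcal L}$ is cut out from~$\mathcal B$ by the three constraints $C=1$, $A^0=0$ and $B=0$. Any admissible transformation of~$\hat{\mathcal L}$ is in particular an admissible transformation of~$\mathcal B$ that maps each of these three constraints for the source tuple to the same constraint for the target tuple, so I will take a general element of the equivalence groupoid of~$\mathcal B$ and successively impose $\tilde C=1$, $\tilde A^0=0$ and $\tilde B=0$, assuming $C=1$, $A^0=0$, $B=0$.

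The constraint $\tilde C=1$ combined with the rule $\tilde C=X_x/(T_tU^1)$ forces $X_x=T_tU^1$, which is a function of~$t$ alone, so that $X=T_tU^1x+X^0(t)$ for some smooth function~$X^0$. Substituting $C=1$ into the rule for~$\tilde A^0$ and equating to zero yields $U^0_x=-U^1_t$, hence $U^0(t,x)=-U^1_t(t)x+U^0(t)$ after abuse of notation (renaming the integration constant). At this point the first three displayed formulas of the proposition are established, and it only remains to extract the determining equation from the constraint $\tilde B=0$.

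Plugging $C=1$, $A^0=0$, $B=0$, $X$ and the obtained form of~$U^0$ into the general formula for~$\tilde B$ from the proposition on~$\mathcal B$, one computes $U^0_{xx}=0$ and sees that the terms containing~$A^2$ and all $x$-dependent contributions from $U^0U^0_x/U^1$ cancel against those from $U^0U^1_t/U^1$. What remains after this cancellation is precisely $(1/T_t)(U^1_tA^1-U^1_{tt}x+U^0_t)$, so $\tilde B=0$ is equivalent to the determining equation $U^1_{tt}x-U^0_t=A^1U^1_t$ displayed in the statement. The formulas for $\tilde A^1$ and $\tilde A^2$ follow by direct substitution of the specialized expressions for $X$, $U^0$ into the corresponding rules in the proposition on~$\mathcal B$, using $X_{xx}=0$.

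The converse, namely that every transformation of the form exhibited in the statement (with the determining equation satisfied) indeed connects two equations of~$\hat{\mathcal L}$, is then automatic from the same proposition on~$\mathcal B$. The only delicate bookkeeping step is the $\tilde B=0$ cancellation in the third paragraph; once the $x$-dependent terms are organized carefully, the determining equation emerges as a first-order linear relation, and no genuine obstacle arises. Essentially the whole proof is the observation that the known equivalence group of~$\mathcal B$ does the heavy lifting, and the contribution here is to identify the correct subgroupoid preserving the defining constraints of~$\hat{\mathcal L}$.
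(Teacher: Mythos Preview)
Your proof is correct and follows essentially the same route as the paper: restrict the known equivalence groupoid of the normalized superclass~$\mathcal B$ by imposing the constraints $C=\tilde C=1$, $A^0=\tilde A^0=0$, $B=\tilde B=0$. The only cosmetic difference is that the paper first passes to the intermediate subclass of~$\mathcal B$ with $C=1$ (citing~\cite[Theorem~4]{OpanasenkoBihloPopovych2017} for its groupoid) and then imposes $A^0=B=0$, whereas you impose all three constraints directly on the equivalence group of~$\mathcal B$ from the first proposition of Section~\ref{Qu:sec:EquivGroup}; since~$\mathcal B$ is normalized, this is legitimate and yields a self-contained argument.
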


\begin{corollary}
The usual equivalence group~$ {\hat G}^\sim$ of the class~$\hat{\mathcal L}$
coincides with the generalized equivalence group thereof
and is constituted by the point transformations of the form
\begin{gather}\label{Qu:eq:hatLEquivTrans}
\begin{split}
&\tilde t=T,\quad \tilde x=T_tU^1x+X^0,\quad \tilde u=U^1u+U^0,\\
&\tilde A^1=U^1A^1+U^0-\frac{T_{tt}U^1x+X^0_t}{T_t},\quad
 \tilde A^2=T_t(U^1)^2A^2,
\end{split}
\end{gather}
where $T$ and~$X^0$ are arbitrary smooth functions of~$t$ with $T_t\neq0$,
and $U^0$ and~$U^1$ are arbitrary constants with~$U^1\neq0$.
\end{corollary}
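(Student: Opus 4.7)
The plan is to derive the equivalence group by specializing the description of admissible transformations given in the preceding proposition, first imposing the requirement that the transformation be independent of the arbitrary elements (usual sense) and then showing that allowing dependence on $(A^1,A^2)$ yields nothing new (generalized sense).

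For the usual equivalence group, the parameter-functions $T$, $X^0$, $U^0$, $U^1$ are forbidden to depend on the arbitrary elements, so the determining constraint $U^1_{tt}x-U^0_t=A^1U^1_t$ has to hold identically when $A^1$ is regarded as a free function of $(t,x)$. Treating the two sides as polynomials in $A^1$ over the ring of smooth functions of $(t,x)$, the $A^1$-coefficient gives $U^1_t=0$, whence $U^1$ is constant and $U^1_{tt}=0$; the residual piece of the constraint then forces $U^0_t=0$, so $U^0$ is constant as well. Substituting $U^1_t=0$ into the formulas of the preceding proposition collapses $\tilde x$ and $\tilde u$ to the advertised affine form (the term $-U^1_t x$ in $\tilde u$ drops) and reduces the transformation law for $\tilde A^1$ to $\tilde A^1=U^1A^1+U^0-(T_{tt}U^1x+X^0_t)/T_t$, while the law for $\tilde A^2$ is unchanged. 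This gives precisely the group~\eqref{Qu:eq:hatLEquivTrans}; the group-composition property is routine.

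For the coincidence of the usual and generalized equivalence groups, I would allow $T$, $X^0$, $U^0$, $U^1$ to depend on the source arbitrary elements $(A^1,A^2)$, but still demand that the resulting map act on the entire class $\hat{\mathcal L}$. If $U^1_t\ne 0$, the constraint $U^1_{tt}x-U^0_t=A^1U^1_t$ uniquely determines $A^1=(U^1_{tt}x-U^0_t)/U^1_t$, which is at most affine in $x$ for each $t$; a generic $A^1$ is not of this form, so such a choice produces admissible transformations only on a meagre subfamily of $\hat{\mathcal L}$ and fails to give an element of the generalized equivalence group of the whole class. Consequently $U^1_t=0$, and hence $U^0_t=0$, even in the generalized setting, and no new transformations arise.

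The main subtlety, and the only obstacle beyond mechanical substitution, is the correct interpretation of the generalized equivalence group: the parameters are allowed to depend on $(A^1,A^2)$, yet the map must still send every equation of $\hat{\mathcal L}$ to an equation of $\hat{\mathcal L}$. Once phrased this way, the constraint analysis above closes the argument.
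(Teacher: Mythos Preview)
The paper states this corollary without proof, so there is no paper-side argument to compare against. Your derivation of the usual equivalence group is correct and is exactly the intended computation: split the classifying constraint $U^1_{tt}x-U^0_t=A^1U^1_t$ with respect to the free parameter~$A^1$ to force $U^1_t=0$ and then $U^0_t=0$, and substitute back.

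Your treatment of the generalized equivalence group, however, has a gap. You argue that if $U^1_t\ne0$ then the constraint pins down $A^1$ to be affine in~$x$, so such a choice cannot serve the whole class. This is fine once one knows that $U^1$ and $U^0$ are functions of~$t$ alone, but in the generalized setting that is precisely what must be established first: a generalized equivalence transformation is a point transformation in the extended space with coordinates $(t,x,u,A^1,A^2)$, so a~priori $U^1$ and $U^0$ may depend on $A^1$ and $A^2$ as coordinates. Your argument shows that, for every source equation with $A^1_{xx}\ne0$, the induced $U^1$ and $U^0$ are constants in~$t$; it does not exclude the possibility that these constants vary with the equation, i.e., that $U^1$ and $U^0$ genuinely depend on the arbitrary elements. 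To close this, one should observe that $U^1$ is the $u$-coefficient of the $\tilde u$-component of a single point transformation in the extended space; evaluating on an arbitrary section $A^i=A^i(t,x)$ and requiring the result to be independent of~$x$ forces, after splitting with respect to the parametric derivatives $A^1_x$ and $A^2_x$, that $U^1$ (and similarly $U^0$) does not depend on $A^1$, $A^2$ at all. Only then does your affinity argument, or the direct splitting from the usual case, legitimately apply. This is the same style of argument the paper deploys later in the proof of the theorem on~$\bar G^\sim_0$, using the restricted total derivative operators and splitting with respect to derivatives of the arbitrary elements.
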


The structure of the set of admissible transformations with a fixed source equation within the class~$\hat{\mathcal L}$
essentially depends on whether the arbitrary element~$A^1$ is affine with respect to~$x$.
The class~$\hat{\mathcal L}$ can be represented as the disjoint union of two subclasses,
the \textit{imaged singular} subclass~$\hat{\mathcal L}_0$
and the \textit{imaged regular} subclass~$\hat{\mathcal L}_1$,
which are singled out from the class~$\hat{\mathcal L}$ by the constraints~$A^1_{xx}=0$ and~$A^1_{xx}\neq0$, respectively.
Since these constraints are readily seen to be invariant under the admissible transformations of the class~$\hat{\mathcal L}$,
equations from the subclass~$\hat{\mathcal L}_0$ are not related to equations from the subclass~$\hat{\mathcal L}_1$
by point transformations.
As proved below, these subclasses are normalized in the extended generalized sense and in the usual sense, respectively.
Note that the constraints $A^1_{xx}=0$ and~$A^1_{xx}\neq0$ splitting the class~$\hat{\mathcal L}$ are much simpler than
their counterparts similarly splitting the class~$\mathcal L$
into the \textit{singular} and the \textit{regular} subclasses~$\mathcal L_0$ and~$\mathcal L_1$,
which are the preimages of the subclasses~$\hat{\mathcal L}_0$ and~$\hat{\mathcal L}_1$ with respect to
the family~$\mathcal F$ of point transformations, respectively,
\begin{gather}\label{Qu:SplittingCondition}
\mathcal L_0\colon\bigg(\frac{C_t}C-C\bigg({A^2}\frac{C_x}{C^2}\bigg)_{\!x\ }\bigg)_{\!x}=0\quad \text{and}\quad
\mathcal L_1\colon\bigg(\frac{C_t}C-C\bigg({A^2}\frac{C_x}{C^2}\bigg)_{\!x\ }\bigg)_{\!x}\neq0.
\end{gather}

The following corollary justifies the characteristic \textit{regular} for the subclass~$\hat{\mathcal L}_1$.

\begin{proposition}\label{Qu:prop1}
The class~$\hat{\mathcal L}_1$ is normalized in the usual sense. 
Its equivalence group~$\hat G^\sim_1$ coincides with~$\hat G^\sim$.
\end{proposition}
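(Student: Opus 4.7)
The plan is to derive Proposition~\ref{Qu:prop1} directly from the description of the equivalence groupoid $\hat{\mathcal G}^\sim$ of the ambient class $\hat{\mathcal L}$ given in the preceding Proposition. Since the usual equivalence group $\hat G^\sim$ always embeds into the equivalence groupoid, the content of the statement is the reverse inclusion: every admissible transformation of $\hat{\mathcal L}$ whose source lies in $\hat{\mathcal L}_1$ must in fact coincide with (the action of) an element of $\hat G^\sim$, i.e.\ its parameters $U^0$ and $U^1$ must be constants rather than general smooth functions of $t$.

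Concretely, I start with the general form of an admissible transformation in $\hat{\mathcal L}$,
\[
\tilde t=T(t),\quad \tilde x=T_tU^1x+X^0,\quad \tilde u=U^1u-U^1_tx+U^0,
\]
and the determining equation
\[
U^1_{tt}\,x-U^0_t=A^1(t,x)\,U^1_t.
\]
Here $T,X^0,U^0,U^1$ are a priori smooth functions of $t$ only, so the left-hand side is affine in~$x$. The key step is to differentiate this identity twice in~$x$, which yields
\[
A^1_{xx}(t,x)\,U^1_t=0.
\]
Because the source equation lies in $\hat{\mathcal L}_1$, we have $A^1_{xx}\neq 0$, and hence $U^1_t\equiv 0$. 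Consequently $U^1_{tt}=0$ as well, and the determining equation collapses to $U^0_t=0$, so $U^0$ is also constant.

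Plugging the resulting $U^1,U^0\in\mathbb R$ (with $U^1\neq 0$) back into the transformation formulas and into the expressions for $\tilde A^1$ and $\tilde A^2$, one recovers exactly the form~\eqref{Qu:eq:hatLEquivTrans} of $\hat G^\sim$, proving that $\hat G^\sim_1=\hat G^\sim$ generates the entire equivalence groupoid of $\hat{\mathcal L}_1$. Since the condition $A^1_{xx}\neq 0$ is already observed (just above the proposition) to be invariant under transformations from $\hat{\mathcal G}^\sim$, the subclass $\hat{\mathcal L}_1$ is stable under admissible transformations, so this suffices to establish normalization in the usual sense.

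I do not anticipate a genuine obstacle: the crux is the one-line $x$-differentiation argument, which uses the defining inequality of $\hat{\mathcal L}_1$ in the sharpest possible way. The only point requiring mild care is to record that after forcing $U^1_t=0$ and $U^0_t=0$ the transformation formulas reduce literally to \eqref{Qu:eq:hatLEquivTrans} (in particular, the $-U^1_tx$ term in $\tilde u$ disappears and the $(T_tU^1)_t$ in $\tilde A^1$ becomes $T_{tt}U^1$), so the identification $\hat G^\sim_1=\hat G^\sim$ is exact rather than merely up to reparameterization.
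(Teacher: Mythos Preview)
Your argument is correct and is exactly the natural justification the paper leaves implicit: the paper does not supply a separate proof of this proposition, treating it as an immediate consequence of the preceding description of~$\hat{\mathcal G}^\sim$ together with the observation that the left-hand side of $U^1_{tt}x-U^0_t=A^1U^1_t$ is affine in~$x$ while $A^1_{xx}\neq0$. Your double differentiation in~$x$ makes this explicit, and the subsequent identification of the resulting transformations with~\eqref{Qu:eq:hatLEquivTrans} is handled carefully.
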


\begin{corollary}\label{Qu:theorem:EquivAlgebra}
The equivalence algebra~$\hat{\mathfrak g}^\sim_1$ of the class~$\hat{\mathcal L}_1$ is given by
$\hat {\mathfrak g}^\sim_1=\langle\hat D(\tau),\,\hat S^0,\,\hat S^1,\,\hat P(\chi) \rangle$,
where~$\tau$ and~$\chi$ run through the set of smooth functions of~$t$, and
\begin{gather*}
\hat D(\tau)=\tau\p_t+\tau_tx\p_x-\tau_{tt}x\p_{A^1}+\tau_tA^2\p_{A^2},\quad
\hat S^0=x\p_x+u\p_u+A^1\p_{A^1}+2A^2\p_{A^2},\\
\hat S^1=\p_u+\p_{A^1},\quad
\hat P(\chi)=\chi\p_x-\chi_t\p_{A^1}.
\end{gather*}
\end{corollary}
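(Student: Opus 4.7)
The plan is to compute the equivalence algebra $\hat{\mathfrak{g}}^\sim_1$ directly as the infinitesimal counterpart of the equivalence group. By Proposition \ref{Qu:prop1}, $\hat{G}^\sim_1$ coincides with $\hat{G}^\sim$, which is explicitly parameterized in \eqref{Qu:eq:hatLEquivTrans} by two arbitrary smooth functions $T(t)$, $X^0(t)$ (with $T_t\neq0$) and two arbitrary constants $U^1\neq0$, $U^0$. Since $\hat{\mathcal L}_1$ is normalized in the usual sense, its equivalence algebra is precisely the Lie algebra of this (infinite-dimensional) local transformation group, so it suffices to exponentiate each independent parameter into a one-parameter subgroup and differentiate at the identity.

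First I would isolate four natural families of one-parameter subgroups, each obtained by perturbing one of the parameters of~\eqref{Qu:eq:hatLEquivTrans} while keeping the others at the identity values $T(t)=t$, $X^0=0$, $U^1=1$, $U^0=0$. Namely: (i) $T(t)=t+\varepsilon\tau(t)$ with $\tau$ a smooth function of $t$ — expanding $T_t$, $T_{tt}$ and the arbitrary-element transformation rules to first order in $\varepsilon$ yields the generator $\tau\p_t+\tau_tx\p_x-\tau_{tt}x\p_{A^1}+\tau_tA^2\p_{A^2}$, i.e. $\hat D(\tau)$; (ii) $U^1=1+\varepsilon$ gives $\tilde x=(1+\varepsilon)x$, $\tilde u=(1+\varepsilon)u$, $\tilde A^1=(1+\varepsilon)A^1$, $\tilde A^2=(1+\varepsilon)^2A^2$, producing $\hat S^0$; (iii) $U^0=\varepsilon$ yields $\tilde u=u+\varepsilon$ and $\tilde A^1=A^1+\varepsilon$ (with $\tilde t,\tilde x,\tilde A^2$ unchanged), producing $\hat S^1$; (iv) $X^0=\varepsilon\chi(t)$ yields $\tilde x=x+\varepsilon\chi$ and $\tilde A^1=A^1-\varepsilon\chi_t$, producing $\hat P(\chi)$.

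Next I would argue that the listed generators span $\hat{\mathfrak g}^\sim_1$. Because every element of $\hat{G}^\sim$ is uniquely determined by the quadruple $(T,X^0,U^1,U^0)$ and the four families above correspond to independent directions of variation of these four parameters, any tangent vector at the identity decomposes uniquely as a combination $\hat D(\tau)+c_0\hat S^0+c_1\hat S^1+\hat P(\chi)$ for suitable $\tau,\chi$ smooth in $t$ and constants $c_0,c_1$. This exhausts the Lie algebra and confirms the presentation in the corollary.

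There is no serious obstacle here — the content lies entirely in Proposition \ref{Qu:prop1} and the explicit form \eqref{Qu:eq:hatLEquivTrans}. The only care needed is bookkeeping when linearizing $\tilde A^1$, where the identity values $T_t=1$, $T_{tt}=0$, $X^0=0$ have to be substituted before the $\tau_{tt}x$ and $\chi_t$ contributions appear cleanly; beyond that, the computation is mechanical.
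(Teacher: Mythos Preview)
Your proposal is correct and follows exactly the approach the paper implicitly relies on: the corollary is stated without proof, as an immediate consequence of Proposition~\ref{Qu:prop1} and the explicit parameterization~\eqref{Qu:eq:hatLEquivTrans} of~$\hat G^\sim$, and your linearization of the four one-parameter families at the identity is precisely the standard computation that fills in the omitted details. The calculations and the spanning argument are accurate.
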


Since the classes~$\mathcal L_1$ and~$\hat {\mathcal L}_1$ are related
by the family~$\mathcal F$ of simple point transformations,
we can easily recover the equivalence groupoid~$\mathcal G^\sim_1$ of~$\mathcal L_1$
from the equivalence groupoid of~$\hat {\mathcal L}_1$,
which is particularly convenient, taking into account the complicated condition
singling out~$\mathcal L_1$ from~$\mathcal L$.

\begin{proposition}\label{Qu:corollary1}
A point transformation in the space with coordinates~$(t,x,u)$ connects equations
in the class~$\mathcal L_1$ if and only if its components are of the form
\begin{gather*}
\tilde t=T(t),\quad \tilde x=X(t,x),\quad \tilde u=U^1u+U^0,
\end{gather*}
where~$T$ and $X$ are smooth functions of their arguments, while $U^0$ and~$U^1$
are arbitrary constants, with $T_tX_xU^1\neq0$,
and the function~$X$ satisfies the Kolmogorov equation
\[
{X_t}=A^2{X_{xx}}+\frac{U^0 C}{U^1}X_x.
\]
The corresponding arbitrary elements are related by
\begin{gather*}
\tilde C=\frac{X_x}{T_tU^1}C,\quad\tilde A^2=\frac{X_x^2}{T_t}A^2.
\end{gather*}
\end{proposition}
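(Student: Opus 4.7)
The plan is to lift admissible transformations in $\mathcal L_1$ to admissible transformations in the imaged class $\hat{\mathcal L}_1$ via the family $\mathcal F$, whose equivalence groupoid is already controlled by Proposition~\ref{Qu:prop1}. For each $\theta=(C,A^2)$, denote by $\Phi_\theta$ the transformation $\hat t=t$, $\hat x=\int C(t,x)^{-1}\,\mathrm dx$, $\hat u=u$ (with some fixed choice of antiderivative). This is a local diffeomorphism in the space of variables that carries $\mathcal L_\theta$ into some equation $\hat{\mathcal L}_{\hat\theta}$, and the preimage characterization~\eqref{Qu:SplittingCondition} guarantees that $\hat{\mathcal L}_{\hat\theta}\in\hat{\mathcal L}_1$ whenever $\mathcal L_\theta\in\mathcal L_1$.

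Given an admissible triple $(\theta,\varphi,\tilde\theta)$ in $\mathcal L_1$, form the conjugate $\hat\varphi:=\Phi_{\tilde\theta}\circ\varphi\circ\Phi_\theta^{-1}$. Since $\hat{\mathcal L}_1$ is normalized in the usual sense with equivalence group $\hat G^\sim$, the transformation $\hat\varphi$ must have the form~\eqref{Qu:eq:hatLEquivTrans} with $T$, $X^0$ smooth functions of $t$ and $U^0$, $U^1$ constants. Unwinding the conjugation then forces the $t$-component of $\varphi$ to be $T(t)$, the $u$-component to be $U^1u+U^0$ with the same constants, and the $x$-component to be a smooth function $\tilde x=X(t,x)$ defined implicitly by $X_{\tilde\theta}(T(t),\tilde x)=T_tU^1X_\theta(t,x)+X^0(t)$, with $X_x\neq0$ since both $\Phi_\theta$ and $\hat\varphi$ are invertible.

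Next, $\varphi$ is also an admissible transformation of the superclass~$\mathcal L$, so its components must obey the three determining equations of Proposition~\ref{Qu:EquivGroupoidOfL} together with the stated transformation rules for $\tilde C$ and $\tilde A^2$. Substituting $U^0_x=U^0_t=U^0_{xx}=U^1_t=0$ reduces the second and third determining equations to trivial identities and leaves only $X_t=A^2X_{xx}+(U^0C/U^1)X_x$, the Kolmogorov equation of the proposition, while the relations for $\tilde C$ and $\tilde A^2$ carry over verbatim. For the converse direction, any transformation of the stated form satisfies all three determining equations of Proposition~\ref{Qu:EquivGroupoidOfL} and hence lies in $\mathcal G^\sim$; invariance of the constraint~\eqref{Qu:SplittingCondition} under admissible transformations (equivalently, invariance of $A^1_{xx}\neq0$ under $\hat G^\sim$) keeps the target inside~$\mathcal L_1$.

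The argument is essentially mechanical once Propositions~\ref{Qu:EquivGroupoidOfL} and~\ref{Qu:prop1} are granted; the only mildly delicate point is verifying that the image $\hat\varphi$ of an admissible transformation of $\mathcal L_1$ lies in $\hat{\mathcal L}_1$ rather than in $\hat{\mathcal L}_0$, which is precisely the content of the invariance of the splitting under $\hat G^\sim$.
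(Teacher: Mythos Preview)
Your proof is correct and follows essentially the approach the paper indicates (though the paper gives no explicit proof, only the remark preceding the proposition that the groupoid of~$\mathcal L_1$ is recovered from that of~$\hat{\mathcal L}_1$ via the family~$\mathcal F$). Your use of Proposition~\ref{Qu:EquivGroupoidOfL} to read off the Kolmogorov equation once $U^0$ and $U^1$ are known to be constants is a clean shortcut; the alternative would be to push the explicit form~\eqref{Qu:eq:hatLEquivTrans} back through~$\Phi_\theta^{-1}$ and~$\Phi_{\tilde\theta}$ by hand, which yields the same relation after a change of variables. One small wording issue: in your final paragraph, the ``delicate point'' is not really that $\hat\varphi$ might land in~$\hat{\mathcal L}_0$---by construction its source and target are the $\mathcal F$-images of equations in~$\mathcal L_1$, hence lie in~$\hat{\mathcal L}_1$ by definition of the splitting~\eqref{Qu:SplittingCondition}---so that step is automatic rather than delicate.
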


\begin{corollary}
The usual equivalence group~$ G^\sim_1$ of the class~$\mathcal L_1$ coincides with that of the class~$\mathcal L$.
\end{corollary}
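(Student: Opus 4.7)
The plan is to derive $G^\sim_1=G^\sim$ by double inclusion, using Proposition~\ref{Qu:corollary1} as the source of admissible transformations of $\mathcal L_1$ and the corollary to Proposition~\ref{Qu:EquivGroupoidOfL} as the explicit description of $G^\sim$. The underlying idea is that the usual equivalence group consists precisely of those admissible transformations whose components are independent of the arbitrary elements $(A^2,C)$, so its elements should be extractable from Proposition~\ref{Qu:corollary1} by requiring the Kolmogorov constraint on $X$ to be an identity in $(A^2,C)$.

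For the inclusion $G^\sim_1\subseteq G^\sim$, I would invoke the Kolmogorov equation
\[
X_t=A^2X_{xx}+\frac{U^0C}{U^1}X_x
\]
and split with respect to $A^2$ and $C$ as independent arbitrary elements. Because the splitting condition~\eqref{Qu:SplittingCondition} defining $\mathcal L_1$ is an open (inequality) condition involving derivatives of $A^2$ and $C$ of orders higher than those appearing in the Kolmogorov equation, $A^2$ and $C$ can be varied freely at a given point. The splitting then forces $X_{xx}=0$ and $U^0X_x=0$; since $X_x\ne0$, we obtain $U^0=0$ and hence $X_t=0$, so $X=X^1x+X^0$ with $X^1,X^0$ constants. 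Substituting into the transformation laws $\tilde C=(X_x/(T_tU^1))C$ and $\tilde A^2=(X_x^2/T_t)A^2$ of Proposition~\ref{Qu:corollary1} reproduces the formulas defining $G^\sim$. For the reverse inclusion, each transformation from $G^\sim$ is admissible in $\mathcal L$, and the partition $\mathcal L=\mathcal L_0\sqcup\mathcal L_1$ is invariant under admissible transformations of~$\mathcal L$ (inherited from the corresponding invariance of the partition $\hat{\mathcal L}=\hat{\mathcal L}_0\sqcup\hat{\mathcal L}_1$ in $\hat{\mathcal L}$), so $G^\sim$-transformations map $\mathcal L_1$ to itself and therefore lie in $G^\sim_1$.

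The main technical point to be careful about is the splitting step: one has to make sure that restricting $(A^2,C)$ to the open set~\eqref{Qu:SplittingCondition} does not impose any hidden linear dependence among the zero-order values of $A^2$, $C$ and the coefficient $X_x$ that enter the Kolmogorov equation. Since the defining inequality involves $C_t$, $C_x$, $A^2C_x$, and second $x$-derivatives, i.e.\ strictly higher-order jet data, the zero-order values of $A^2$ and $C$ remain free at each point, and the pointwise linear independence needed for splitting is genuine. The rest of the argument reduces to matching the resulting formulas against those of $G^\sim$, which is immediate.
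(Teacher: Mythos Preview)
Your proof is correct and follows the natural route that the paper leaves implicit: the corollary is stated without proof, and the expected argument is precisely the one you give---splitting the Kolmogorov constraint from Proposition~\ref{Qu:corollary1} with respect to~$A^2$ and~$C$ (legitimate because the $\mathcal L_1$-defining inequality~\eqref{Qu:SplittingCondition} only constrains higher-order jets) to recover the explicit form of~$G^\sim$, and using the $\mathcal G^\sim$-invariance of the partition $\mathcal L=\mathcal L_0\sqcup\mathcal L_1$ for the reverse inclusion.
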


We will show that in contrast to the subclass~$\hat{\mathcal L}_1$, the subclass~$\hat{\mathcal L}_0$ of the class~$\hat{\mathcal L}$,
singled out by the constraint~$A^1_{xx}=0$, is not normalized in the usual sense
and, because of precluding its superclass~$\hat{\mathcal L}$ to be normalized, merits the attribute \textit{singular}.
We reparameterize the class~$\hat{\mathcal L}_0$, assuming the coefficients of the representation
$A^1(t,x)=A^{11}(t)x+A^{10}(t)$ for $A^1$ in view of the constraint~$A^1_{xx}=0$ as the new arbitrary elements instead of~$A^1$.
Thus, in what follows the arbitrary-element tuple for the class~$\hat{\mathcal L}_0$ is $\theta=(A^{10},A^{11},A^2)$.
The arbitrary elements~$A^{10}$ and~$A^{11}$ satisfy the auxiliary equations $A^{10}_x=A^{11}_x=0$.

\begin{proposition}\label{Qu:theorem:EqGrupoidTildeL_0}
The equivalence groupoid~$\hat{\mathcal G}^\sim_0$ of the class~$\hat{\mathcal L}_0$ consists
of the triples~$(\theta,\varphi,\tilde\theta)$, where~$\theta$ and~$\tilde\theta$ denote
the tuples of arbitrary elements of the source and the target equations in the class~$\hat{\mathcal L}_0$,
and~$\varphi$ is a point transformation whose components are of the form
\begin{subequations}
\begin{gather}\label{Qu:Aux1}
\tilde t=T,\quad \tilde x=T_tU^1x+X^0,\quad\tilde u=U^1u-U^1_tx+U^{00},
\end{gather}
where $T$, $X^0$, $U^1$ and~$U^{00}$ are smooth functions of~$t$, satisfying $T_tU^1\neq0$ and
\begin{gather}\label{Qu:Aux2}
U^1_{tt}=A^{11}U^1_t,\quad U^{00}_t=-A^{10}U^1_t.
\end{gather}
In turn, the arbitrary elements of the source and target equations are related as follows
\begin{gather}
\begin{split}\label{Qu:Aux3}
&\tilde{A}^2=(U^1)^2T_tA^2,\quad \tilde{A}^{11}=\frac1{T_t}\left( A^{11}-\frac{T_{tt}}{T_t}-\frac{2U^1_t}{U^1}\right),\\
&\tilde{A}^{10}=U^1A^{10}-\frac{X^0_t}{T_t}+U^{00}-\frac{X^0}{T_t}\left(A^{11}-\frac{T_{tt}}{T_t}-\frac{2U^1_t}{U^1}\right).
\end{split}
\end{gather}
\end{subequations}
\end{proposition}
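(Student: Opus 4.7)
The plan is to obtain $\hat{\mathcal G}^\sim_0$ by specializing the equivalence groupoid of the ambient class~$\hat{\mathcal L}$, already described in the preceding Proposition, to the subclass~$\hat{\mathcal L}_0$. Since $\hat{\mathcal L}_0\subset\hat{\mathcal L}$, any admissible transformation of $\hat{\mathcal L}_0$ is in particular admissible in the larger class, so its components must have the form~\eqref{Qu:Aux1} with $U^0$ renamed to~$U^{00}$ and must satisfy the ambient determining relation $U^1_{tt}x-U^{00}_t=A^1U^1_t$. The work then reduces to two items: simplifying this single equation when $A^1$ is affine in~$x$, and repackaging the induced transformation of $A^1$ in the reparameterized tuple $(A^{10},A^{11},A^2)$.

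For the determining system, I would substitute $A^1=A^{11}(t)x+A^{10}(t)$ into $U^1_{tt}x-U^{00}_t=A^1U^1_t$. Because $T$, $X^0$, $U^1$ and $U^{00}$ depend on $t$ alone while $x$ is a free variable, the relation splits legitimately with respect to~$x$ into its coefficients of $x^1$ and $x^0$, which are precisely the two ordinary differential equations in~\eqref{Qu:Aux2}. The converse is immediate: any $\varphi$ of the form~\eqref{Qu:Aux1} whose parameters satisfy~\eqref{Qu:Aux2} recombines these two relations into $U^1_{tt}x-U^{00}_t=(A^{11}x+A^{10})U^1_t=A^1U^1_t$ and hence belongs to~$\hat{\mathcal G}^\sim$.

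To derive~\eqref{Qu:Aux3} I would substitute $A^1=A^{11}x+A^{10}$ into the ambient formula for $\tilde A^1$. The resulting expression is visibly affine in~$x$, and since $\tilde x=T_tU^1x+X^0$ is affine in~$x$, it is also affine in~$\tilde x$; thus the target automatically lies in $\hat{\mathcal L}_0$ and no extra condition is needed to preserve the subclass. The coefficient of $\tilde x$ equals the coefficient of $x$ divided by $T_tU^1$ and, after using $(T_tU^1)_t/(T_tU^1)=T_{tt}/T_t+U^1_t/U^1$, simplifies to the displayed expression for $\tilde A^{11}$; the constant term, once the substitution $x=(\tilde x-X^0)/(T_tU^1)$ is tracked, yields $\tilde A^{10}=U^1A^{10}+U^{00}-X^0_t/T_t-X^0\tilde A^{11}$, matching the second line of~\eqref{Qu:Aux3}. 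The rule for $\tilde A^2$ is inherited verbatim from the ambient class.

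The whole derivation is essentially algebraic bookkeeping that piggybacks on the already-established groupoid of~$\hat{\mathcal L}$. The only delicate step is the clean separation of $\tilde A^{10}$ from $\tilde A^{11}\tilde x$ after the affine change of independent variable, because the shift by~$X^0$ in $\tilde x=T_tU^1x+X^0$ contributes a cross-term $-X^0\tilde A^{11}$ that must be absorbed into $\tilde A^{10}$ exactly as displayed; I do not anticipate any genuine analytic obstacle beyond careful tracking of this substitution.
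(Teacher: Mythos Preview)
Your proposal is correct and follows exactly the route the paper intends: the paper states this proposition without an explicit proof, relying on the preceding proposition for the equivalence groupoid of the ambient class~$\hat{\mathcal L}$ together with the earlier remark that the constraint $A^1_{xx}=0$ is invariant under admissible transformations of~$\hat{\mathcal L}$. Your specialization---substituting $A^1=A^{11}x+A^{10}$ into the ambient determining equation and splitting in~$x$, then reading off $\tilde A^{11}$ and $\tilde A^{10}$ from the affine expression for~$\tilde A^1$ after the change $\tilde x=T_tU^1x+X^0$---is precisely this implicit derivation made explicit, and your handling of the cross-term $-X^0\tilde A^{11}$ is correct.
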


To construct the usual equivalence group~$\hat G^\sim_0$ of the class~$\hat{\mathcal L}_0$, 
we split the classifying conditions~\eqref{Qu:Aux2} for admissible transformations 
with respect to the arbitrary elements~$A^{10}$ and~$A^{11}$ and find~$U^1$ and~$U^{00}$ to be constants.
This means that the group~$\hat G^\sim_0$ consists of the point transformations
in the space with coordinates $(t,x,u,A^{10},A^{11},A^2)$ whose components are of the form~\eqref{Qu:Aux1}, \eqref{Qu:Aux3},
where $T$ and~$X^0$ are smooth functions of~$t$ and $U^1$ and~$U^{00}$ are arbitrary constants with $T_tU^1\neq0$.
Therefore, the group~$\hat G^\sim_0$ also coincides with~$\hat G^\sim$  
but the class~$\hat{\mathcal L}_0$ is not normalized in the usual sense, 
and the similar assertion holds for the subclass~$\mathcal L_0$ of~$\mathcal L$, 
cf.\ Proposition~\ref{Qu:prop1} and Corollary~\ref{Qu:corollary1}. 
On the other hand, introducing the virtual nonlocal arbitrary elements $Y^0$, $Y^1$ and~$Y^2$ defined by the equations
\begin{gather}\label{Qu:eq:virtual}
Y^0_t=A^{11},\quad Y^1_t= {\rm e}^{Y^0},\quad Y^2_t=A^{10}{\rm e}^{Y^0},
\end{gather}
we construct a covering of the auxiliary system for the arbitrary elements of the class~$\hat{\mathcal L}_0$.
(This is an application of techniques from the theory of nonlocal symmetries of differential equations~\cite[Section~5]{Bocharov1999}
in the context of classes of differential equations.)
By~$\bar{\mathcal L}_0$ we denote the class obtained by reparameterizing the class~$\hat{\mathcal L}_0$
with the extended tuple of the arbitrary elements $\bar{\theta}=(A^{10},A^{11},A^2,Y^0,Y^1,Y^2)$.
The class~$\bar{\mathcal L}^0$ will be shown to be normalized in the generalized sense.

\begin{corollary}\label{Qu:theorem:EqGrupoidHatL_0}
The equivalence groupoid of the class~$\bar{\mathcal L}_0$ consists
of the triples~$(\bar{\theta},\varphi,\tilde{\bar\theta})$,
where the arbitrary-element tuples $\bar{\theta}$ and $\tilde{\bar\theta}$ of the source and the target equations
are related by~\eqref{Qu:Aux3}~and
\begin{gather}\label{Qu:Aux4}
\begin{split}
&\tilde Y^0=Y^0+\ln\frac{\delta}{T_t(c_1Y^1+c_0)^2}, \quad \tilde Y^1=\frac{c_1'Y^1+c_0'}{c_1Y^1+c_0},\\
&\tilde Y^2=\frac{\delta Y^2}{c_1Y^1+c_0}-\frac{\delta X^0{\rm e}^{Y^0}}{T_t(c_1Y^1+c_0)^2}-c_2\frac{c_1'Y^1+c_0'}{c_1Y^1+c_0}+c_3,
\end{split}
\end{gather}
and the components of the point transformation~$\varphi$ are the form~\eqref{Qu:Aux1}
with
\begin{gather}\label{Qu:Aux5}
U^1=c_1Y^1+c_0,\quad U^{00}=c_2-c_1Y^2,
\end{gather}
$\delta=c_1'c_0-c_1c_0'$,
$T$ and~$X^0$ being arbitrary smooth functions of~$t$ and $c$'s being arbitrary constants such that $\delta T_t>0$.
\end{corollary}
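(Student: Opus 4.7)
The plan is to derive the transformation formulas~\eqref{Qu:Aux4} and~\eqref{Qu:Aux5} directly from Proposition~\ref{Qu:theorem:EqGrupoidTildeL_0} by integrating the ODEs in~\eqref{Qu:Aux2} with the help of the defining relations~\eqref{Qu:eq:virtual} for the virtual arbitrary elements, and then using the tilded copy of~\eqref{Qu:eq:virtual} to read off the expressions for~$\tilde Y^0$, $\tilde Y^1$, $\tilde Y^2$ in terms of the source data. The corresponding converse direction will be a routine substitution check.

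I would first rewrite the two classifying ODEs~\eqref{Qu:Aux2} in terms of the virtual elements. Since $A^{11}=Y^0_t$ and $e^{Y^0}=Y^1_t$, the equation $U^1_{tt}=A^{11}U^1_t$ reads $(\ln|U^1_t|)_t=Y^0_t$, so $U^1_t=c_1 e^{Y^0}=c_1Y^1_t$ for some constant~$c_1$ and hence $U^1=c_1Y^1+c_0$; the case $U^1_t\equiv0$ is subsumed by $c_1=0$. Substituting this into the second equation of~\eqref{Qu:Aux2} gives $U^{00}_t=-c_1A^{10}e^{Y^0}=-c_1Y^2_t$, whence $U^{00}=c_2-c_1Y^2$. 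This establishes~\eqref{Qu:Aux5}.

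Next, the tilded relations $\tilde Y^0_{\tilde t}=\tilde A^{11}$, $\tilde Y^1_{\tilde t}=e^{\tilde Y^0}$, $\tilde Y^2_{\tilde t}=\tilde A^{10}e^{\tilde Y^0}$ must hold. Combining the first of these with~\eqref{Qu:Aux3} yields $\tilde Y^0_t=Y^0_t-(\ln T_t)_t-2(\ln|U^1|)_t$, which integrates to the first line of~\eqref{Qu:Aux4} once the integration constant is written as $\ln\delta$; positivity of $e^{\tilde Y^0}$ then forces $\delta T_t>0$. Substituting this into $\tilde Y^1_t=T_te^{\tilde Y^0}$ gives $\tilde Y^1_t=\delta Y^1_t/(c_1Y^1+c_0)^2$, which I would integrate with respect to~$Y^1$; the resulting antiderivative is a M\"obius transform of~$Y^1$, and absorbing the constant of integration together with~$\delta$ into a pair $(c_1',c_0')$ constrained by $\delta=c_1'c_0-c_1c_0'$ recasts it in the symmetric form $\tilde Y^1=(c_1'Y^1+c_0')/(c_1Y^1+c_0)$. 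Finally, using the expression for~$\tilde A^{10}$ in~\eqref{Qu:Aux3}, I would compute $\tilde Y^2_t$ term by term, recognize $\delta Y^2_t/(c_1Y^1+c_0)$ as the exact differential of $\delta Y^2/(c_1Y^1+c_0)$ modulo a remainder, and collect the remaining pieces into the closed form displayed in the third line of~\eqref{Qu:Aux4}, with~$c_3$ the last constant of integration.

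The main obstacle I expect is this last integration: the expression for $\tilde A^{10}$ in~\eqref{Qu:Aux3} involves four contributions (the $U^1A^{10}$, $X^0_t/T_t$, $U^{00}$, and $X^0\tilde A^{11}$ pieces), and one must combine them, use the substitutions~\eqref{Qu:Aux5} and the already-derived formulas for~$\tilde Y^0,\tilde Y^1$, and perform an integration by parts in~$Y^1$ to identify the term $-\delta X^0e^{Y^0}/(T_t(c_1Y^1+c_0)^2)$ coming from the $X^0\tilde A^{11}$ contribution and the term $-c_2(c_1'Y^1+c_0')/(c_1Y^1+c_0)$ coming from the $U^{00}$ contribution coupled to the M\"obius formula for~$\tilde Y^1$. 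Everything else reduces to bookkeeping and a direct verification that the six free constants $c_0,c_1,c_0',c_1',c_2,c_3$ (together with the functions~$T,X^0$) exhaust the fibres of the projection $\bar{\mathcal G}^\sim_0\to\hat{\mathcal G}^\sim_0$.
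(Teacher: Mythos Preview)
Your approach is essentially the same as the paper's: you integrate the classifying ODEs~\eqref{Qu:Aux2} using the potentials~$Y^0,Y^1,Y^2$ to obtain~\eqref{Qu:Aux5}, and then compute $\partial_t\tilde Y^j=T_t\,\tilde Y^j_{\tilde t}$ from the tilded copy of~\eqref{Qu:eq:virtual} together with~\eqref{Qu:Aux3} and integrate. The paper carries this out explicitly only for~$\tilde Y^0$ and dismisses~$\tilde Y^1,\tilde Y^2$ with ``the procedure is similar'', so your more detailed treatment of the M\"obius form for~$\tilde Y^1$ and the term-by-term identification for~$\tilde Y^2$ is a welcome expansion rather than a different route.
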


\begin{proof}
Having introduced the virtual arbitrary elements, we can solve the equations~\eqref{Qu:Aux2}
for~$U^1$ and~$U^{00}$ in terms of~$Y$'s.
The expression for the transformed nonlocal arbitrary element~$\tilde Y^0$ follows from the chain of identities
\[
\p_t\tilde Y^0=\tilde Y^0_{\tilde t}T_t=\tilde A^{11} T_t= A^{11}-\frac{T_{tt}}{T_t}-\frac{2c_1{Y^1_t}}{c_1Y^1+c_0}=
\left(Y^0+\ln\frac1{|T_t|(c_1Y^1+c_0)^2}\right)_t.
\]
For~$Y^1$ and~$Y^2$, the procedure is similar.
\end{proof}

\begin{remark}
There is a nontrivial gauge equivalence amongst equations in the reparameterized class~$\bar{\mathcal L}_0$
stemming from the indeterminacy in defining the virtual arbitrary elements. More specifically, the arbitrary-elements
tuples~$\bar\theta$ and $\tilde{\bar\theta}$ are associated with the same equation in the class~$\bar{\mathcal L}_0$
if and only if
\begin{gather}\label{Qu:Nomer}
\begin{split}
&\tilde A^{10}=A^{10},\quad \tilde A^{11}=A^{11},\quad \tilde A^2=A^2,\\ 
&\tilde Y^0=Y^0+\ln c_1',\quad \tilde Y^1=c_1'Y^1+c_0',\quad \tilde Y^2=c_1'Y^2+c_3,
\end{split}
\end{gather}
\looseness=1
where $c$'s are arbitrary constants with $c_1'>0$.
The equations~\eqref{Qu:Nomer} jointly with the equations $\tilde t=t$, $\tilde x=x$ and $\tilde u=u$
represent the components of the gauge equivalence transformations in~$\bar{\mathcal L}_0$,
which constitute the gauge equivalence group~\smash{$G^{\mathrm g\sim}_{\bar{\mathcal L}_0}$} of~$\bar{\mathcal L}_0$.
This group is a normal subgroup of the usual equivalence group~\smash{$G^{\sim}_{\bar{\mathcal L}_0}$} of~$\bar{\mathcal L}_0$,
and the quotient group~\smash{$G^{\sim}_{\bar{\mathcal L}_0}/G^{\mathrm g\sim}_{\bar{\mathcal L}_0}$} is isomorphic to
the usual equivalence group of the subclass~$\hat{\mathcal L}_0$ of~$\hat{\mathcal L}$, 
which coincides with the usual equivalence group of the entire class~$\hat{\mathcal L}$.
See details on gauge equivalence transformations in~\cite{PopovychKunzingerEshraghi2010}.
\end{remark}

\begin{theorem}
The class~$\bar{\mathcal L}_0$ is normalized in the generalized sense. 
Its generalized equivalence group~$\bar G^\sim_0$ consists of the point transformations of the form
\begin{subequations}\label{Qu:L0GenEquivTrans}
\begin{gather}
\label{Qu:L0GenEquivTrans1}
\tilde t=\bar T,\quad 
\tilde x=(\bar{\mathrm D}_t\bar T)(c_1Y^1+c_0)x+\bar X^0,\quad
\tilde u=(c_1Y^1+c_0)u-c_1{\rm e}^{Y^0}x+c_2-c_1Y^2,
\\ \label{Qu:L0GenEquivTrans2}
\tilde A^2=(\bar{\mathrm D}_t\bar T)(c_1Y^1+c_0)^2A^2,\quad
\tilde A^{11}=\frac1{\bar{\mathrm D}_t\bar T}
\left( A^{11}-\frac{\bar{\mathrm D}_t^2\bar T}{\bar{\mathrm D}_t\bar T}-\frac{2c_1 {\rm e}^{Y^0}}{c_1Y^1+c_0}\right),
\\ \label{Qu:L0GenEquivTrans3}
\tilde A^{10}=(c_1Y^1+c_0){A^{10}}-\frac{\bar{\mathrm D}_t\bar X^0}{\bar{\mathrm D}_t\bar T}+c_2-c_1Y^2-
\frac{\bar X^0}{\bar{\mathrm D}_t\bar T}\left(A^{11}-\frac{\bar{\mathrm D}_t^2\bar T}{\bar{\mathrm D}_t\bar T}-
\frac{2c_1 {\rm e}^{Y^0}}{c_1Y^1+c_0}\right),
\\ \label{Qu:L0GenEquivTrans4}
\tilde Y^0=Y^0+\ln\frac{\delta}{(\bar{\mathrm D}_t\bar T)(c_1Y^1+c_0)^2}, \quad \tilde Y^1=\frac{c_1'Y^1+c_0'}{c_1Y^1+c_0},
\\ \label{Qu:L0GenEquivTrans5}
\tilde Y^2=\frac{\delta Y^2}{c_1Y^1+c_0}-\frac{\delta \bar X^0{\rm e}^{Y^0}}{(\bar{\mathrm D}_t\bar T)(c_1Y^1+c_0)^2}-c_2\frac{c_1'Y^1+c_0'}{c_1Y^1+c_0}+c_3.
\end{gather}
\end{subequations}
Here  $\bar{\mathrm D}_t=\p_t+A^{11}_t\p_{A^{11}}+A^{10}_t\p_{A^{10}}+
A^{11}\p_{Y^0}+{\rm e}^{Y^0}\p_{Y^1}+A^{10}{\rm e}^{Y^0}\p_{Y^2}+A^2_t\p_{A^2}$
is the restricted total derivative operator with respect to~$t$,
$\delta:=c_1'c_0-c_0'c_1$, $\bar T$ and~$\bar X^0$ are smooth functions of~$(t,Y^1)$ and~$(t,Y^0,Y^1,Y^2)$, respectively,
and~$c$'s are arbitrary constants with $\delta\bar{\mathrm D}_t\bar T>0$.
\end{theorem}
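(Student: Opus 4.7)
The plan is to upgrade the description of the equivalence groupoid $\bar{\mathcal G}^\sim_0$ supplied by Corollary~\ref{Qu:theorem:EqGrupoidHatL_0} to a bona fide generalized equivalence group of~$\bar{\mathcal L}_0$, by permitting the parameters $T(t)$ and $X^0(t)$ of the corollary to depend nonlocally on the arbitrary elements through the virtual arbitrary elements $Y^0,Y^1,Y^2$.

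First I would verify that each transformation of the form~\eqref{Qu:L0GenEquivTrans} is admissible for~$\bar{\mathcal L}_0$. Substituting $T\rightsquigarrow\bar T(t,Y^1)$ and $X^0\rightsquigarrow\bar X^0(t,Y^0,Y^1,Y^2)$ into the formulas of Corollary~\ref{Qu:theorem:EqGrupoidHatL_0}, and noting that the usual $t$-derivative of a function of $(t,Y^0,Y^1,Y^2)$ evaluated on a solution of~\eqref{Qu:eq:virtual} coincides with the restricted total-derivative operator $\bar{\mathrm D}_t$, rules~\eqref{Qu:L0GenEquivTrans2}--\eqref{Qu:L0GenEquivTrans5} reduce precisely to~\eqref{Qu:Aux3}--\eqref{Qu:Aux4}, while~\eqref{Qu:L0GenEquivTrans1} reproduces~\eqref{Qu:Aux1} with the specific $U^1$, $U^{00}$ given in~\eqref{Qu:Aux5}. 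The nondegeneracy $\delta\,\bar{\mathrm D}_t\bar T>0$ secures $T_tU^1\neq0$ upon specialization.

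Next I would check closure under composition, which is what elevates the set~\eqref{Qu:L0GenEquivTrans} to a \emph{group}. For the composition of two such transformations, the new $T$-component is $\tilde{\bar T}\bigl(\bar T(t,Y^1),\tilde Y^1\bigr)$; since $\tilde Y^1$ is a M\"obius function of $Y^1$ alone by~\eqref{Qu:L0GenEquivTrans4}, the result is again a function of $(t,Y^1)$ only. Analogously, the new $\bar X^0$-component is $\tilde{\bar X}^0\bigl(\bar T,\tilde Y^0,\tilde Y^1,\tilde Y^2\bigr)$, and formulas~\eqref{Qu:L0GenEquivTrans4}--\eqref{Qu:L0GenEquivTrans5} express $\tilde Y^0,\tilde Y^1,\tilde Y^2$ in terms of $(t,Y^0,Y^1,Y^2)$ and constants, so the admitted functional form is preserved. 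The constants combine by a standard $\mathrm{GL}(2,\mathbb R)$-with-affine-shift law, and inverses exist provided $\delta\,\bar{\mathrm D}_t\bar T>0$; the gauge-equivalence relations~\eqref{Qu:Nomer} must be quotiented out but do not affect the overall group structure.

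Finally, I would argue coverage: every admissible transformation in~$\bar{\mathcal G}^\sim_0$ is a specialization of~\eqref{Qu:L0GenEquivTrans}. Given an admissible transformation from Corollary~\ref{Qu:theorem:EqGrupoidHatL_0}, the parameters $T(t),X^0(t)$ may depend on the source arbitrary elements; but since $Y^1_t=e^{Y^0}>0$ along any solution of~\eqref{Qu:eq:virtual}, an equation-wise choice of $T(t)$ can be realized as the restriction of a global function $\bar T(t,Y^1)$, and similarly for $X^0$ using all three virtual elements. The hardest part is making this third step canonical --- i.e.\ explaining why the allowed dependence is precisely $\bar T=\bar T(t,Y^1)$ and $\bar X^0=\bar X^0(t,Y^0,Y^1,Y^2)$ rather than more general. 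This is dictated by the composition structure above: $\tilde Y^1$ depends only on $Y^1$, so a composition-closed family of $T$'s cannot admit $Y^0$ or $Y^2$ as independent variables, whereas $\tilde Y^2$ involves all three virtual elements and $\bar X^0$, forcing the latter to inherit the full set of dependencies.
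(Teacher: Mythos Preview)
Your outline successfully establishes the normalization claim: steps~1 and~3 together show that the set of transformations~\eqref{Qu:L0GenEquivTrans} is contained in~$\bar G^\sim_0$ and already generates the whole equivalence groupoid. The gap is in the characterization of~$\bar G^\sim_0$ itself. Your argument that the restricted dependencies $\bar T=\bar T(t,Y^1)$ and $\bar X^0=\bar X^0(t,Y^0,Y^1,Y^2)$ are \emph{necessary} rests on composition closure, but this does not work: the generalized equivalence group is a group by definition, so closure is automatic and cannot be used to exclude extra dependencies. Concretely, nothing in your closure check rules out $\bar T$ depending on~$A^{10}$, $A^{11}$, $A^2$, $Y^0$ or~$Y^2$; you would simply obtain a larger (still composition-closed) family.

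The paper's proof proceeds differently and fills exactly this gap. It treats an element of~$\bar G^\sim_0$ as a point transformation on the space with coordinates $(t,x,u,A^{10},A^{11},A^2,Y^0,Y^1,Y^2)$ and writes the parameters $\bar T$, $\bar X^0$, $c_i$, $c_j'$ as functions of all these coordinates a priori. The requirement that the induced family of admissible transformations be pointwise parameterized by~$\bar\theta$ translates into the system $\bar{\mathrm D}_x\bar T=\bar{\mathrm D}_u\bar T=0$, $\bar{\mathrm D}_x\bar X^0=\bar{\mathrm D}_u\bar X^0=0$, $\bar{\mathrm D}_tc_i=\bar{\mathrm D}_xc_i=\bar{\mathrm D}_uc_i=0$, etc. Splitting these with respect to $A^2_x$ and then with respect to $A^{10}_t$, $A^{11}_t$, $A^{10}$, $A^{11}$, $Y^0$ forces the $c$'s to be genuine constants and reduces the dependence of~$\bar T$, $\bar X^0$ to $(t,A^{10},A^{11},Y^0,Y^1,Y^2)$. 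The decisive step is then to split the transformation rules~\eqref{Qu:L0GenEquivTrans2}--\eqref{Qu:L0GenEquivTrans3} themselves with respect to~$A^{10}_t$ and~$A^{11}_t$: since $\bar{\mathrm D}_t\bar T$ and $\bar{\mathrm D}_t\bar X^0$ bring in these jet coordinates whenever $\bar T$ or~$\bar X^0$ depends on $A^{10}$, $A^{11}$, $Y^0$ or~$Y^2$, consistency of the transformed arbitrary elements forces $\bar T_{A^{10}}=\bar T_{A^{11}}=\bar T_{Y^0}=\bar T_{Y^2}=0$ and $\bar X^0_{A^{10}}=\bar X^0_{A^{11}}=0$. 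This splitting argument, not closure, is what pins down the stated functional form.
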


\begin{proof}
Elements of the group~$\bar G^\sim_0$ are point transformations in the space with the coordinates $(t,x,u,A^{10},A^{11},A^2,Y^0,Y^1,Y^2)$.
Each of these transformations, $\mathcal T$, generates a family of admissible transformations of the class~$\bar{\mathcal L}_0$ 
with the following properties:
\begin{itemize}\itemsep=-0.5ex
\item[$\circ$]
they are smoothly and pointwise parameterized by the source arbitrary-element tuple~$\bar{\theta}$,
\item[$\circ$]
their transformational parts are of the general form~\eqref{Qu:Aux1},
\item[$\circ$]
their target arbitrary-element tuples are related to the source ones according to~\eqref{Qu:Aux3} and~\eqref{Qu:Aux4},
\item[$\circ$]
and the parameters~$U^1$ and~$U^{00}$ in them are necessarily of the form~\eqref{Qu:Aux5}.
\end{itemize}
Therefore, the components of~$\mathcal T$ are of the form~\eqref{Qu:L0GenEquivTrans},
where the parameters $\bar T$, $\bar X^0$ and $c$'s are considered as smooth functions of the above coordinates
that satisfy the equations
\begin{gather*}
\bar{\mathrm D}_x\bar T=\bar{\mathrm D}_u\bar T=0,\quad
\bar{\mathrm D}_x\bar X^0=\bar{\mathrm D}_u\bar X^0=0,\\
\bar{\mathrm D}_tc_i=\bar{\mathrm D}_xc_i=\bar{\mathrm D}_uc_i=0,\ i=0,\dots,3,\quad
\bar{\mathrm D}_tc_j'=\bar{\mathrm D}_xc_j'=\bar{\mathrm D}_uc_j'=0,\ j=0,1,
\end{gather*}
with $\bar{\mathrm D}_t$ defined in the theorem's statement,
$\bar{\mathrm D}_x:=\p_x+A^2_x\p_{A^2}$ and $\bar{\mathrm D}_u:=\p_u$.
Successively splitting these equations with respect to~$A^2_x$,
and then the equations for~$c$'s with respect to~$A^{10}_t$, $A^{11}_t$, $A^{10}$, $A^{11}$ and $Y^0$
(the last three splittings are allowed in view of equations derived in the course of the previous splittings),
we get that
$\bar T$ and $\bar X^0$ are smooth functions of~$(t,A^{10},A^{11},Y^0,Y^1,Y^2)$,
and~$c$'s are constants.
After this, we also split the equations~\eqref{Qu:L0GenEquivTrans2} and~\eqref{Qu:L0GenEquivTrans3}
with respect to~$A^{10}_t$ and~$A^{11}_t$,
obtaining
$\bar T_{A^{10}}=\bar T_{A^{11}}=\bar T_{Y^0}=\bar T_{Y^2}=0$ and $\bar X^0_{A^{10}}=\bar X^0_{A^{11}}=0$,
which completes the proof.
\end{proof}

Given a class~$\mathcal C$ of differential equations, consider subgroups of its generalized equivalence group
that generate the same equivalence groupoid of~$\mathcal C$ as the entire group does.
A minimal subgroup of the above kind is called an effective generalized equivalence group of~$\mathcal C$~\cite{OpanasenkoBihloPopovych2017}.
An effective generalized equivalence group may coincide with the entire generalized equivalence group.
For instance, this is the case when the generalized equivalence group coincides with the usual one.
An effective generalized equivalence group is nontrivial if it is a proper subgroup of the corresponding generalized equivalence group.

\looseness=-1
Note that should we merely omit the dependence of the group parameters~$\bar T$ and $\bar X^0$ in~$\bar G^\sim_0$ on the nonlocal arbitrary elements~$Y$'s,
we would obtain the set of equivalence transformations that is not a group as it is not closed under the composition of transformations
although this set still generates the entire equivalence groupoid of~$\bar{\mathcal L}_0$.
In particular, the value~$\bar X^{0,3}$ of the parameter function~$\bar X^0$ for the composition~$\mathcal T^3$ of transformations~$\mathcal T^1,\mathcal T^2\in \bar G^\sim_0$
would be of the form
\begin{gather}\label{Qu:eq:composition}
\bar X^{0,3}=\mathrm D_{\tilde t} \bar T^{,2}(c_{1,2} \tilde Y^1+ c_{0,2})\bar X^{0,1}+\bar X^{0,2},
\end{gather}
where an index after comma indicates the number of the transformation the parameters are associated with.
Thus, the dependence of~$\bar X^0$ on~$Y^1$ is necessary for closedness with respect to the composition of the transformations.
In a similar way, we can show that the parameter~$\bar X^0$ should depend on~$Y^0$.
At the same time, the dependence of~$\bar T$ on the virtual arbitrary elements
as well as the dependence of~$\bar X^0$ on~$Y^2$ are superfluous.
Guided by inspection and intuition, we look for transformations
with the parameter~$\bar X^0$ of the form $\bar X^0=T_t \exp(\alpha Y^0) (c_1Y^1+c_0)^\beta\breve X^0(t)$ for some constants~$\alpha$ and~$\beta$.
The substitution of the ansatz into~\eqref{Qu:eq:composition} readily produces $\alpha=-1/2$ and~$\beta=1$.

\begin{corollary}
An effective generalized equivalence group~$\breve G^\sim_0$ of the class~$\bar{\mathcal L}_0$ is constituted by the point transformations
\begin{gather*}
\tilde t=T,\quad \tilde x=T_t(c_1Y^1+c_0)\left(x+{\rm e}^{-{Y^0}/2}\breve X^0\right),\quad \tilde u=(c_1Y^1+c_0)u-c_1{\rm e}^{Y^0}x+c_2-c_1Y^2,\\
\tilde{A}^{10}=(c_1Y^1+c_0)\left(A^{10}-\frac12{\rm e}^{-{Y^0}/{2}}\breve X^0A^{11}-{\rm e}^{-{Y^0}/{2}}\breve X^0_t\right)+
{c_1{\rm e}^{{Y^0}/2}}+c_2-c_1Y^1,\\
\tilde{A}^{11}=\frac1{T_t}\bigg( A^{11}-\frac{T_{tt}}{T_t}-\frac{2c_1 {\rm e}^{Y^0}}{c_1Y^1+c_0}\bigg),\quad
\tilde{A}^2={T_t{(c_1Y^1+c_0)^2}A^2},\\
\tilde Y^0=Y^0+\ln\frac{\delta}{T_t(c_1Y^1+c_0)^2}, \quad \tilde Y^1=\frac{c_1'Y^1+c_0'}{c_1Y^1+c_0},\\
\tilde Y^2=\frac{\delta Y^2}{c_1Y^1+c_0}-\frac{\delta \breve X^0{\rm e}^{{Y^0}/2}}{c_1Y^1+c_0}-c_2\frac{c_1'Y^1+c_0'}{c_1Y^1+c_0}+c_3,
\end{gather*}
where $\delta:=c_1'c_0-c_0'c_1$, $T$ and~$\breve X^0$ are smooth functions of~$t$ and~$c$'s are arbitrary constants with $\delta T_t>0$.
\end{corollary}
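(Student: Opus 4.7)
The plan is to substitute the ansatz $\bar T=T(t)$, $\bar X^0=T_t{\rm e}^{-Y^0/2}(c_1Y^1+c_0)\breve X^0(t)$ --- whose exponents $\alpha=-1/2$ and $\beta=1$ were forced by~\eqref{Qu:eq:composition} in the discussion preceding the corollary --- into the formulas~\eqref{Qu:L0GenEquivTrans} of the theorem, to read off the explicit transformation laws displayed in the corollary, and then to verify that the resulting family is closed under composition and inversion and generates the full equivalence groupoid of~$\bar{\mathcal L}_0$.

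First I would carry out the substitution. Since $\bar T$ depends only on~$t$, we have $\bar{\mathrm D}_t\bar T=T_t$ and $\bar{\mathrm D}_t^2\bar T=T_{tt}$, so~\eqref{Qu:L0GenEquivTrans2}, \eqref{Qu:L0GenEquivTrans4} and~\eqref{Qu:L0GenEquivTrans5} collapse directly to the corresponding identities of the corollary. For~\eqref{Qu:L0GenEquivTrans3} I would compute $\bar{\mathrm D}_t\bar X^0$ via the chain rule, using $\bar{\mathrm D}_tY^0=A^{11}$ and $\bar{\mathrm D}_tY^1={\rm e}^{Y^0}$: the contribution from $\bar{\mathrm D}_tY^0$ cancels half of the $A^{11}$-term coming from the last bracket in~\eqref{Qu:L0GenEquivTrans3}, leaving precisely the coefficient $-\tfrac12$ in front of ${\rm e}^{-Y^0/2}\breve X^0A^{11}$, while the term from $\bar{\mathrm D}_tY^1$ combines with the $c_1{\rm e}^{Y^0}$-contribution to produce the additive $c_1{\rm e}^{Y^0/2}$-term.

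The key verification is closure under composition. Given $\mathcal T^1,\mathcal T^2\in\breve G^\sim_0$, the composition rule~\eqref{Qu:eq:composition} prescribes $\bar X^{0,3}$. Substituting the ansatz for $\bar X^{0,1}$ and $\bar X^{0,2}$ together with the corollary's laws for~$\tilde Y^0$ and~$\tilde Y^1$, the factor ${\rm e}^{-\tilde Y^0/2}(c_{1,2}\tilde Y^1+c_{0,2})$ becomes a purely $t$-dependent expression via the identity ${\rm e}^{\tilde Y^0}=\delta_2{\rm e}^{Y^0}/\bigl(T_{t,2}(c_{1,2}Y^1+c_{0,2})^2\bigr)$; this is precisely the cancellation the exponents $\alpha=-1/2$ and $\beta=1$ were tailored to produce, so $\breve X^{0,3}$ depends only on~$t$. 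The composition laws for~$T$ and for the constants $c_i,c'_j$ reduce to $T_3=T_2\circ T_1$ and to the $\mathrm{GL}(2,\mathbb R)$-type composition readable off from~\eqref{Qu:L0GenEquivTrans4}, and inverses come from algebraic inversion of the same formulas. Generation of the full equivalence groupoid is then immediate: for any fixed source~$\bar\theta$ the $Y$'s are numerical constants, so the functional parameters $\bar T(t,Y^1)$ and $\bar X^0(t,Y^0,Y^1,Y^2)$ in a general element of~$\bar G^\sim_0$ restrict to functions of~$t$ alone, which can be absorbed into~$T(t)$ and~$\breve X^0(t)$ of an element of~$\breve G^\sim_0$ yielding the same admissible transformation.

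The main obstacle is the closure computation: the intricate cancellation eliminating the $Y$-dependence from $\breve X^{0,3}$ is the whole reason for choosing the ansatz with those specific exponents, and carrying it out cleanly requires careful bookkeeping of every factor in~\eqref{Qu:eq:composition}. Everything else reduces either to direct substitution in the theorem's formulas or to the routine absorption argument sketched above.
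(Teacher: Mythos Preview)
Your plan follows essentially the same route as the paper's proof: check that the displayed family is a group, then that it generates the whole equivalence groupoid of~$\bar{\mathcal L}_0$. Two points deserve attention.

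First, a small slip in your absorption argument: for a fixed source tuple~$\bar\theta$, the virtual arbitrary elements $Y^0,Y^1,Y^2$ are not ``numerical constants'' but specific \emph{functions of~$t$} determined by~\eqref{Qu:eq:virtual}. The argument still works once you say this correctly: $\bar T(t,Y^1)$ and $\bar X^0(t,Y^0,Y^1,Y^2)$ then become fixed functions of~$t$ and can be absorbed into $T(t)$ and~$\breve X^0(t)$ exactly as you describe.

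Second, and more substantively, you omit the \emph{minimality} requirement built into the definition of an effective generalized equivalence group: you must also show that no proper subgroup of~$\breve G^\sim_0$ already generates the full groupoid. The paper handles this by noting that, for each fixed source~$\bar\theta$, the assignment $(T,\breve X^0,c_0,\dots,c_3,c_0',c_1')\mapsto(\text{induced admissible transformation})$ is a \emph{bijection} onto the set of admissible transformations with that source. Your absorption argument is precisely the surjectivity half; the injectivity half (distinct parameter tuples yield distinct admissible transformations from~$\bar\theta$) is what forces minimality, and you should state it explicitly.
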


\begin{proof}
To prove that the set of transformations from the corollary's statement is 
an effective generalized equivalence group of the class~$\bar{\mathcal L}_0$,
one should show that 
it is indeed a group under the composition of transformations, 
it induces the entire equivalence groupoid of the class~$\bar{\mathcal L}_0$, and 
it is a minimal group with this property. 
The first statement is proved by mere inspection,
while the second (two-part) statement is more involved.
Given an equation~$\bar{\mathcal L}_0^{\bar\theta}$ in the class~$\bar{\mathcal L}_0$
with a fixed value of the tuple of arbitrary elements~$\bar\theta$,
the set~$\mathrm T_{\bar\theta}$ of admissible transformations with the source~$\bar\theta$
is parameterized by arbitrary smooth functions~$T$ and~$X^0$ of~$t$
and arbitrary constants~$c_0$, \dots, $c_3$, $c_0'$ and~$c_1'$ such that~$(c_1'c_0-c_1c_0')T_t>0$.
At the same time, each admissible transformation in~$\mathrm T_{\bar\theta}$ is generated
by the element from~$\breve G^\sim_0$ with the same values of all the parameters except $\breve X^0$
whose value is defined by $\breve X^0=X^0{\rm e}^{Y^0/2}/(T_t(c_1Y^1+c_0))$ 
with the fixed values of the arbitrary elements~$Y^0$ and~$Y^1$.
This establishes a one-to-one correspondence between the group~$\bar G^\sim_0$ and~$\mathrm T_{\bar\theta}$, 
completing the proof.
\end{proof}

\begin{corollary}
The class~$\hat{\mathcal L}_0$ is normalized in the extended generalized sense. 
Its equivalence groupoid is generated by the group~$\breve G^\sim_0$.
\end{corollary}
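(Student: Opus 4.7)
The plan is to descend the preceding corollary from the reparameterized class $\bar{\mathcal L}_0$ to $\hat{\mathcal L}_0$. Since $\bar{\mathcal L}_0$ is obtained from $\hat{\mathcal L}_0$ by adjoining the nonlocal arbitrary elements $Y^0$, $Y^1$, $Y^2$ defined by~\eqref{Qu:eq:virtual}, there is a natural forgetful projection $\bar\theta=(A^{10},A^{11},A^2,Y^0,Y^1,Y^2)\mapsto\theta=(A^{10},A^{11},A^2)$, under which admissible transformations of $\bar{\mathcal L}_0$ descend to admissible transformations of $\hat{\mathcal L}_0$. I would push the effective generalized equivalence group $\breve G^\sim_0$ of $\bar{\mathcal L}_0$ along this projection; because the formulas defining $\breve G^\sim_0$ involve $Y^0$, $Y^1$, $Y^2$ as parameters, and these are nonlocal in $(A^{10},A^{11})$, the pushed-down transformations are by definition extended generalized equivalence transformations of $\hat{\mathcal L}_0$, which is exactly the content of the corollary.

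First I would show surjectivity of the projection at the groupoid level. Given an admissible transformation of $\hat{\mathcal L}_0$ described by Proposition~\ref{Qu:theorem:EqGrupoidTildeL_0}, I fix any particular solutions $Y^0$, $Y^1$, $Y^2$ of~\eqref{Qu:eq:virtual}. Then $U^1$ and $U^{00}$ admit the representation~\eqref{Qu:Aux5} for some constants $c_0$, $c_1$, $c_2$: solutions of the ODE $U^1_{tt}=A^{11}U^1_t$ from~\eqref{Qu:Aux2} form a two-dimensional affine space spanned by the constant $1$ and $Y^1$ (since $Y^1_{tt}=A^{11}Y^1_t$ by~\eqref{Qu:eq:virtual} and $Y^1_t={\rm e}^{Y^0}\neq0$), while the equation $U^{00}_t=-A^{10}U^1_t$ forces $U^{00}=c_2-c_1Y^2$ by direct integration. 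Hence every element of $\hat{\mathcal G}^\sim_0$ lifts to an admissible transformation of $\bar{\mathcal L}_0$, which by the previous corollary is generated by $\breve G^\sim_0$.

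Next I would verify that the descent of $\breve G^\sim_0$ to an action on $(t,x,u,A^{10},A^{11},A^2)$ obtained by forgetting the transformation rules for $Y$'s is a well-defined family of extended generalized equivalence transformations of $\hat{\mathcal L}_0$, closed under composition, and jointly generating~$\hat{\mathcal G}^\sim_0$. Closure and generation follow from the group structure of $\breve G^\sim_0$ together with the surjectivity just established. The qualifier ``extended generalized'' is justified by the fact that the coefficients of these transformations depend on $(A^{10},A^{11})$ only through the nonlocal quantities $Y^0$, $Y^1$, $Y^2$ determined by~\eqref{Qu:eq:virtual}, which is precisely how $\breve G^\sim_0$ is parameterized in the preceding corollary.

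The hard part will be the gauge-ambiguity check: because $Y^0$, $Y^1$, $Y^2$ are defined only up to the gauge group described by~\eqref{Qu:Nomer}, a change of lift must change the parameters $(T,\breve X^0,c_0,\dots,c_3,c_0',c_1')$ only by a reparameterization within $\breve G^\sim_0$, leaving the induced transformation on $(t,x,u,A^{10},A^{11},A^2)$ unchanged. By direct substitution of~\eqref{Qu:Nomer} into the explicit formulas for $\breve G^\sim_0$, this reduces to checking that a gauge shift $(Y^0,Y^1,Y^2)\mapsto(\tilde Y^0,\tilde Y^1,\tilde Y^2)$ is absorbed by a linear-fractional substitution in $(c_0,c_1,c_0',c_1')$ combined with shifts in $(c_2,c_3)$. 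Once this is settled, the remaining assertions amount to a direct repackaging of the previous corollary.
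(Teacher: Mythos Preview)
Your proposal is correct. The paper gives no explicit proof of this corollary, treating it as an immediate consequence of the definition of extended generalized normalization together with the results already established for the covering class~$\bar{\mathcal L}_0$: by construction $\bar{\mathcal L}_0$ is a reparameterization of~$\hat{\mathcal L}_0$ via the nonlocal arbitrary elements~\eqref{Qu:eq:virtual}, and the preceding corollary shows that the effective generalized equivalence group~$\breve G^\sim_0$ of~$\bar{\mathcal L}_0$ generates its entire equivalence groupoid, which is exactly what ``normalized in the extended generalized sense'' means for~$\hat{\mathcal L}_0$.

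Your argument unpacks this definitional step in full detail; the surjectivity of the lift you verify is essentially the content of Corollary~\ref{Qu:theorem:EqGrupoidHatL_0} (where the representation $U^1=c_1Y^1+c_0$, $U^{00}=c_2-c_1Y^2$ was already obtained), and the gauge-ambiguity check is implicit in the minimality argument of the previous corollary, so you are making explicit what the paper leaves as routine. One small wording issue: the solution set of $U^1_{tt}=A^{11}U^1_t$ is a two-dimensional \emph{vector} space, not an affine space.
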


Moreover, the class~$\hat{\mathcal L}_0$ can be mapped by the family of equivalence transformations
with $U^1=1$, $U^{00}=0$ and the para\-meters~$T$ and~$X^0$ satisfying the system
\[
T_{tt}=A^{11}T_t,\quad X^0_t=A^{10}T_t
\]
to its subclass~$\mathcal L_{0'}$ of equations of the form
$u_t+uu_x=A^2(t,x)u_{xx}$ studied in~\cite{PocheketaPopovych2017}.
In other words, we gauge the arbitrary elements~$A^{10}$ and~$A^{11}$ to zero by admissible transformations.
The class~$\mathcal L_{0'}$ is the subclass of the initial class~$\mathcal L$ as well.
It is singled out from~$\mathcal L$ by the condition $C=1$, $\mathcal L_{0'}=\mathcal L\cap \hat{\mathcal L}$.
In other words, every equation in the class~$\mathcal L_0$
is mapped by a point transformation to an equation in the same class with $C=1$.
For completeness we present here the assertion from~\cite{PocheketaPopovych2017}
on the equivalence groupoid of the class~$\mathcal L_{0'}$.

\begin{proposition}\label{Qu:prop2}
The class~$\mathcal L_{0'}$ is normalized in the usual sense.
Its equivalence group~$G^\sim_{0'}$ is constituted by the point transformations of the form
\begin{gather*}
\tilde t=\frac{\alpha t+\beta}{\gamma t+\delta},\quad \tilde x=\frac{\kappa x+\mu_1t+\mu_0}{\gamma t+\delta},\quad
\tilde u=\frac{\kappa(\gamma t+\delta)u-\kappa\gamma x+\mu_1\delta-\mu_0\gamma}{\alpha\delta-\beta\gamma},\quad
\tilde A^2=\frac{\kappa^2 A^2}{\alpha\delta-\beta\gamma},
\end{gather*}
where $\alpha$, $\beta$, $\gamma$, $\delta$, $\kappa$, $\mu_0$ and~$\mu_1$ are arbitrary constants 
with $(\alpha\delta-\beta\gamma)\kappa\ne0$ that are defined up to a nonzero constant multiplier.
\end{proposition}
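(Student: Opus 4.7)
The plan is to derive Proposition \ref{Qu:prop2} by restriction from the description of the equivalence groupoid of~$\hat{\mathcal L}_0$ established in Proposition~\ref{Qu:theorem:EqGrupoidTildeL_0}. Indeed, the class~$\mathcal L_{0'}$ coincides with the subclass of~$\hat{\mathcal L}_0$ cut out by the constraint $A^{10}=A^{11}=0$ (equivalently, by $A^1=0$, or, in the original parameterization, by $C=1$), so an admissible transformation within~$\mathcal L_{0'}$ is precisely an admissible transformation of~$\hat{\mathcal L}_0$ whose source and target both satisfy $A^{10}=A^{11}=0$. Setting the source values to zero in~\eqref{Qu:Aux2} reduces the determining equations for~$U^1$ and~$U^{00}$ to $U^1_{tt}=0$ and $U^{00}_t=0$, so $U^1(t)=at+b$ and $U^{00}(t)\equiv c$ for some constants $a$, $b$, $c$ with $b\ne0$ or $a\ne0$.

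Next I would impose the target constraints. Setting $\tilde A^{11}=0$ in~\eqref{Qu:Aux3} yields the ODE $T_{tt}/T_t=-2U^1_t/U^1=-2a/(at+b)$, whose general solution is the Möbius function $T=(\alpha t+\beta)/(\gamma t+\delta)$, with $at+b$ proportional to $\gamma t+\delta$ and $T_t=(\alpha\delta-\beta\gamma)/(\gamma t+\delta)^2$; the scaling factor is absorbed into a redefinition of the constants~$a$, $b$. Then the vanishing of $\tilde A^{10}$ in~\eqref{Qu:Aux3} reduces to the first-order equation $X^0_t=U^{00}T_t$, giving $X^0=cT+M$ for a new constant~$M$. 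Substituting these expressions back into~\eqref{Qu:Aux1} and collecting terms over the common denominator $\gamma t+\delta$ produces the stated formulas $\tilde x=(\kappa x+\mu_1 t+\mu_0)/(\gamma t+\delta)$ and $\tilde u=(\kappa(\gamma t+\delta)u-\kappa\gamma x+\mu_1\delta-\mu_0\gamma)/(\alpha\delta-\beta\gamma)$ after the reparameterization $\kappa\leftrightarrow a\delta-b\gamma$-scaled constant and $\mu_0,\mu_1$ expressed through $c$, $M$; the formula $\tilde A^2=\kappa^2 A^2/(\alpha\delta-\beta\gamma)$ follows directly from $\tilde A^2=(U^1)^2T_t A^2$.

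Finally, for the normalization claim, one observes that the derivation imposed no conditions on the remaining arbitrary element~$A^2$, hence the set of admissible transformations with an arbitrary fixed source in~$\mathcal L_{0'}$ is exhausted by the above family, which is parameterized purely by constants. Thus these transformations constitute the usual equivalence group~$G^\sim_{0'}$ and generate the entire equivalence groupoid, establishing normalization in the usual sense. Closedness under composition can be verified by a direct computation (or, more conceptually, by recognizing the projective $\mathrm{GL}(2,\mathbb R)$-type structure acting on~$(t,1)$ together with affine action on~$x$ and~$u$), which also accounts for the overall nonzero-multiplier redundancy in the parameters.

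The step I expect to require the most care is the reparameterization bookkeeping in the second paragraph: one has to rename the six integration constants $(a,b,c,K,L,M)$ arising naturally from integration into the seven projective parameters $(\alpha,\beta,\gamma,\delta,\kappa,\mu_0,\mu_1)$ defined up to a common nonzero factor, verifying simultaneously all four relations $\alpha\delta-\beta\gamma=\kappa$-scaled, the numerators of $\tilde x$ and $\tilde u$, and $\tilde A^2$. The degenerate limit $\gamma=0$, where $T$ becomes affine rather than genuinely fractional-linear, deserves a brief separate check to confirm it is subsumed by the Möbius parameterization rather than missed.
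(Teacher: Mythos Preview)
Your approach is correct and, in fact, the paper does not supply its own proof of this proposition at all: it simply quotes the result from \cite{PocheketaPopovych2017}. What you have done is derive it internally from Proposition~\ref{Qu:theorem:EqGrupoidTildeL_0} by specializing to $A^{10}=A^{11}=0$ on both source and target, which is a legitimate and tidy route that makes the paper self-contained on this point. The logic---$U^1_{tt}=0$, $U^{00}_t=0$ from~\eqref{Qu:Aux2}, then $T_{tt}/T_t=-2U^1_t/U^1$ and $X^0_t=U^{00}T_t$ from the vanishing of $\tilde A^{11}$ and $\tilde A^{10}$ in~\eqref{Qu:Aux3}---is sound, and the normalization claim follows exactly as you say, since no constraint on the source~$A^2$ appears.

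The only place to tighten is the reparameterization paragraph, where the prose becomes sketchy (``$\kappa\leftrightarrow a\delta-b\gamma$-scaled constant''). Concretely: writing $U^1=\lambda(\gamma t+\delta)$ and $T_t=(\alpha\delta-\beta\gamma)/(\gamma t+\delta)^2$ gives $\kappa=\lambda(\alpha\delta-\beta\gamma)$, $\mu_1=c\alpha+M\gamma$, $\mu_0=c\beta+M\delta$, and then $\mu_1\delta-\mu_0\gamma=c(\alpha\delta-\beta\gamma)$, from which all four displayed formulas drop out directly, including $\tilde A^2=(U^1)^2T_tA^2=\kappa^2A^2/(\alpha\delta-\beta\gamma)$. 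The case $\gamma=0$ needs no separate treatment once you parameterize $T$ projectively from the start rather than via the antiderivative of $K/(at+b)^2$. With those identifications written out, the proof is complete.
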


\section{Group classification}\label{Qu:sec:GroupClass}

Classes of differential equations are called similar if they are related by a point transformation of dependent and independent variables.
One of the distinguished features of similar classes is the fact that their group classifications with
respect to the corresponding equivalence groups (resp.\ groupoids)
are interrelated and can be obtained from one another via the so-called mapping method,
see more details in~\cite{VaneevaPopovychSophocleous2009}.
At the same time, the classes~$\mathcal L$ and~$\hat{\mathcal L}$ (and the corresponding subclasses)
are related only by a family~$\mathcal F$ of point transformations
parameterized by the arbitrary elements of source equations.
We call such classes \textit{weakly similar}.
One may suppose that the group classifications of the weakly similar classes are still directly related
but in fact this is not the case for group classifications with respect to equivalence groups.
Nevertheless, a similar statement about group classifications with respect to equivalence groupoids is possible.

Note that in view of results of the previous section,
the group classification of the class~$\mathcal L_0$ up to $\mathcal G^\sim_0$-equivalence reduces to
that of~$\mathcal L_{0'}$ up to  $G^\sim_{0'}$-equivalence,
and the latter classification was carried out in~\cite{PocheketaPopovych2017}.

Summing up the above consideration, we formulate the following theorem.

\begin{theorem}
A complete list of $\mathcal G^\sim$-inequivalent Lie-symmetry extensions within the class~$\mathcal L$ of variable-coefficient Burgers equations
is the disjoint union of a complete list of $G^\sim_{0'}$-inequivalent Lie-symmetry extensions of the normalized subclass~$\mathcal L_{0'}$
and a complete list of $\mathcal G^\sim_1$-inequivalent Lie-symmetry extensions of the subclass~$\mathcal L_1$.
The latter list is necessarily the preimage of a complete list
of $G^\sim_1$-inequivalent Lie-symmetry extensions of the normalized class~$\hat{\mathcal L}_1$
by the family~$\mathcal F$ of point transformations $\tilde t = t$, $\tilde x=\int (1/C(t,x))\mathrm dx$, $\tilde u=u$,
parameterized by the arbitrary element~$C$ of source equations,
which maps the class~$\mathcal L$ onto the class~$\hat{\mathcal L}$.
\end{theorem}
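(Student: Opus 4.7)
The plan is to decompose the problem along the partition $\mathcal L=\mathcal L_0\sqcup\mathcal L_1$ introduced in Section~\ref{Qu:sec:EquivGroup} and handle the two pieces by different tools.

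First, I would verify that the partition is invariant with respect to the equivalence groupoid $\mathcal G^\sim$ of $\mathcal L$. In the imaged class $\hat{\mathcal L}$ the decomposition $\hat{\mathcal L}=\hat{\mathcal L}_0\sqcup\hat{\mathcal L}_1$ is obviously invariant, since the defining condition $A^1_{xx}=0$ versus $A^1_{xx}\neq0$ is preserved by any point transformation of the form found in Proposition~\ref{Qu:prop1}. Because the family~$\mathcal F$ is built from point transformations of the variables $(t,x,u)$ alone and intertwines $\mathcal G^\sim$ with $\hat{\mathcal G}^\sim$, this invariance pulls back to $\mathcal L_0\sqcup\mathcal L_1$ and translates the intricate differential constraint~\eqref{Qu:SplittingCondition} into a bona fide invariant partition. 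Consequently every Lie-symmetry extension within $\mathcal L$ lies entirely in $\mathcal L_0$ or entirely in $\mathcal L_1$, and the classification list for $\mathcal L$ is just the disjoint concatenation of those for $\mathcal L_0$ and $\mathcal L_1$.

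Second, for the singular piece $\mathcal L_0$ I would exploit the gauging observation immediately preceding Proposition~\ref{Qu:prop2}: the family of equivalence transformations with $U^1=1$, $U^{00}=0$ and $T$, $X^0$ solving $T_{tt}=A^{11}T_t$, $X^0_t=A^{10}T_t$ carries every equation of $\mathcal L_0$ to one of $\mathcal L_{0'}=\mathcal L\cap\hat{\mathcal L}$. Since $\mathcal L_{0'}$ is normalized in the usual sense by Proposition~\ref{Qu:prop2}, $\mathcal G^\sim_0$-equivalence on $\mathcal L_0$ restricts on $\mathcal L_{0'}$ exactly to $G^\sim_{0'}$-equivalence. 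Thus the classification of $\mathcal L_0$ modulo $\mathcal G^\sim_0$ is the one already obtained for $\mathcal L_{0'}$ in~\cite{PocheketaPopovych2017}, plus the observation that each $\mathcal L_0$-equation is gauge-equivalent to a representative of that list.

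Third, for the regular piece $\mathcal L_1$ I would use the mapping method. The family~$\mathcal F$ is a bijection between $\mathcal L_1$ and $\hat{\mathcal L}_1$; being a family of point transformations of $(t,x,u)$, it maps the maximal Lie invariance algebra of any equation in~$\mathcal L_1$ isomorphically to that of its $\mathcal F$-image, and likewise sends admissible transformations in $\mathcal L_1$ bijectively to admissible transformations in $\hat{\mathcal L}_1$. Therefore $\mathcal G^\sim_1$-inequivalent Lie-symmetry extensions in $\mathcal L_1$ correspond bijectively with $\hat{\mathcal G}^\sim_1$-inequivalent extensions in $\hat{\mathcal L}_1$. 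By Proposition~\ref{Qu:prop1} the class $\hat{\mathcal L}_1$ is normalized in the usual sense, so its equivalence groupoid is generated by $G^\sim_1$, and $\hat{\mathcal G}^\sim_1$-inequivalence coincides with $G^\sim_1$-inequivalence. Pulling back the resulting $G^\sim_1$-inequivalent list through~$\mathcal F$ yields the required list for $\mathcal L_1$.

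The delicate point is the one flagged in the paragraph preceding the theorem: weak similarity does \emph{not} transfer group classifications with respect to equivalence groups, only with respect to equivalence groupoids. The correct argument is to classify on the imaged side with respect to the groupoid, use normalization of $\hat{\mathcal L}_1$ to replace the groupoid by $G^\sim_1$ there, and only then pull back; attempting to pull back $G^\sim_1$ directly to an equivalence group on $\mathcal L_1$ would lose the $C$-dependent parameter in $\mathcal F$ and spuriously refine the list. Making this distinction explicit, and restricting the preimage statement to the equivalence groupoid $\mathcal G^\sim_1$ rather than the equivalence group $G^\sim_1$, is what makes the last clause of the theorem correct.
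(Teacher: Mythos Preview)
Your proposal is correct and follows essentially the same approach as the paper. In fact, the paper does not supply a formal proof for this theorem at all: it is introduced with ``Summing up the above consideration, we formulate the following theorem,'' so the intended justification is precisely the preceding discussion---the $\mathcal G^\sim$-invariant partition $\mathcal L=\mathcal L_0\sqcup\mathcal L_1$, the gauging of $\mathcal L_0$ to the normalized subclass $\mathcal L_{0'}$, the weak similarity between $\mathcal L_1$ and the normalized class $\hat{\mathcal L}_1$, and the caveat that weak similarity transports classifications only with respect to equivalence groupoids---all of which you have reproduced and organized into an explicit argument.
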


For the aforementioned reasons, in what follows we focus on the regular subclass~$\mathcal L_1$.
For this purpose, at first we need to consider the imaged counterpart~$\hat{\mathcal L}_1$ of~$\mathcal L_1$.

\subsection{Preliminary group analysis of the imaged regular subclass}

We carry out the group classification of the class~$\hat{\mathcal L}_1$
using the algebraic method within the framework of the infinitesimal approach~\cite{Ovsiannikov1982}.
Let~$\hat{\mathcal L}_1^\kappa$ be an equation in the class~$\hat{\mathcal L}_1$
with a fixed value of the arbitrary-element tuple~$\kappa=(A^1,A^{2})$.
The infinitesimal invariance criterion implies that
a vector field $Q=\tau \p_t+\xi \p_x+\eta \p_u$ with components being smooth functions of~$(t,x,u)$ is
the infinitesimal generator of a one-parameter point symmetry group of this equation if
\[
Q^{(2)}\left(u_t+uu_x-A^2u_{xx}-A^1u_x\right)\big|_{\hat{\mathcal L}_1^\kappa}=0,
\]
where~$Q^{(2)}=Q+\sum_{0<|\alpha|\leqslant2}\eta^{\alpha}\p_{u_{\alpha}}$ is the second prolongation of~$Q$,
$\alpha=(\alpha_1,\alpha_2)\in\mathbb N_0^2$ is a multiindex,
$|\alpha|=\alpha_1+\alpha_2$, $u_{\alpha}=\p^{|\alpha|}u/\p t^{\alpha_1}\p x^{\alpha_2}$,
and the coefficients~$\eta^\alpha$ are defined by the prolongation formula~\cite{Olver1993},
\[
\eta^\alpha=\mathrm D_t^{\alpha_1}\mathrm D_x^{\alpha_2}\left(\eta-\tau u_t-\xi u_x\right)+\tau u_{\alpha+(1,0)}+\xi u_{\alpha+(0,1)}
\]
with $\mathrm D_t$ and~$\mathrm D_x$ being the total derivative operators with respect to~$t$ and~$x$, respectively.
Expanding the above condition produces
\begin{gather}\label{Qu:InfinCrit}
\eta^{(1,0)}+\eta u_x+u\eta^{(0,1)}=(\tau A^2_t+\xi A^2_x)u_{xx} +(\tau A^1_t+\xi A^1_x)u_x + A^2\eta^{(0,2)}+ A^1\eta^{(0,1)},
\end{gather}
whenever $u_t=A^2u_{xx}+A^1u_x-uu_x$ holds. Since the class~$\hat{\mathcal L}_1$ is normalized,
we can impose additional restrictions on the vector field~$Q$ in view of Corollary~\ref{Qu:theorem:EquivAlgebra},
\[
\tau=\tau(t),\quad\xi=(\tau_t+\alpha)x+\chi,\quad\eta=\alpha u+\beta
\]
with constants~$\alpha$ and~$\beta$ and with smooth functions~$\tau$ and~$\chi$ of~$t$,
cf.~\cite{BihloCardosoPopovych2012,KurujyibwamiBasarabHorwathPopovych2018}.
On plugging the expressions for~$\tau$, $\xi$, $\eta$ and~$u_t$ in~\eqref{Qu:InfinCrit},
we split successively the resulting equation with respect to~$u_{xx}$ and~$u_x$ to get the system of classifying equations
\begin{subequations}\label{Qu:ClassEqs}
\begin{gather}
\tau A^2_t+((\tau_t+\alpha)x+\chi)A^2_x=(\tau_t+2\alpha)A^2,\label{Qu:ClassEqsA}\\
\tau A^1_t+((\tau_t+\alpha)x+\chi)A^1_x=\alpha A^1-\tau_{tt}x-\chi_t+\beta\label{Qu:ClassEqsB}.
\end{gather}
\end{subequations}
Splitting the system~\eqref{Qu:ClassEqs} with respect to the arbitrary elements~$A^1$ and~$A^2$
as well as their derivatives, we can show
that the kernel Lie invariance algebra of equations of the class~$\hat{\mathcal L}_1$
is the trivial zero algebra.
The preliminary description of the Lie invariance algebras of equations in the class~$\hat{\mathcal L}_1$ is as follows.

\begin{proposition}\label{Qu:theorem:PropAppropAlg}
The maximal Lie invariance algebra~$\mathfrak g_\kappa$ of the equation~$\hat{\mathcal L}_1^\kappa$
is spanned by vector fields of the form $Q=D(\tau)+\alpha S^1+\beta S^0+ P(\chi)$,
where the constants~$\alpha$ and~$\beta$ and the smooth functions~$\tau$ and~$\chi$ of~$t$
satisfy the classifying equations~\eqref{Qu:ClassEqs}, and
\[
D(\tau)=\tau\p_t+\tau_tx\p_x,\quad
S^1=x\p_x+u\p_u,\quad
S^0=\p_u,\quad
P(\chi)=\chi\p_x.
\]
\end{proposition}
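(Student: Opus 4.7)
The plan is to combine the normalization of the class~$\hat{\mathcal L}_1$ with a direct application of the infinitesimal invariance criterion~\eqref{Qu:InfinCrit}. First, I would justify restricting the generator $Q=\tau\p_t+\xi\p_x+\eta\p_u$ to the specific form $Q=D(\tau)+\alpha S^1+\beta S^0+P(\chi)$. Since $\hat{\mathcal L}_1$ is normalized in the usual sense by Proposition~\ref{Qu:prop1}, and its equivalence algebra $\hat{\mathfrak g}^\sim_1$ is spanned by $\hat D(\tau)$, $\hat S^0$, $\hat S^1$, $\hat P(\chi)$ by Corollary~\ref{Qu:theorem:EquivAlgebra}, every Lie-symmetry generator of~$\hat{\mathcal L}_1^\kappa$ is necessarily the $(t,x,u)$-projection of some element of~$\hat{\mathfrak g}^\sim_1$. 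Projecting $\hat D(\tau)+\alpha\hat S^0+\beta\hat S^1+\hat P(\chi)$ onto the base space gives $\tau\p_t+((\tau_t+\alpha)x+\chi)\p_x+(\alpha u+\beta)\p_u$, which is precisely the decomposition $D(\tau)+\alpha S^1+\beta S^0+P(\chi)$ claimed in the proposition; in particular this fixes $\tau=\tau(t)$, $\xi=(\tau_t+\alpha)x+\chi$, $\eta=\alpha u+\beta$ with constants $\alpha,\beta$ and smooth $\tau,\chi$ of~$t$.

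Next, I would plug this ansatz into~\eqref{Qu:InfinCrit}. Because $\tau,\chi$ depend only on~$t$, $\alpha,\beta$ are constants, and $\eta$ is affine in~$u$, the prolongation formula collapses the needed coefficients to $\eta^{(1,0)}=(\alpha-\tau_t)u_t-(\tau_{tt}x+\chi_t)u_x$, $\eta^{(0,1)}=-\tau_tu_x$ and $\eta^{(0,2)}=-(2\tau_t+\alpha)u_{xx}$. Substituting $u_t=A^2u_{xx}+A^1u_x-uu_x$ from the equation to eliminate~$u_t$, the invariance condition turns into a polynomial identity in~$u_{xx}$ and~$u_x$ whose coefficients must vanish separately. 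The $u_{xx}$-coefficient reduces exactly to~\eqref{Qu:ClassEqsA}, the $u_x$-coefficient to~\eqref{Qu:ClassEqsB}, and the residual $u$-free part cancels identically (the $uu_x$-contributions match automatically because of the symmetry between $\eta^{(1,0)}$ and $u\eta^{(0,1)}$). Conversely, every~$Q$ of the stated form with $\tau,\chi,\alpha,\beta$ satisfying~\eqref{Qu:ClassEqs} lies in~$\mathfrak g_\kappa$, which closes the argument.

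The main obstacle is conceptual rather than computational: it lies in the first step, namely in invoking the normalization of~$\hat{\mathcal L}_1$ to bypass the direct splitting of~$\tau$, $\xi$, $\eta$ with respect to the jet variables of~$u$ that would otherwise be required to establish the above form from scratch. As an equivalent route one may instead impose tangency of $\hat D(\tau)+\alpha\hat S^0+\beta\hat S^1+\hat P(\chi)$ to the surface $\{A^1=A^1(t,x),\,A^2=A^2(t,x)\}$ in the joint space; this tangency condition reproduces \eqref{Qu:ClassEqsA}--\eqref{Qu:ClassEqsB} directly, bypassing the prolongation calculation altogether. Either approach makes the verification essentially mechanical once the normalization-based ansatz is in place.
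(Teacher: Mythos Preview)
Your proposal is correct and follows essentially the same route as the paper: the paper's argument (given in the paragraphs immediately preceding the proposition) likewise invokes the normalization of~$\hat{\mathcal L}_1$ and Corollary~\ref{Qu:theorem:EquivAlgebra} to restrict~$Q$ to the form $\tau=\tau(t)$, $\xi=(\tau_t+\alpha)x+\chi$, $\eta=\alpha u+\beta$, then substitutes into~\eqref{Qu:InfinCrit} and splits with respect to~$u_{xx}$ and~$u_x$ to obtain~\eqref{Qu:ClassEqs}. Your explicit prolongation coefficients and the remark on the tangency reformulation are accurate additional detail, but do not depart from the paper's line of reasoning.
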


To complete the solution of the group classification problem for the class~$\hat{\mathcal L}_1$,
we have to solve the classifying equations~\eqref{Qu:ClassEqs} up to $\hat {G}^\sim_1$-equivalence,
which is done using an advanced version of the algebraic method.
This method was invented by Sophus Lie in the course of classifying Lie symmetries of second-order ODEs.
Nevertheless, it was not fully understood back then
and forgotten until the 1990s, when it implicitly reemerged in~\cite{GagnonWinternitz1993,GazeauWinternitz1992}.
Later the practitioners appreciated the method and started to use it intensively in different disguises,
cf. \cite{BH,gung04a,huan09a,lahno,meleshko,Nikitin,popo2004a,zhda99Ay},
in particular, to carry out the preliminary group classification~\cite{Akhatov,card11Ay,ibra2}.
The theoretical background of the method was developed in~\cite{popo06a},
and it was realized that the method works best for normalized classes of differential equations
\cite{BihloCardosoPopovych2012,KurujyibwamiBasarabHorwathPopovych2018,OpanasenkoBihloPopovych2017,PopovychKunzingerEshraghi2010}.
Nowadays it has been applied to a wide variety of classes of differential equations.

\subsection{Properties of appropriate subalgebras of the imaged regular subclass}

Consider the span~$\hat{\mathfrak g}$ of the vector fields~$D(\tau)$, $P(\chi)$, $S^0$ and~$S^1$,
where $\tau$ and~$\chi$ run through the set of smooth functions of~$t$,
$\hat{\mathfrak g}=\langle D(\tau), P(\chi), S^0, S^1\rangle$.
It is endowed with the structure of Lie algebra with the Lie bracket of vector fields as the algebra multiplication.
The independent nonzero commutation relations between elements of~$\mathfrak g$ are exhausted by
\begin{gather*}
[D(\tau^1),D(\tau^2)]=D(\tau^1\tau^2_t-\tau^1_t\tau^2),\quad
[D(\tau),P(\chi)]=P(\tau\chi_t-\tau_t\chi),\\
[S^0,S^1]=S^0,\quad[P(\chi),S^1]=P(\chi).
\end{gather*}

Let~$\pi$ be the projection map from the joint space of the variables and the arbitrary elements of the class~$\hat{\mathcal L}_1$
to the space of the variables only, $\pi(t,x,u,A^1,A^2)=(t,x,u)$.
This map induces the well-defined pushforward~$\pi_*$ for related point transformations and vector fields
such that~$\pi_* \hat{\mathfrak g}^\sim_1=\hat{\mathfrak g}$
is the Lie algebra of the Lie (pseudo)group~$\pi_*\hat G^\sim_1$.
Recall that~$\hat{\mathfrak g}^\sim_1$ is the equivalence algebra of the class~$\hat{\mathcal L}_1$.
We call a subalgebra~$\mathfrak a\subseteq\hat{\mathfrak g}$ \emph{appropriate} for the class~$\hat{\mathcal L}_1$
if there exists a value of the arbitrary-element tuple~$\kappa$ such that $\mathfrak a=\hat{\mathfrak g}_\kappa$,
cf.~\cite{BihloCardosoPopovych2012,BihloPopovych2016,OpanasenkoBihloPopovych2017}.
It is natural to call a subalgebra~$\mathfrak s$ of~$\hat{\mathfrak g}^\sim_1$ \emph{appropriate} for the class~$\hat{\mathcal L}_1$ if
$\pi_*\mathfrak s$ is an appropriate subalgebra of~$\hat{\mathfrak g}$.
The pushforward~$\pi_*$ establishes a one-to-one correspondence between the appropriate subalgebras
of~$\hat{\mathfrak g}^\sim_1$ and of~$\hat{\mathfrak g}$.
Furthermore, there are well-defined adjoint actions of the Lie group~$\pi_*\hat G^\sim_1$
on the Lie algebra~$\hat{\mathfrak g}$ and, therefore, on the set of its subalgebras,
which are consistent with the actions of~$\hat G^\sim_1$ on the algebra~$\hat{\mathfrak g}^\sim_1$
and on the set of its subalgebras, respectively.
In particular, the above actions preserve the respective sets of appropriate subalgebras
of~$\hat{\mathfrak g}$ and of~$\hat{\mathfrak g}^\sim_1$.
As a result, the classifications of appropriate subalgebras
of~$\hat{\mathfrak g}$ and of~$\hat{\mathfrak g}^\sim_1$ are equivalent,
and we can consider any of them.

We refer to the transformations~$\mathcal D(T)$, $\mathcal P(X^0)$,
$\mathcal S^1(U^1)$ and~$\mathcal S^0(U^0)$ in~$\pi_*\hat G^\sim_1$,
obtained by restricting all but one of the parameter
functions $T$, $X^0$ and constants~$U^0$ and~$U^1$ in~\eqref{Qu:eq:hatLEquivTrans}
to the values corresponding to the identity transformation,
as the elementary equivalence transformations of the class~$\hat{\mathcal L}_1$.
The complete list of nontrivial adjoint actions of the elementary transformations
on the elements spanning~$\hat{\mathfrak g}$ is
\begin{gather*}
\mathcal D_*(T) D(\tau)=\tilde D(T_t\tau),\quad
\mathcal D_*(T) P(\chi)=\tilde P(T_t\chi),\\
\mathcal S^1_*(U^1) S^0=U^1 \tilde S^0,\quad
\mathcal S^1_*(U^1) P(\chi)=\tilde P(U^1 \chi),\quad
\mathcal S^0_*(U^0) S^1=\tilde S^1-U^0\tilde S^0,\\
\mathcal P_*(X^0) D(\tau)=\tilde D(\tau)-\tilde P(X^0\tau_t-X^0_t\tau),\quad
\mathcal P_*(X^0) S^1=\tilde S^1-\tilde P(X^0).
\end{gather*}
Here the tildes over the right-hand side vector field indicate that these vector fields are expressed
using the variables with tildes, which also includes substituting $t=\tilde T(\tilde t)$ for $t$,
where $\tilde T$ is the inverse of the function~$T$.

Thus, the problem of group classification of the class~$\hat {\mathcal L}_1$ has boiled down to the classification
of appropriate subalgebras of~$\hat{\mathfrak g}$ under the adjoint action of~$\pi_*\hat G^\sim_1$ thereon.
For an efficient classification we first need to show some properties of appropriate subalgebras,
which is done in a way similar to~\cite{BihloPopovych2016,OpanasenkoBihloPopovych2017}.
Let $k_1=k_1(\kappa)=\dim (\hat{\mathfrak g}_\kappa\cap\langle S^1, S^0, P(\chi) \rangle)$ and~$\varpi\colon(t,x,u)\to t$
be the projection map singling out the first component.
Note that $k_1$ is the $\pi_*\hat G^\sim_1$-invariant integer and
the pushforward~$\varpi_*$ is well defined on~$\hat{\mathfrak g}$,
with $\varpi_*\hat{\mathfrak g}=\{\tau\p_t\}$,
where~$\tau$ runs through the set of smooth functions of~$t$.

Properties of appropriate subalgebras are described in the following proposition.

\begin{proposition}\label{Qu:theorem:PropertiesAppropriateSub}
For any equation~$\hat{\mathcal L}_1^\kappa$ in the class~$\hat{\mathcal L}_1$, we have

1. $k_1=0\bmod\pi_*\hat G^\sim_1$;

2.\ $\varpi_* \hat{\mathfrak g}_\kappa$ is a Lie algebra with $k_2:=\dim \varpi_* \hat{\mathfrak g}_\kappa\leqslant 2$,
and
\[
\varpi_* \hat{\mathfrak g}_\kappa\in\big\{\{0\},\langle \p_t\rangle,\langle\p_t,t\p_t\rangle\big\}\bmod\varpi_*\pi_*\hat G^\sim_1.
\]
\end{proposition}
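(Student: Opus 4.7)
My plan is to establish claim~1 first by direct use of the classifying equations~\eqref{Qu:ClassEqs}, and then to derive claim~2 from it together with the classification of finite-dimensional Lie subalgebras of vector fields on~$\mathbb R$.

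For claim~1 I would take an arbitrary $Q = \alpha S^1 + \beta S^0 + P(\chi) \in \hat{\mathfrak g}_\kappa$, corresponding to $\tau = 0$, and substitute into~\eqref{Qu:ClassEqsB}, which becomes $(\alpha x + \chi)A^1_x = \alpha A^1 - \chi_t + \beta$. A single differentiation in~$x$ yields $(\alpha x + \chi)A^1_{xx} = 0$, and the defining constraint $A^1_{xx} \not\equiv 0$ of $\hat{\mathcal L}_1$ forces the affine function $\alpha x + \chi(t)$ to vanish identically, so $\alpha = 0$ and $\chi \equiv 0$; the remaining equation then gives $\beta = 0$. Thus $\hat{\mathfrak g}_\kappa \cap \langle S^1, S^0, P(\chi)\rangle = \{0\}$, which is exactly $k_1 = 0$.

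For claim~2 I would first observe that $\varpi_*$ is a Lie algebra morphism on $\hat{\mathfrak g}$: among the commutation relations, only $[D(\tau^1), D(\tau^2)] = D(\tau^1\tau^2_t - \tau^1_t\tau^2)$ survives the projection, and it matches $[\tau^1\p_t, \tau^2\p_t]$ in $\mathcal X(\mathbb R)$. Hence $\varpi_*\hat{\mathfrak g}_\kappa$ is a Lie subalgebra of $\mathcal X(\mathbb R)$, and by claim~1 the restriction $\varpi_*\colon \hat{\mathfrak g}_\kappa \to \varpi_*\hat{\mathfrak g}_\kappa$ is injective. By Lie's classification, every finite-dimensional Lie subalgebra of $\mathcal X(\mathbb R)$ is diffeomorphism-conjugate to one of $\{0\}$, $\langle\p_t\rangle$, $\langle\p_t, t\p_t\rangle$, or $\langle\p_t, t\p_t, t^2\p_t\rangle \cong \mathrm{sl}(2,\mathbb R)$. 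The required diffeomorphisms are realized within $\varpi_*\pi_*\hat G^\sim_1$ since $\mathcal D_*(T)D(\tau) = \tilde D(T_t\tau)$ acts on $\varpi_*\hat{\mathfrak g}_\kappa$ as the pushforward by the arbitrary diffeomorphism $\tilde t = T(t)$. The abelian two-dimensional case is automatically excluded, since $[\tau^1\p_t, \tau^2\p_t] = 0$ forces $\tau^1, \tau^2$ to be proportional.

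The heart of the proof is excluding the $\mathrm{sl}(2, \mathbb R)$ case, which I expect to be the main obstacle. After a $\mathcal D_*(T)$-normalization I would have three symmetries with $\tau^i \in \{1, t, t^2\}$ and uniquely determined constants $\alpha^i, \beta^i$ and functions $\chi^i(t)$ (by claim~1). Differentiating~\eqref{Qu:ClassEqsB} once in~$x$ for each $i$ and using the $\tau^1, \tau^2$ equations to eliminate $A^1_{tx}$ and $A^1_x$, the $\tau^3$ equation collapses to
\[
A^1_{xx}\bigl\{(t^2\alpha^1 - 2t\alpha^2 + \alpha^3)x + (t^2\chi^1 - 2t\chi^2 + \chi^3)\bigr\} = -2.
\]
Since $A^1_{xx}$ is smooth in~$x$, the bracketed affine function of $x$ cannot have a nonzero slope, so $t^2\alpha^1 - 2t\alpha^2 + \alpha^3 = 0$ identically, whence $\alpha^1 = \alpha^2 = \alpha^3 = 0$. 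Then $A^1_{xx}$ depends only on $t$, so $A^1 = -x^2/b(t) + \tilde C_1(t)x + \tilde C_0(t)$. Feeding this back into~\eqref{Qu:ClassEqsB} for $\tau = 1$ and splitting in~$x$ shows that $b$ is a nonzero constant, after which the $\tau = t$ equation produces an unbalanced $x^2$-term on its left-hand side (coefficient $-2/b$, with the right-hand side being $x$-free), yielding the desired contradiction.
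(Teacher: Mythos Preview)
Your argument for claim~1 is essentially the paper's.  For claim~2 you and the paper both reduce to Lie's classification of finite-dimensional subalgebras of vector fields on the line, but the exclusion of the $\mathfrak{sl}_2(\mathbb R)$ case is handled quite differently.  The paper argues indirectly through external results: an equation in~$\hat{\mathcal L}_1$ with an $\mathfrak{sl}_2$ symmetry would be $G^\sim_{\mathcal B}$-equivalent to an equation in the superclass~$\mathcal B$ with $(C,A^1)=(1,0)$ admitting the same algebra; Theorem~19 of~\cite{OpanasenkoBihloPopovych2017} lists all such equations (up to equivalence, $(A^2,A^0,B)=(a_2,0,bx^{-3})$ or $(1,0,0)$), and Theorem~4 ibid.\ shows that neither is equivalent to an equation with $A^1_{xx}\ne0$.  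Your route is direct and self-contained: combining the $x$-derivatives of~\eqref{Qu:ClassEqsB} for $\tau=1,t,t^2$ yields $A^1_{xx}\{(t^2\alpha^1-2t\alpha^2+\alpha^3)x+(t^2\chi^1-2t\chi^2+\chi^3)\}=-2$, and the contradiction is then purely algebraic.  This buys independence from the companion paper at the price of a short computation.

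Two refinements are worth making.  The step ``$A^1_{xx}$ is smooth in~$x$, so the bracketed affine factor cannot have nonzero slope'' tacitly assumes a global $x$-domain; locally $A^1_{xx}=-2/(ax+b)$ is smooth away from its pole.  A domain-free fix is to differentiate the relation $A^1_x=(Px+Q)A^1_{xx}$ (your equation coming from $\tau=1,t$) once in~$x$ and compare with $A^1_{xx}=-2/(ax+b)$, which forces $a\equiv0$ immediately.  Separately, invoking Lie's classification presupposes that $\varpi_*\hat{\mathfrak g}_\kappa$ is finite-dimensional, which you use but do not prove; the paper covers this point by citing Lemmas~15 and~18 of~\cite{OpanasenkoBihloPopovych2017}.
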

\begin{proof}

1. Let $\hat{\mathcal L}_1^\kappa$ admit a Lie symmetry~$\alpha S^1+\beta S^0+P(\chi)$,
where at least one of the parameters~$\alpha$, $\beta$ or $\chi$ does not vanish.
Then the equation~\eqref{Qu:ClassEqsB} yields that $\alpha$ or $\chi$ does not vanish.
Differentiating~\eqref{Qu:ClassEqsB} once with respect to~$x$, we derive that~$A^1_{xx}=0$,
which contradicts the auxiliary inequality~$A^1_{xx}\neq0$.

2. The assertion follows from Lie's classification of realizations of Lie algebras on the real line
in view of the facts that
$\varpi_*\hat{\mathfrak g}_\kappa$~is a finite-dimensional Lie algebra
for any arbitrary-element tuple~$\kappa$, cf.~\cite[Lemmas~15 and~18]{OpanasenkoBihloPopovych2017}.
Indeed, $\varpi_*\hat{\mathfrak g}_\kappa$ is a finite-dimensional subalgebra of $\varpi_*\hat{\mathfrak g}=\{\tau\p_t\}$,
and since $\varpi_*\pi_*\hat G^\sim_1$ coincides with the (pseudo)group of local diffeomorphisms on the $t$-line,
$\varpi_*\hat{\mathfrak g}_\kappa$ is $\varpi_*\pi_*\hat G^\sim_1$-equivalent to a subalgebra in $\{0,\langle \p_t\rangle,\langle\p_t,t\p_t\rangle,\langle\p_t,t\p_t,t^2\p_t\rangle\}$.

We show by contradiction that $\dim\varpi_*\hat{\mathfrak g}_\kappa<3$.
Suppose that an equation~$\hat{\mathcal L}^\kappa_1$ admits a three-dimensional invariance algebra isomorphic to~$\mathfrak{sl}_2(\mathbb R)$.
Then so does an equation~$\mathcal E$ in~$\mathcal B$ with $(C,A^1)=(1,0)$
that is $G^\sim_{\mathcal B}$-equivalent to~$\hat{\mathcal L}^\kappa_1$.
According to~\cite[Theorem~19]{OpanasenkoBihloPopovych2017},
the arbitrary-element tuple $(A^2,A^0,B)$ associated with~$\mathcal E$ is equal, up to~$G^\sim_{\mathcal B}$-equivalence, to
either~$(a_2,0,bx^{-3})$ or $(1,0,0)$.
Here~$a_2$ and~$b$ are arbitrary constants with $a_2\neq0$.
But in both the cases the equation~$\mathcal E$ is $G^\sim_{\mathcal B}$-equivalent to no equation in the class~$\hat{\mathcal L}_1$.
This easily follows from~\cite[Theorem~4]{OpanasenkoBihloPopovych2017}
since if an equation in~$\mathcal B$ with $A^1=A^0=0$ and $C=1$ is equivalent to an equation in~$\hat{\mathcal L}$,
then for the latter equation one necessarily has the constraint $A^1_{xx}=0$.
\end{proof}

\begin{corollary}
$\dim\hat{\mathfrak g}_\kappa\leqslant2$ for any $\kappa$.
\end{corollary}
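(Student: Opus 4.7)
The plan is to combine Proposition~\ref{Qu:theorem:PropAppropAlg} with Proposition~\ref{Qu:theorem:PropertiesAppropriateSub} via the rank--nullity theorem applied to the projection~$\varpi_*$.

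By Proposition~\ref{Qu:theorem:PropAppropAlg}, every element of~$\hat{\mathfrak g}_\kappa$ has the form $D(\tau)+\alpha S^1+\beta S^0+P(\chi)$. Since the pushforward~$\varpi_*$ acts on such a vector field by $D(\tau)+\alpha S^1+\beta S^0+P(\chi)\mapsto \tau\p_t$, the restriction of~$\varpi_*$ to the Lie algebra~$\hat{\mathfrak g}_\kappa$ is a linear map whose kernel is precisely
\[
\hat{\mathfrak g}_\kappa\cap\langle S^1,S^0,P(\chi)\rangle,
\]
i.e., the subalgebra of dimension~$k_1$ from the statement of Proposition~\ref{Qu:theorem:PropertiesAppropriateSub}, and whose image is the algebra $\varpi_*\hat{\mathfrak g}_\kappa$ of dimension~$k_2$. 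The rank--nullity theorem therefore yields $\dim\hat{\mathfrak g}_\kappa=k_1+k_2$.

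It then suffices to recall the two bounds provided by Proposition~\ref{Qu:theorem:PropertiesAppropriateSub}: $k_2\leqslant 2$ is stated directly, while $k_1$ is a $\pi_*\hat G^\sim_1$-invariant nonnegative integer that equals zero modulo $\pi_*\hat G^\sim_1$, hence $k_1=0$ unconditionally. Substituting these values gives $\dim\hat{\mathfrak g}_\kappa\leqslant 0+2=2$, as required. No genuine obstacle arises here; the only point worth being a little careful about is that $k_1$ is not just zero \emph{up to equivalence} but identically zero, which is guaranteed by its $\pi_*\hat G^\sim_1$-invariance (equivalently, it follows from the contradiction argument in the proof of item~1 of Proposition~\ref{Qu:theorem:PropertiesAppropriateSub}, in which the constraint $A^1_{xx}\neq 0$ rules out any nonzero element of $\hat{\mathfrak g}_\kappa\cap\langle S^1,S^0,P(\chi)\rangle$ irrespective of the representative chosen for~$\kappa$).
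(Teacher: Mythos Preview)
Your proof is correct and is exactly the argument the paper intends: the corollary is stated without proof precisely because it follows immediately from Proposition~\ref{Qu:theorem:PropertiesAppropriateSub} via the rank--nullity decomposition $\dim\hat{\mathfrak g}_\kappa=k_1+k_2$ along~$\varpi_*$, together with the observation (which you make explicit) that the $\pi_*\hat G^\sim_1$-invariance of~$k_1$ upgrades ``$k_1=0\bmod\pi_*\hat G^\sim_1$'' to $k_1=0$ identically.
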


\subsection{Group classification of the imaged regular subclass}

\begin{theorem}\label{thm:GroupClassificationGenBurgersKdVEqs}
A complete list of $\hat G^\sim_1$-inequivalent
Lie-symmetry extensions in the class~$\hat{\mathcal L}_1$ is exhausted by the cases given in Table~\ref{Qu:tab:CompleteGroupClassificationBurgersEquations}.
\end{theorem}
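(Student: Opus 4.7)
The plan is to exploit the normalization of $\hat{\mathcal L}_1$ and the algebraic-method machinery already set up: by Proposition~\ref{Qu:prop1}, every point symmetry of any $\hat{\mathcal L}_1^\kappa$ arises from $\hat{\mathfrak g}^\sim_1$, so by Proposition~\ref{Qu:theorem:PropAppropAlg} the maximal Lie invariance algebra $\mathfrak g_\kappa$ sits inside $\hat{\mathfrak g}=\langle D(\tau),P(\chi),S^0,S^1\rangle$ and consists exactly of the vector fields satisfying the classifying equations~\eqref{Qu:ClassEqs}. Hence the group classification reduces to enumerating, up to $\pi_*\hat G^\sim_1$-equivalence, the appropriate subalgebras $\mathfrak a\subseteq\hat{\mathfrak g}$ and, for each such $\mathfrak a$, extracting the $\hat G^\sim_1$-inequivalent pairs $(A^1,A^2)$ for which $\mathfrak a=\mathfrak g_\kappa$.

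First I would use Proposition~\ref{Qu:theorem:PropertiesAppropriateSub} to split the classification into three cases by $k_2:=\dim\varpi_*\mathfrak g_\kappa\in\{0,1,2\}$ (combined with $k_1=0\bmod\pi_*\hat G^\sim_1$ and $\dim\mathfrak g_\kappa\leqslant2$). The case $k_2=0$ yields $\mathfrak g_\kappa=\{0\}$, the generic no-extension row. For $k_2=1$, normalize the unique (up to scaling) generator to
\[
Q_1=D(1)+\alpha S^1+\beta S^0+P(\chi)=\p_t+(\alpha x+\chi(t))\p_x+(\alpha u+\beta)\p_u
\]
by using $\mathcal D_*(T)$ to rescale $\tau$ to $1$. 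The classifying equations~\eqref{Qu:ClassEqs} then reduce to the pair of linear first-order PDEs
\begin{gather*}
A^2_t+(\alpha x+\chi)A^2_x=2\alpha A^2,\\
A^1_t+(\alpha x+\chi)A^1_x=\alpha A^1-\chi_t+\beta,
\end{gather*}
which I solve by the method of characteristics in the two sub-subcases $\alpha=0$ and $\alpha\ne0$. After that I apply the residual equivalences $\mathcal P_*(X^0)$ (to absorb $\chi$ into a shift of $x$ and kill or normalize it), $\mathcal S^0_*(U^0)$ and $\mathcal S^1_*(U^1)$ (to normalize $\beta$ and rescale $A^2$), and time translations/rescalings still allowed by the stabilizer of $Q_1$; this yields the canonical $(A^1,A^2)$-forms with a single arbitrary function of one invariant variable.

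For $k_2=2$, I would use $\varpi_*\pi_*\hat G^\sim_1$-equivalence to send $\varpi_*\mathfrak g_\kappa$ to $\langle\p_t,t\p_t\rangle$, so that
\[
Q_1=D(1)+\alpha_1S^1+\beta_1S^0+P(\chi_1),\quad Q_2=D(t)+\alpha_2S^1+\beta_2S^0+P(\chi_2).
\]
Compute $[Q_1,Q_2]$ using the commutators listed after Corollary~\ref{Qu:theorem:EquivAlgebra} and demand that it lies in the two-dimensional span; this forces linear relations among the parameters, and combined with the residual equivalences $\mathcal P_*(X^0)$, $\mathcal S^0_*(U^0)$, $\mathcal S^1_*(U^1)$ and the stabilizer of $\langle\p_t,t\p_t\rangle$ (i.e.\ $T$ affine-fractional in $t$), it reduces the generators to a short list of canonical pairs. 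Solving the classifying equations~\eqref{Qu:ClassEqs} simultaneously for each canonical pair pins down $A^2$ and $A^1$ as functions with at most one or two free constants, which are further normalized or kept as essential parameters.

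The main obstacle will be bookkeeping of the residual equivalences: after fixing a generator by $\mathcal D_*(T)$, the subgroup of $\hat G^\sim_1$ preserving its form must be carefully identified, and the remaining $(A^1,A^2)$-parameters must be pushed to a canonical normal form without double-counting or missing orbits. In the two-dimensional case, verifying that every candidate Lie extension actually occurs (as opposed to being a spurious algebraic possibility) requires checking that the resulting $(A^1,A^2)$ satisfies the regularity constraint $A^1_{xx}\ne0$ defining $\hat{\mathcal L}_1$, and that we have not accidentally produced an algebra of dimension larger than~$2$, which would contradict the corollary to Proposition~\ref{Qu:theorem:PropertiesAppropriateSub} and signal a miscalculation. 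Once the list is assembled and the inequivalence between distinct rows is verified by comparing $\hat G^\sim_1$-invariants (such as the Jordan form of the linear action on $(x,u)$ induced by $Q_1$ at $t=0$), the resulting table is the one cited as Table~\ref{Qu:tab:CompleteGroupClassificationBurgersEquations}.
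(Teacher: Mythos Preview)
Your proposal follows essentially the same route as the paper: split by $k_2=\dim\varpi_*\mathfrak g_\kappa\in\{0,1,2\}$ via Proposition~\ref{Qu:theorem:PropertiesAppropriateSub}, normalize the generators with $\pi_*\hat G^\sim_1$, impose closure under the Lie bracket in the two-dimensional case (the paper gets $[Q^1,Q^2]=Q^1$, hence $\alpha_1=0$, $\chi^2_t=0$, $\beta_1(\alpha_2-1)=0$), solve the classifying equations~\eqref{Qu:ClassEqs}, and gauge residual parameters using the stabilizer of the appropriate subalgebra (cf.\ Appendix~\ref{Qu:App}). One small slip: the stabilizer of $\langle\p_t,t\p_t\rangle$ in the diffeomorphism group of the $t$-line is the \emph{affine} group ($T=at+b$, i.e.\ $T_{tt}=0$), not the projective (affine-fractional) group, which stabilizes $\langle\p_t,t\p_t,t^2\p_t\rangle$; the paper indeed uses only affine~$T$ when gauging the two-dimensional cases.
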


\begin{proof}
The improper appropriate subalgebra~$\mathfrak a=\{0\}$ corresponds to the general case of equations in the class~$\hat{\mathcal L}_1$
without Lie symmetries.

According to Proposition~\ref{Qu:theorem:PropertiesAppropriateSub}, a basis of each nontrivial appropriate subalgebra~$\mathfrak a$
of the algeb\-ra~$\hat{\mathfrak g}$ consists of~$k_2$ vector fields 
\[Q^i=D(\tau^i)+\alpha_i S^1+\beta_i S^0+P(\chi^i),\quad i=1,\dots,k_2,\]
with linearly independent~$\tau$'s, and $k_2\in\{1,2\}$.
We carry out the proof by separately considering both the cases for values of~$k_2$.
Taking into account the closedness under the Lie bracket
and the equivalence relation on the set of appropriate subalgebras,
we find a canonical representative for each equivalence class of appropriate subalgebras.
The successive solution of the classifying equations produces expressions for arbitrary elements of equations admitting
nontrivial Lie invariance algebras. The arbitrary elements are to be simplified using equivalence transformations
stabilizing the corresponding appropriate subalgebra in the sense below.

\begin{table}[!t]
\caption{Complete group classification of the class~$\hat{\mathcal L}_1$ up to $\hat G^\sim_1$-equivalence.
\label{Qu:tab:CompleteGroupClassificationBurgersEquations}}
\begin{center}\newcounter{tbn}\setcounter{tbn}{-1}
\def\arraystretch{1.5} 
\begin{tabular}{|c|l|l|l|}
\hline
no. &\hfil $A^2$ &\hfil $A^1$& \hfil Basis of~$\hat{\mathfrak g}_\kappa$\\
\hline
\refstepcounter{tbn}\thetbn\label{Qu:1}  & $A^2(t,x)$                       & $A^1(t,x)$                         & ---\\
\refstepcounter{tbn}\thetbn\label{Qu:2}  & $\phi(x)$                        & $\psi(x)$                          & $D(1)$\\
\refstepcounter{tbn}\thetbn\label{Qu:2b} & $\phi(x)$                        & $\psi(x)+t$                        & $D(1)+S^0$\\
\refstepcounter{tbn}\thetbn\label{Qu:3}  & ${\rm e}^{-2t}\phi(x{\rm e}^t)$  & ${\rm e}^{-t}\psi(x{\rm e}^t)$     & $D(1)-S^1$\\
\refstepcounter{tbn}\thetbn\label{Qu:5}  & $x|x|^{\frac{\alpha}{1+\alpha}}$ & $c_1|x|^{\frac{\alpha}{1+\alpha}}$ & $D(1)$, $D(t)+\alpha S^1$\\
\refstepcounter{tbn}\thetbn\label{Qu:6}  & $c_2x$                           & $\ln|x|$                           & $D(1)$, $D(t)+S^0$\\
\refstepcounter{tbn}\thetbn\label{Qu:7}  & ${\rm e}^x$                      & $c_1{\rm e}^x$                     & $D(1)$, $D(t)-P(1)-S^1$\\
\refstepcounter{tbn}\thetbn\label{Qu:4}  & $c_2 x\sqrt{|x|}$                & $c_1\sqrt{|x|}+t$                  & $D(1)+S^0$, $D(t)+S^1$\\ \hline
\end{tabular}
\end{center}
\footnotesize{
The functions $\phi$ and~$\psi$ are arbitrary sufficiently smooth functions of their argument with $\phi\ne0$ and $\psi_{xx}\ne0$;\\
$\alpha$, $c_1$ and~$c_2$ are arbitrary constants with $\alpha\notin\{-1,0\}$, $c_1\ne0$ and $c_2\ne0$.
In Case~\ref{Qu:4}, $c_2>0\bmod G^\sim_1$.
}
\end{table}

\medskip\par\noindent
$\mathbf {k_2=1.}$
An appropriate subalgebra is spanned by a vector field~$Q^1=D(\tau^1)+P(\chi^1)+\alpha_1S^1+\beta_1S^0$,
where we can set~$\tau^1=1$ and $\chi^1=0$ modulo $\pi_*\hat G^\sim_1$-equivalence.
If $\alpha_1\neq0$, we can additionally set~$\beta_1=0\bmod\pi_*\hat G^\sim_1$.
Simultaneously scaling the basis element and~$t$ (if necessary), we obtain Cases~\ref{Qu:2}, \ref{Qu:2b} and~\ref{Qu:3}.

\medskip\par\noindent
$\mathbf {k_2=2.}$
Given two vector fields $Q^i=D(\tau^i)+P(\chi^i)+\alpha_iS^1+\beta_iS^0$, $i=1,2$, we can set~$\tau^1=1$ and~$\tau^2=t$
in view of Proposition~\ref{Qu:theorem:PropertiesAppropriateSub} and $\chi^1=0\bmod\pi_*\hat G^\sim_1$.
The closedness under the Lie bracket implies $[Q^1,Q^2]=Q^1$, from which $\alpha_1=0$, $\chi^2_t=0$
and~$\beta_1(\alpha_2-1)=0$. The last equation yields that either $\beta_1=0$
or $\beta_1\ne0$, $\alpha_2=1$ and hence $\beta_1=1\bmod\pi_*\hat G^\sim_1$.
The latter possibility gives Case~\ref{Qu:4},
while the former one leads to three distinct integration cases of the classifying equations~\eqref{Qu:ClassEqs},
depending on the value of~$\alpha_2$.
Cases~\ref{Qu:5}, \ref{Qu:6} and~\ref{Qu:7}
are associated with $\alpha_2\notin\{-1,0\}$, $\alpha_2=0$ and $\alpha_2=-1$, respectively.
More specifically, $\chi^2=\beta_2=0\bmod\pi_*\hat G^\sim_1$ if $\alpha_2\notin\{-1,0\}$;
$\chi^2=0\bmod\pi_*\hat G^\sim_1$, $\beta_2\neq0$ and thus $\beta_2=1\bmod\pi_*\hat G^\sim_1$ if $\alpha_2=0$;
$\chi_2\neq0$ and thus $\chi^2=-1\bmod\pi_*\hat G^\sim_1$, $\beta_2=0\bmod\pi_*\hat G^\sim_1$ if $\alpha_2=-1$.

\medskip

Each case of Lie symmetry extensions in the class~$\hat{\mathcal L}_1$ corresponds to a subclass of~$\hat{\mathcal L}_1$
parameterized by either constants or smooth functions.
Since the class~$\hat{\mathcal L}_1$ is normalized,
the equivalence group of any subclass~$\mathcal K$ of~$\hat{\mathcal L}_1$
is the subgroup of~$\hat G^\sim_1$ that consists of elements of~$\hat G^\sim_1$ preserving~$\mathcal K$.
Therefore, such elements of~$\hat G^\sim_1$ may be used for gauging parameters in arbitrary elements of~$\mathcal K$.
If the subclass~$\mathcal K$ is associated with a case of Lie symmetry extension,
then preserving~$\mathcal K$ is equivalent to preserving the corresponding appropriate subalgebra~$\mathfrak s$ of~$\hat{\mathfrak g}^\sim_1$.
In other words, the equivalence group of~$\mathcal K$ coincides with the stabilizer of~$\mathfrak s$ under the action of~$\hat G^\sim_1$.
For details see Appendix~\ref{Qu:App},
where we provide a theoretical background of the gauging procedure in detail.

Below, for each appropriate subalgebra~$\mathfrak a$ of~$\hat{\mathfrak g}$ listed in Table~\ref{Qu:tab:CompleteGroupClassificationBurgersEquations}
we present
the general form of the arbitrary elements~$A^1$ and~$A^2$ of equations invariant with respect to~$\mathfrak a$,
the constraints on the parameters of the equivalence group~$\hat G^\sim_1$
singling out the stabilizer of the corresponding appropriate subalgebra of~$\hat{\mathfrak g}^\sim_1$,
and the transformations for parameters remaining in the arbitrary elements.
Likewise Table~\ref{Qu:tab:CompleteGroupClassificationBurgersEquations}, here $\phi$ and~$\psi$
are arbitrary sufficiently smooth functions of their arguments with $\phi\ne0$, $\psi_{xx}\ne0$;
$\alpha$, $c_1$ and~$c_2$ are arbitrary constants with $\alpha\notin\{-1,0\}$ and $c_1c_2\ne0$.
\looseness=-1

\begin{enumerate}
\item $A^2=\phi(x)$, $A^1=\psi(x)$; \ $T_{tt}=X^0_t=0$; \ $\tilde \phi=(U^1)^2T_t\phi$, $\tilde \psi=U^1\psi+U^0$.
\item $A^2=\phi(x)$, $A^1=\psi(x)+t$; \ $T_t=U^1$, $X^0_t=0$; \ $\tilde \phi=(U^1)^3\phi$, $\tilde \psi=U^1\psi+U^0+U^1t-T$.
\item $A^2={\rm e}^{-2t}\phi(x{\rm e}^t)$, $A^1={\rm e}^{-t}\psi(x{\rm e}^t)$; \ $T_t=1$, $X^0=U^0=0$; \ $\tilde \phi=(U^1{\rm e}^{T-t})^2\phi$, $\tilde \psi=U^1{\rm e}^{T-t}\psi$.
\item $A^2=c_2x|x|^{\frac\alpha{1+\alpha}}$, $A^1=c_1|x|^{\frac\alpha{1+\alpha}}$; \ $T_{tt}=X^0=U^0=0$; \\
$\tilde c_2=U^1|T_tU^1|^{\frac{-\alpha}{1+\alpha}}c_2$,
$\tilde c_1=U^1|T_tU^1|^{\frac{-\alpha}{1+\alpha}}c_1$.
\item $A^2=c_2x$, $A^1=\ln|x|+c_1$; \ $T_{tt}=X^0=0$, $U^1=1$; \ $\tilde c_2=c_2$, $\tilde c_1=c_1-\ln|T_t|+U^0$.
\item $A^2=c_2{\rm e}^x$, $A^1=c_1{\rm e}^x$; \ $T_tU^1=1$, $X^0_t=U^0=0$; \ $\tilde c_2=U^1{\rm e}^{-X^0}c_2$, $\tilde c_1=U^1{\rm e}^{-X^0}c_1$.
\item $A^2=c_2x\sqrt{|x|}$, $A^1=c_1\sqrt{|x|}+t$; \ $T=U^1t+U^0$, $X^0=0$; \ $\tilde c_2=c_2\sgn U^1$, $\tilde c_1=c_1\sgn U^1$.
\end{enumerate}

The subclasses associated with Cases~\ref{Qu:2}--\ref{Qu:3}
are parameterized by arbitrary sufficiently smooth functions~$\phi$ and~$\psi$ of a single argument,
and these functions cannot be gauged with the finite-dimensional stabilizers
of the corresponding appropriate subalgebras of~$\hat {\mathfrak g}^\sim_1$.
Cases~\ref{Qu:5}--\ref{Qu:4} are considered similarly to each other.
Here we merely exemplify the gauging procedure with Case~\ref{Qu:4}.

The arbitrary elements of equations admitting the algebra
$\mathfrak a_7=\langle D(1)+S^0,\,D(t)+S^1\rangle$ are of the form $A^2=c_2 x\sqrt{|x|}$
and $A^1=c_1\sqrt{|x|}+t$, where $c_1$ and~$c_2$ are arbitrary nonzero constants.
The stabilizer subgroup of the corresponding subalgebra~$\mathfrak s_7=\langle \hat D(1)+\hat S^0,\,\hat D(t)+\hat S^1\rangle$
of~$\hat{\mathfrak g}^\sim_1$ in~$\hat{G}^\sim_1$ is singled out by the constraints $T=U^1t+U^0$, $X^0=0$.
The transformations for the parameters~$c_1$ and~$c_2$ easily follow,
$\tilde c_2=c_2\sgn U^1$, $\tilde c_1=c_1\sgn U^1$, where $U^1\neq0$.
This implies that we can gauge the sign of one of these parameters. We choose to gauge $c_2>0$.
\end{proof}

\subsection{Group classification of the regular subclass}

\begin{theorem}\label{thm:GroupClassificationL1}
A complete list of $\mathcal G^\sim_1$-inequivalent
Lie-symmetry extensions in the class~$\mathcal L_1$ is exhausted
by the cases given in Table~\ref{Qu:tab:CompleteGroupClassificationL1}.
\end{theorem}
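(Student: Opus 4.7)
The plan is to transport Theorem~\ref{thm:GroupClassificationGenBurgersKdVEqs} from the imaged regular subclass~$\hat{\mathcal L}_1$ back to~$\mathcal L_1$ along the family~$\mathcal F$ of point transformations $\hat t=t$, $\hat x=\int(1/C(t,x))\,\mathrm dx$, $\hat u=u$. The concatenation theorem stated just before this subsection reduces the task, at the level of equivalence groupoids, precisely to taking the $\mathcal F$-preimage of the list in Table~\ref{Qu:tab:CompleteGroupClassificationBurgersEquations}; no independent case analysis within~$\mathcal L_1$ is required.

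For each row of Table~\ref{Qu:tab:CompleteGroupClassificationBurgersEquations}, I would invert the arbitrary-element relations $\hat A^2=X_x^2A^2$ and $\hat A^1=X_{xx}A^2-X_t$, with $X_x=1/C$, to recover the corresponding pair~$(A^2,C)$ in~$\mathcal L_1$. Substituting the explicit forms of~$(\hat A^2,\hat A^1)$ with~$\hat x$ replaced by~$X(t,x)$ produces a PDE system for the unknown~$X(t,x)$ once $A^2$ is expressed through $C$ via $A^2=C^2\hat A^2(t,X)$; the compatibility $X_{tx}=X_{xt}$ reproduces the differential constraint~\eqref{Qu:SplittingCondition} characterizing~$\mathcal L_1$, which confirms that no spurious preimages appear. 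The Lie-symmetry generators are simultaneously pulled back through~$\mathcal F^{-1}$: since $\mathcal F$ preserves $t$ and~$u$, a generator $\hat Q=\hat\tau\p_t+\hat\xi\p_{\hat x}+\hat\eta\p_u$ on the image corresponds to $Q=\hat\tau\p_t+C(\hat\xi-\hat\tau X_t)\p_x+\hat\eta\p_u$ on the source, with~$\hat x$ in $\hat\xi,\hat\eta$ replaced by~$X(t,x)$. Applying this formula to the basis vector fields $\hat D(\tau)$, $\hat P(\chi)$, $\hat S^0$, $\hat S^1$ from Proposition~\ref{Qu:theorem:PropAppropAlg} converts the bases listed in Table~\ref{Qu:tab:CompleteGroupClassificationBurgersEquations} into those populating Table~\ref{Qu:tab:CompleteGroupClassificationL1}.

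The main obstacle is the implicit nature of~$\mathcal F$: the variable~$x$ is not recoverable in closed form from~$\hat x$, so the preimage of a row with a specified dependence on~$\hat x$ carries an essential functional freedom parameterized by~$C$ (equivalently~$X$), subject to the PDE above. Sorting out which part of this freedom can be absorbed into $\mathcal G^\sim_1$-equivalence via the stabilizer of the corresponding appropriate subalgebra of~$\hat{\mathfrak g}^\sim_1$, and which part survives as genuine arbitrary functions in the extension cases of Table~\ref{Qu:tab:CompleteGroupClassificationL1}, is where care is needed; beyond this structural point, the computation is routine and case by case mirrors Cases~\ref{Qu:2}--\ref{Qu:4} of the proof of Theorem~\ref{thm:GroupClassificationGenBurgersKdVEqs}.
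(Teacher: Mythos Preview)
Your approach is the paper's approach: transport Table~\ref{Qu:tab:CompleteGroupClassificationBurgersEquations} back to~$\mathcal L_1$ through the inverse of~$\mathcal F$ and push forward the basis vector fields. The paper frames this by working with the inverse change of variable $x=\hat X(\hat t,\hat x)$, which satisfies the \emph{linear} Kolmogorov equation $\hat X_{\hat t}=\hat A^2\hat X_{\hat x\hat x}+\hat A^1\hat X_{\hat x}$; your formulation via the forward map~$X$ yields the equivalent nonlinear equation obtained from it by the hodograph transformation (which the paper in fact uses for Case~\ref{Qu:2b}). The key step, which the paper makes explicit and which is slightly buried in your last paragraph, is that one need only pick a \emph{single particular solution}~$\hat X$ per row: any two preimages of the same $\hat{\mathcal L}_1$-equation are $\mathcal G^\sim_1$-equivalent by construction, so there is no ``sorting out'' of freedom to do---the arbitrary functions surviving in Table~\ref{Qu:tab:CompleteGroupClassificationL1} come from the parameters~$\hat\phi,\hat\psi,c_1,c_2,\alpha$ of Table~\ref{Qu:tab:CompleteGroupClassificationBurgersEquations}, not from the fiber of~$\mathcal F$.

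One small correction: the compatibility $X_{tx}=X_{xt}$ does not reproduce the constraint~\eqref{Qu:SplittingCondition}. You have a single second-order PDE for~$X$, not an overdetermined system, so there is nothing to make compatible; and~\eqref{Qu:SplittingCondition} for~$\mathcal L_1$ is simply the pullback of $\hat A^1_{\hat x\hat x}\ne0$, which holds automatically for every row of Table~\ref{Qu:tab:CompleteGroupClassificationBurgersEquations}. Beyond this, your pullback formula for vector fields is correct and the case-by-case execution is exactly what the paper carries out.
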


\begin{proof}
To solve the group classification problem for the class~$\mathcal L_1$
we use the mapping technique~\cite{VaneevaPopovychSophocleous2009}.
The classes~$\hat{\mathcal L}_1$ and~$\mathcal L_1$
are related by the family of point transformations
\begin{gather}\label{Qu:L1hatL1transform}
t=\hat t,\quad  x=\hat X(\hat t,\hat x),\quad  u=\hat u,
\end{gather}
parameterized by a smooth function~$\hat X$ of $(\hat t,\hat x)$
that satisfies the nondegeneracy condition~$\hat X_{\hat x}\neq0$ and the Kolmogorov equation
\begin{gather}\label{Qu:eq:Kolmogorov}
\hat X_{\hat t}=\hat A^2\hat X_{\hat x \hat x}+\hat A^1\hat X_{\hat x}.
\end{gather}
Here and in what follows the hatted values correspond to the class~$\hat{{\mathcal L}}_1$ and ordinary values correspond to $\mathcal L_1$.
The arbitrary elements~$A^2$ and~$C$ of~$\mathcal L_1$ are
related to the arbitrary elements~$\hat A^1$ and~$\hat A^2$ of~$\hat{\mathcal L}_1$~by
\[
A^2=\hat X_{\hat x}^2\hat A^2,\quad C=\hat X_{\hat x}.
\]

Evidently, a search for the general solution of~\eqref{Qu:eq:Kolmogorov} is out of question, and therefore we
cannot solve the group classification problem for~$\mathcal L_1$ with respect to its equivalence group.
Instead, we carry out the classification of the class~$\mathcal L_1$ with respect to its equivalence groupoid~$\mathcal G_1$.
To obtain the latter classification, for each of the cases listed in Table~\ref{Qu:tab:CompleteGroupClassificationBurgersEquations},
we should take a particular solution~$\hat X$ of the Kolmogorov equation~\eqref{Qu:eq:Kolmogorov}
with the corresponding coefficients~$\hat A^1$ and~$\hat A^2$,
to compute the corresponding values of~$A^2$ and~$C$
and push forward the associated invariance algebra by the point transformation~\eqref{Qu:L1hatL1transform}
with the chosen value of~$\hat X$.
We proceed successively through all the eight cases of Lie-symmetry extensions of the class~$\hat{\mathcal L}_1$.

\begin{table}[!t]
\caption{Complete group classification of the class~${\mathcal L}_1$ up to $\mathcal G^\sim_1$-equivalence.
\label{Qu:tab:CompleteGroupClassificationL1}}
\begin{center}\newcounter{tbnn}\setcounter{tbnn}{-1}
\def\arraystretch{1.5} 
\begin{tabular}{|c|l|l|l|}
\hline
no. & \hfil $A^2$ &\hfil $C$& \hfil Basis of the invariance algebra \\
\hline
\refstepcounter{tbnn}\thetbnn\label{QuQu:1} &$A^2(t,x)$                                    & $C(t,x)$             & ---\\
\refstepcounter{tbnn}\thetbnn\label{QuQu:2} &$\phi(x)$                                     & $\psi(x)$            & $\p_t$\\
\refstepcounter{tbnn}\thetbnn\label{QuQu:2b}&$\phi(\rho)\rho_x^{-2}$                       & $\rho_x^{-1}$        & $\p_t-\rho_t\rho_x^{-1}\p_x+\p_u$\\
\refstepcounter{tbnn}\thetbnn\label{QuQu:3} &$\phi(x)$                                     & ${\rm e}^t\psi(x)$   & $\p_t-u\p_u$\\
\refstepcounter{tbnn}\thetbnn\label{QuQu:5} &$|x|^\nu$                                     & $|x|^\mu$            & $\p_t$, $(2-\nu)t\p_t+x\p_x+(\nu-\mu-1)u\p_u$\\
\refstepcounter{tbnn}\thetbnn\label{QuQu:7} &${\rm e}^{\gamma x}$                          & ${\rm e}^x$          & $\p_t$, $\gamma t\p_t-\p_x-(\gamma-1)u\p_u$\\
\refstepcounter{tbnn}\thetbnn\label{QuQu:6} &$\alpha{\zeta}\zeta_x^{-2}$                   & $\zeta_x^{-1}$       & $\p_t$, $t\p_t+\zeta\zeta_x^{-1}\p_x+\p_u$\\
\refstepcounter{tbnn}\thetbnn\label{QuQu:4} &$\alpha t^{-1}\zeta|\zeta|^{1/2}\zeta_x^{-2}$ & $t^{-2}\zeta_x^{-1}$ & $\p_t+\p_u,\ t\p_t+2\zeta\zeta_x^{-1}\p_x+u\p_u$\\
\hline
\end{tabular}
\end{center}
\footnotesize{
The functions~$\phi$ and~$\psi$ are arbitrary smooth nonvanishing functions of their arguments
such that the corresponding~$A^2$ and~$C$ satisfy the condition~\eqref{Qu:SplittingCondition} for the class~$\mathcal L_1$;
the function~$\rho$ is a particular solution of the equation $\rho_t=\phi(\rho)\rho_x^{-2}\rho_{xx}-\psi(\rho)-t$ with $\rho_x\neq0$;
the function~$\zeta$ is a particular solution of the equation $\zeta_x=\exp\big(\frac12\alpha^{-1}\ln^2|\zeta|\big)$ in Case~\ref{QuQu:6}
and of the equation $\zeta_x=\exp\big(\alpha^{-1}(4\varepsilon'|\zeta|^{1/2}+\beta\ln|\zeta|-2|\zeta|^{-1/2})\big)$
with $\varepsilon'=\sgn \zeta$ in Case~\ref{QuQu:4};
$\alpha$, $\beta$, $\gamma$, $\nu$ and $\mu$ are arbitrary constants with
$\mu(\nu-2)(\nu-\mu-1)\ne0$, $\gamma(\gamma-1)\ne0$ and $\alpha\beta\ne0$.
}
\end{table}

\medskip

\noindent{\bf0.}\
This case corresponds to the general equation in the class~$\mathcal L^1$
with the zero maximal Lie invariance algebra.

\medskip

\noindent{\bf1.}\
The equation~\eqref{Qu:eq:Kolmogorov} with time-independent $\hat A^1$ and~$\hat A^2$ admits the Lie symmetry~\smash{$\p_{\hat t}$}.
The ansatz $\hat X=\hat\vartheta(\omega)$ with $\omega=x$ and $\hat\vartheta_\omega\neq0$ for stationary solutions
of the equation~\eqref{Qu:eq:Kolmogorov} reduces it to the equation
$\hat\phi(\omega)\hat\vartheta_{\omega\omega}+\hat \psi(\omega)\hat\vartheta_{\omega}=0$.
Assuming $\hat\phi$ and~$\hat\vartheta$ varying, we consider the latter equation as an equation on~$\hat\psi$.
Then it is readily seen that the target arbitrary elements~$C$ and~$A^2$ are also time-independent,
$A^2=\psi(x)$ and $C=\phi(x)$ with $(\psi(\phi(1/\psi)_x)_x)_x\neq0$
and the pushforward of the vector field~$D(1)$ is obvious.

\medskip

\noindent{\bf2.}\
Since $\hat X_{\hat x}\ne0$, making the generalized hodograph transformation with
$(t,x)=(\hat t,\hat X)$ and $\rho=\hat x$ being the new independent and dependent variables, respectively,
we reduce the Kolmogorov equation with $\hat A^2=\hat\phi(\hat x)$ and $\hat A^1=\hat\psi(\hat x)+\hat t$
to the equation 
\[\rho_t=\hat\phi(\rho)\rho_x^{-2}\rho_{xx}-\hat\psi(\rho)-t\] 
for $\rho=\rho(t,x)$.
For each value of the parameter-function tuple $(\hat\phi,\hat\psi)$,
we should take only a single particular solution of the latter equation.
Then $C=\rho_x^{-1}$, $A^2=\hat\phi(\rho)\rho_x^{-2}$,
and the vector field $D(1)+S^0$ is pushed forward to $\p_t-\rho_t\rho_x^{-1}\p_x+\p_u$.

\newpage

\noindent{\bf3.}\
The equation~\eqref{Qu:eq:Kolmogorov}
with $\hat A^2={\rm e}^{-2\hat t}\hat\phi({\rm e}^{\hat t}\hat x)$ and $\hat A^1={\rm e}^{-\hat t}\hat\psi({\rm e}^{\hat t}\hat x)$
admits the nontrivial Lie symmetry~$\p_{\hat t}-\hat x\p_{\hat x}$,
which suggests to make the ansatz $\hat X=\hat\vartheta(\omega)$ with \smash{$\omega={\rm e}^{\hat t}\hat x$} and $\hat\vartheta_\omega\neq0$.
This ansatz reduces the above equation
to $\hat\phi(\omega)\hat\vartheta_{\omega\omega}=(\omega-\hat\psi(\omega))\hat\vartheta_{\omega}$.
We can assume $\hat\phi$ and~$\hat\vartheta$ to be arbitrary smooth functions of~$\omega$ with $\hat\phi\hat\vartheta_\omega\neq0$
and consider the reduced equation as an equation with respect to~$\hat\psi$.
Then the arbitrary elements of the class~$\mathcal L_1$ are expressed as
$A^2=\phi(x)$ and~$C={\rm e}^t\psi(x)$, where $\phi$ and~$\psi$ are arbitrary smooth nonvanishing functions of~$x$
with $(\psi(\phi(1/\psi)_x)_x)_x\neq-(1/\psi)_x$.
The last inequality follows from the condition~\eqref{Qu:SplittingCondition}
for the class~$\mathcal L_1$. The associated Lie invariance algebra is~$\langle\p_t-u\p_u\rangle$.

\medskip

\noindent{\bf4.}\
Depending on the value of the parameter~$c_1$ in the expression for~$\hat A^1$,
the corresponding Kolmogorov equation~\eqref{Qu:eq:Kolmogorov} has the particular solution
$\hat X=|\hat x|^{1-c_1}$ if $c_1\neq1$ or $\hat X=\ln|\hat x|$ otherwise.
This leads to the splitting into Case~\ref{QuQu:5} with $\mu\neq1$ and Case~\ref{QuQu:7}.
The expressions for the target arbitrary elements and the vector fields forming the maximal Lie invariance algebra easily follow
after re-denoting parameters and simplifying the transformed arbitrary elements by equivalence transformations of~$\mathcal L_1$.
The new parameters are expressed via the old ones as
$\mu=-c_1/(1-c_1)$, $\nu=\big(1-2c_1+\alpha/(\alpha+1)\big)/(1-c_1)$ if $c_1\ne1$
and $\gamma=1/(\alpha+1)$ if $c_1$.
The inequalities for the new parameters $(\mu,\nu)$ and~$\gamma$ result from their counterparts for
the old parameters $c_1$ and~$\alpha$.
These inequalities can also be derived by substituting the corresponding expressions
for~$A^2$ and~$A^1$ into the condition~\eqref{Qu:SplittingCondition} for~$\mathcal L_1$.
\medskip

\noindent{\bf5.}\
The equation~\eqref{Qu:eq:Kolmogorov} with $\hat A^2=c_2\hat x$ and $\hat A^1=\ln|\hat x|$ admits a Lie symmetry~$\p_{\tilde t}$
and thus we make the ansatz $\hat X=\vartheta(\hat x)$ to reduce this equation
to the ordinary differential equation $c_2\hat x\vartheta_{\hat x\hat x}+(\ln|\hat x|)\vartheta_{\hat x}=0$.
This equation can be easily integrated once to find $\vartheta_{\hat x}=\exp\big({-}\frac12c_2^{-1}\ln^2|\hat x|\big)$,
where the integration constant is set to zero since we look for a particular solution.
Let~$\zeta$ be the inverse of the function~$\vartheta$, $\hat x=\zeta(x)$ and thus $\zeta_x=1/\vartheta_{\hat x}$,
i.e., $\zeta$ satisfies the equation $\zeta_x=\exp\big(\frac12c_2^{-1}\ln^2|\zeta|\big)$.
To obtain $\mathcal G^\sim_1$-inequivalent cases of Lie-symmetry extensions,
we need to take a single particular solution of the last equation for each $c_2\ne0$.
The arbitrary elements of the target equation are $C=\zeta_x^{-1}$ and $A^2=c_2\zeta\zeta_x^{-2}$,
which clearly satisfy the condition~\eqref{Qu:SplittingCondition} for~$\mathcal L_1$.
The corresponding maximal Lie invariance algebra is
spanned by the vector fields~$\p_t$ and~$t\p_t+\zeta\zeta_x^{-1}\p_x+\p_u$,
which results in Case~\ref{QuQu:6}.
One can show that the function~$\xi:=\zeta\zeta_x^{-1}$ satisfies the equation $(\xi\xi_{xx})_x=0$.

\medskip

\noindent{\bf6.}\
In this case, the Kolmogorov equation~\eqref{Qu:eq:Kolmogorov} has the stationary particular solution~$\hat X={\rm e}^{-c_1\hat x}$,
which leads to the following expressions for the arbitrary elements~$A^2$ and~$C$: $A^2= c_1^2|x|^{2-1/c_1}$, $C=-c_1x$.
Denoting $\nu:=2-1/c_1$ and acting by equivalence transformations of scaling variables and of alternating their signs,
we can simplify these expressions to $A^2=|x|^{\nu}$, $C=x$ with $\nu\neq2$.
The corresponding maximal Lie invariance algebra is
$\langle\p_t,\ (2-\nu)t\p_t+x\p_x-({2-\nu})u\p_u\rangle$, which is the subcase of Case~\ref{QuQu:5} with~$\mu=1$.
\looseness=-1

\medskip

\noindent{\bf7.}\
The equation~\eqref{Qu:eq:Kolmogorov} with $\hat A^2=c_2\hat x\sqrt{|\hat x|}$ and $\hat A^1=c_1\sqrt{|\hat x|}+\hat t$
is invariant with respect to the vector field $\hat t\p_{\hat t}+2\hat x\p_{\hat x}$.
This allows us to carry out the Lie reduction of this equation by the ansatz $\hat X=\vartheta(\omega)$ with $\omega=\hat x/t^2$
to the equation
$-2\omega\vartheta_\omega=\varepsilon c_2\omega|\omega|^{1/2}\vartheta_{\omega\omega}+(\varepsilon c_1|\omega|^{1/2}+1)\vartheta_\omega$,
where $\tilde c_1:=\varepsilon c_1$, $\tilde c_2:=\varepsilon c_2$, $\varepsilon:=\sgn t$ and $\varepsilon':=\sgn \omega$.
In an attempt to find a particular solution of the reduced equation, we can integrate it once to
$\vartheta_\omega=\exp\big(-\tilde c_2^{\,-1}(4\varepsilon'|\omega|^{1/2}+\tilde c_1\ln|\omega|-2|\omega|^{-1/2})\big)$,
where the integration constant is again set to zero.
Let $\zeta$ be the inverse of the function~$\vartheta$, $\omega=\zeta(x)$ and hence $\hat x=t^2\zeta(x)$ and $\varepsilon'=\sgn \zeta$.
Taking a particular solution of the equation $\zeta_x=\exp\big(\tilde c_2^{\,-1}(4\varepsilon'|\zeta|^{1/2}+\tilde c_1\ln|\zeta|-2|\zeta|^{-1/2})\big)$
for each value of the parameter tuple $(\tilde c_1,\tilde c_2)$ with \mbox{$\tilde c_1\tilde c_2\neq0$},
we get Case~\ref{QuQu:4} with the arbitrary elements
$C={t^{-2}\zeta_x^{-1}}$ and $A^2=\tilde c_2t^{-1}\zeta|\zeta|^{1/2}\zeta_x^{-2}$ satisfying the condition~\eqref{Qu:SplittingCondition}
for~$\mathcal L_1$, and the Lie invariance algebra is $\langle\p_t+\p_u,\ t\p_t+2\zeta\zeta_x^{-1}\p_x+u\p_u\rangle$.
\end{proof}

\section{Conclusion}\label{Qu:sec:Conclusion}

The class~$\mathcal L$ of variable-coefficient Burgers equations is highly complicated
from the point of view of classifying admissible transformations and Lie symmetries.
Although its superclass~$\mathcal B$ of general Burgers equations is normalized in the usual sense
and admits several normalization-preserving gaugings of the arbitrary elements by families of equivalence transformations,
none of these properties is inherited by the class~$\mathcal L$.
This suggests to map the class~$\mathcal L$ by a family of equivalence transformations of~$\mathcal B$ to another
subclass of~$\mathcal B$ with better normalization properties.
Taking into account the experience with the class~$\mathcal B$, it is reasonable to work with a subclass of~$\mathcal B$ singled out by the constraint~$C=1$.
This is why we studied, instead of~$\mathcal L$, the class~$\hat{\mathcal L}$, which is weakly similar to~$\mathcal L$.
Although the class~$\hat{\mathcal L}$ has the same number of arbitrary elements as~$\mathcal L$ has,
the replacement of~$\mathcal L$ by~$\hat{\mathcal L}$ pays off 
since the structure of the equivalence groupoid~$\hat{\mathcal G}^\sim$ of~$\hat{\mathcal L}$
is much more comprehensible, and this is exactly what we applied the mapping of~$\mathcal L$ onto~$\hat{\mathcal L}$ for.
Indeed, the class~$\hat{\mathcal L}$ readily splits into the two $\hat{\mathcal G}^\sim$-invariant
subclasses~$\hat{\mathcal L_0}$ and~$\hat{\mathcal L}_1$
singled out from~$\hat{\mathcal L}$ by the quite simple constraints~$A^1_{xx}=0$ and $A^1_{xx}\ne0$,
respectively, which are clearly $\hat{\mathcal G}^\sim$-invariant.
In other words, the equivalence groupoid~$\hat{\mathcal G}^\sim$ is the disjoint union
of the equivalence groupoids of the subclasses~$\hat{\mathcal L}_0$ and~$\hat{\mathcal L}_1$.
The simplification of the splitting constraints in comparison with their counterparts in the class~$\mathcal L$,
expressed via a complicated differential function of both the arbitrary elements, is essential.

While the subclass~$\hat{\mathcal L}_1$ with~$A^1_{xx}\neq0$
is immediately normalized in the usual sense and thus amenable with the standard tools of the algebraic method of group classification,
there is an unexpected complication in the other branch of the general classification.
The class~$\hat{\mathcal L}_0$ is normalized in the extended generalized sense,
which is one of the main findings of this paper.
This property had been found for a number of classes of differential equations
(see, e.g., \cite{vane2015d,VaneevaPopovychSophocleous2014,VaneevaPosta2017})
but was rigorously verified for only a few of them
\cite{BoykoPopovychShapoval2015,Opanasenko2019,OpanasenkoBihloPopovych2017,OpanasenkoBoykoPopovych2019},
and in this paper we presented one example of such a verification.
We have reparameterized the class~$\hat{\mathcal L}_0$ to the class~$\bar{\mathcal L}_0$
via constructing a covering for the auxiliary system for the arbitrary elements of the former class
and have shown that the reparameterized class~$\bar{\mathcal L}_0$
is normalized in the generalized sense.
Moreover, we have found a nontrivial effective generalized equivalence group~$\breve G^\sim_0$ of~$\bar{\mathcal L}_0$,
which is a proper but not normal subgroup of the generalized equivalence group of~$\bar{\mathcal L}_0$.
Therefore, the class~$\bar{\mathcal L}_0$ admits an infinite family of effective generalized equivalence groups.
Other classes with multiple nontrivial effective generalized equivalence groups
were found in~\cite{OpanasenkoBihloPopovych2017,OpanasenkoBoykoPopovych2019}.
\looseness=1

To carry out the group classification of the class~$\mathcal L$, we combined several techniques.
As was discussed earlier, we started with the convenient splitting of the weakly similar class~$\hat{\mathcal L}$
into the normalized subclasses~$\hat{\mathcal L}_0$ and~$\hat{\mathcal L}_1$.
The class~$\hat{\mathcal L}_0$ turned out to be weakly similar to the class~${\mathcal L}_{0'}$,
which is normalized in the usual sense and which is a subclass of the class~$\mathcal L$ singled out by the constraint $C=1$.
Moreover, the group classification of the class~${\mathcal L}_{0'}$ had already been done in~\cite{PocheketaPopovych2017}.
This is why  we needed neither to carry out the group classification of the class~$\hat{\mathcal L}_0$
nor to map the classification cases.
The classification problem for the class~$\hat{\mathcal L}_1$ was tackled using the algebraic method.
Applying the mapping technique~\cite{VaneevaPopovychSophocleous2009},
we carried out the group classification of~$\mathcal L_1$ with respect to its equivalence groupoid.
It is impossible to classify Lie symmetries of equations from the class~$\hat{\mathcal L}_1$
up to $\hat G^\sim_1$-equivalence in view of arising partial differential equations
whose general solutions could not be found.

Finally, the list of $\mathcal G^\sim$-inequivalent Lie-symmetry extensions in the class~$\mathcal L$ is a disjoint
union of those of~$\mathcal L_0$ (or equivalently~$\mathcal L_{0'}$) and~$\mathcal L_1$,
which are presented in Table~1 of~\cite{PocheketaPopovych2017} and in Table~\ref{Qu:tab:CompleteGroupClassificationL1},
respectively.
Formally extending the ranges of parameters values in classification cases for the class~$\mathcal L_1$,
we can merge some classification cases for~$\mathcal L_0$ with those for~$\mathcal L_1$.
For instance, dropping the constraint $\mu\neq0$ in Case~\ref{QuQu:5} of Table~\ref{Qu:tab:CompleteGroupClassificationL1},
we combine this case with the cases of Lie-symmetry extensions in the class~$\mathcal L_0$
that are associated with the appropriate subalgebras $\mathfrak g^{2.2}=\langle \p_t, t\p_t+u\p_u\rangle$
and $\mathfrak g^{2.6}_a=\langle \p_t,t\p_t +a x\p_x+(a-1)u\p_u\rangle$ with $a\notin\{0,1/2\}$,
where $\nu=2$ and $\nu\neq2$, respectively.
Similarly, the Lie-symmetry extensions in the class~$\mathcal L_0$
with the appropriate subalgebra $\mathfrak g^{2.5}=\langle \p_t, t\p_t+\p_x-u\p_u\rangle$
can be attached to Case~\ref{QuQu:7} of Table~\ref{Qu:tab:CompleteGroupClassificationL1}.
At the same time, the equivalence groupoids of the subclasses~$\mathcal L_0$ and~$\mathcal L_1$ are
of essentially different structure, and therefore such a merge may be misleading.

Corollary~\ref{Qu:ReductionOfEqsFromLToBurgersEq} and similar assertions on equivalence of equations in the class~$\mathcal L$
can be used for finding solutions, generalized symmetries, conservation laws and other objects
related to equations that can be reduced by point transformations to well-studied equations,
like Burgers equation.

The results of the present paper show 
the relevance of the mapping and the splitting methods to the study of the class~$\mathcal L$ 
within the framework of group analysis of differential equations.

\appendix

\section{Gauging subclass parameters by equivalence transformations}\label{Qu:App}

When classifying the cases of Lie symmetry extensions for a class~$\mathcal L$ of systems of differential equations by the algebraic method,
one ends up with a collection of subclasses of~$\mathcal L$ that is associated with a complete list of inequivalent appropriate subalgebras 
of the equivalence algebra of~$\mathcal L$. 
More specifically, each of these subclasses consists of the systems in~$\mathcal L$ that admit, as their common Lie invariance algebra,  
the projection of the corresponding subalgebra in the list to the space coordinatized by the system variables. 
However, using the equivalence transformations of the class one may further gauge parameters involved in systems of such a subclass.
In particular, this gauging procedure has been used in the proof of Theorem~\ref{thm:GroupClassificationGenBurgersKdVEqs}.
Here we describe its theoretical foundations in detail.

In the notation of the first paragraph of Section~\ref{Qu:sec:EquivGroup}, let
$G^\sim_{\mathcal L}$, $\mathfrak g^\sim_{\mathcal L}$ and $\mathfrak g_\theta$ with $\theta\in\mathcal S$
be the usual equivalence group and the usual equivalence algebra of the class~$\mathcal L$
and the maximal Lie invariance algebra of the system~$\mathcal L_\theta$, respectively.
The equivalence group~$G^\sim_{\mathcal L}$ acts on its Lie algebra~$\mathfrak g^\sim_{\mathcal L}$
via pushforwards of vector fields from~$\mathfrak g^\sim_{\mathcal L}$ by transformations from~$G^\sim_{\mathcal L}$.
This action induces the action of~$G^\sim_{\mathcal L}$ on the set of subalgebras of~$\mathfrak g^\sim_{\mathcal L}$.
Elements of $G^\sim_{\mathcal L}$ and $\mathfrak g^\sim_{\mathcal L}$ can be pushed forward
by the projection $\pi$ from the space with coordinates $(x,u^{(\EqOrd)},\theta)$ to the space  with coordinates $(x,u)$.
For a subalgebra~$\mathfrak s$ of~$\mathfrak g^\sim_{\mathcal L}$, we define
the subset $\mathcal S^{\mathfrak s}:=\{\theta\in \mathcal S\mid \pi_*\mathfrak s\subseteq\mathfrak g_\theta\}$ of~$\mathcal S$
and the subclass $\mathcal L^{\mathfrak s}:=\{\mathcal L_\theta\mid \theta\in \mathcal S^{\mathfrak s}\}$ of~$\mathcal L$.
Let $\bar{\mathfrak s}$ be the maximal subalgebra of~$\mathfrak g^\sim_{\mathcal L}$ among those subalgebras
$\mathfrak h\subseteq\mathfrak g^\sim_{\mathcal L}$
such that $\pi_*\mathfrak h\subseteq \mathfrak g_\theta$ for any $\theta\in \mathcal S^{\mathfrak s}$.
(In fact, the subalgebra $\bar{\mathfrak s}$ is the span of these subalgebras and shares their defining property.)
Hence $\mathcal S^{\mathfrak s}=\mathcal S^{\bar{\mathfrak s}}$ and $\mathcal L^{\mathfrak s}=\mathcal L^{\bar{\mathfrak s}}$.
Note that $\bar{\mathfrak s}=\mathfrak s$ if $\mathfrak s$ is an appropriate algebra of~$\mathfrak g^\sim_{\mathcal L}$,
i.e., if $\pi_*\mathfrak s=\mathfrak g_\theta$ for some $\theta\in \mathcal S$.

\begin{proposition}\label{Qu:propStabilizer}
$G^\sim_{\mathcal L}\cap G^\sim_{\mathcal L^{\mathfrak s}}=\mathrm {St}_{G^\sim_{\mathcal L}}(\bar{\mathfrak s})$,
where $\mathrm {St}_{G^\sim_{\mathcal L}}(\bar{\mathfrak s})$ is the stabilizer subgroup (or, the isotropy subgroup)
of $G^\sim_{\mathcal L}$ with respect to~$\bar{\mathfrak s}$.
\end{proposition}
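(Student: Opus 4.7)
The plan is to establish the equality by proving the two inclusions separately, using as the key lever the compatibility of the equivalence group action with the correspondence $\theta\mapsto\mathfrak g_\theta$. Concretely, I would first record the following equivariance property, quoting the general theory of equivalence transformations of classes of differential equations~\cite{popo06a,PopovychKunzingerEshraghi2010}: if $\Phi\in G^\sim_{\mathcal L}$ acts on arbitrary elements by $\theta\mapsto\tilde\theta$, then its projected point transformation $\pi_*\Phi$ maps $\mathcal L_\theta$ to $\mathcal L_{\tilde\theta}$, so $(\pi_*\Phi)_*$ sends $\mathfrak g_\theta$ isomorphically onto $\mathfrak g_{\tilde\theta}$; and for any $Q\in\mathfrak g^\sim_{\mathcal L}$ one has the intertwining $\pi_*(\Phi_*Q)=(\pi_*\Phi)_*(\pi_*Q)$ at each such $\theta$. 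These identities are standard, but their careful statement is what the rest of the argument rests upon.

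For the inclusion $\mathrm{St}_{G^\sim_{\mathcal L}}(\bar{\mathfrak s})\subseteq G^\sim_{\mathcal L}\cap G^\sim_{\mathcal L^{\mathfrak s}}$, I would fix $\Phi$ with $\Phi_*\bar{\mathfrak s}=\bar{\mathfrak s}$ and an arbitrary $\theta\in\mathcal S^{\mathfrak s}=\mathcal S^{\bar{\mathfrak s}}$. The defining inclusion $\pi_*\bar{\mathfrak s}\subseteq\mathfrak g_\theta$, transported by the equivariance identity, becomes $\pi_*\bar{\mathfrak s}=\pi_*(\Phi_*\bar{\mathfrak s})\subseteq\mathfrak g_{\tilde\theta}$, so $\tilde\theta\in\mathcal S^{\bar{\mathfrak s}}=\mathcal S^{\mathfrak s}$. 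Running the same argument with $\Phi^{-1}$ shows that $\Phi$ bijects $\mathcal S^{\mathfrak s}$ onto itself, which is precisely the condition $\Phi\in G^\sim_{\mathcal L^{\mathfrak s}}$.

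For the opposite inclusion I would take $\Phi$ preserving $\mathcal S^{\mathfrak s}$, fix an arbitrary $\tilde\theta\in\mathcal S^{\mathfrak s}$, and set $\theta:=\Phi^{-1}(\tilde\theta)\in\mathcal S^{\mathfrak s}$. Applying the equivariance identity at $\theta$ and using $\pi_*\bar{\mathfrak s}\subseteq\mathfrak g_\theta$ gives $\pi_*(\Phi_*\bar{\mathfrak s})\subseteq\mathfrak g_{\tilde\theta}$. Since $\tilde\theta$ was arbitrary in $\mathcal S^{\mathfrak s}$, the subalgebra $\Phi_*\bar{\mathfrak s}$ of~$\mathfrak g^\sim_{\mathcal L}$ satisfies exactly the property used to single out the maximal subalgebra~$\bar{\mathfrak s}$, and maximality forces $\Phi_*\bar{\mathfrak s}\subseteq\bar{\mathfrak s}$. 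The same argument applied to $\Phi^{-1}$ upgrades this to an equality, giving $\Phi\in\mathrm{St}_{G^\sim_{\mathcal L}}(\bar{\mathfrak s})$.

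The place where I expect the writing to be most delicate is the preliminary equivariance identity: vector fields in $\mathfrak g^\sim_{\mathcal L}$ have nonzero components along the $\theta$-directions, whereas $\mathfrak g_\theta$ lives only in $(x,u)$, so one must check that truncating to the $(x,u)$-components commutes with the pushforward when everything is evaluated at the fixed value~$\theta$. Once this intertwining is nailed down, both inclusions reduce to a single-line diagram chase, so the set-theoretic content of the proposition is genuinely light and the whole argument should fit into a short paragraph after the lemma is in place.
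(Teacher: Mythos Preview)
Your proposal is correct and follows essentially the same route as the paper's proof: both inclusions are obtained from the equivariance identity $\mathfrak g_{\mathcal T\theta}=(\pi_*\mathcal T)_*\mathfrak g_\theta$ together with the intertwining $\pi_*(\mathcal T_*Q)=(\pi_*\mathcal T)_*(\pi_*Q)$, and the reverse inclusion is closed off via the maximality of~$\bar{\mathfrak s}$ and a symmetric application to~$\mathcal T^{-1}$. The only difference is organizational---you isolate the equivariance identity as an explicit preliminary lemma, whereas the paper invokes it inline---and your remark that one must also run the first inclusion for~$\Phi^{-1}$ to get a genuine bijection of~$\mathcal S^{\mathfrak s}$ is a point the paper leaves implicit.
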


\begin{proof}
Let us prove the inclusion $G^\sim_{\mathcal L}\cap G^\sim_{\mathcal L^{\mathfrak s}}\supseteq\mathrm {St}_{G^\sim_{\mathcal L}}(\bar{\mathfrak s})$.
Suppose that $\mathcal T\in\mathrm {St}_{G^\sim_{\mathcal L}}(\bar{\mathfrak s})$.
Then $\mathcal T\in G^\sim_{\mathcal L}$ and $\mathcal T_*\bar{\mathfrak s}=\bar{\mathfrak s}$.
The former condition implies that for any tuple $\theta\in\mathcal S^{\mathfrak s}\subseteq\mathcal S$
we have $\mathcal T\theta\in\mathcal S$ (where $\mathcal T$ acts on~$\theta$ as on a tuple of functions)
and $\mathfrak g_{\mathcal T\theta}=(\pi_*\mathcal T)_*\mathfrak g_\theta$.
On the other hand, $(\pi_*\mathcal T)_*\mathfrak g_\theta\supseteq(\pi_*\mathcal T)_*(\pi_*\bar{\mathfrak s})=\pi_*(\mathcal T_*\bar{\mathfrak s})=\pi_*\bar{\mathfrak s}$,
yielding that $\mathfrak g_{\mathcal T\theta}\supseteq\pi_*\bar{\mathfrak s}$, i.e.,
$\mathcal T\theta\in\mathcal S^{\mathfrak s}$ for any $\theta\in\mathcal S^{\mathfrak s}$.
Therefore, $\mathcal T\in G^\sim_{\mathcal L^{\mathfrak s}}$.

Now we prove the inclusion $G^\sim_{\mathcal L}\cap G^\sim_{\mathcal L^{\mathfrak s}}\subseteq\mathrm {St}_{G^\sim_{\mathcal L}}(\bar{\mathfrak s})$.
If $\mathcal T\in G^\sim_{\mathcal L}\cap G^\sim_{\mathcal L^{\mathfrak s}}$,
then $\mathcal T^{-1}\in G^\sim_{\mathcal L}\cap G^\sim_{\mathcal L^{\mathfrak s}}$ as well, and
$\mathcal T\mathcal S^{\mathfrak s}=\mathcal T^{-1}\mathcal S^{\mathfrak s}=\mathcal S^{\mathfrak s}$.
Fix an arbitrary $\theta\in\mathcal S^{\mathfrak s}$.
Then $\hat\theta:=\mathcal T^{-1}\theta\in\mathcal S^{\mathfrak s}$,
and thus $\mathfrak g_{\hat\theta}\supseteq\pi_*\bar{\mathfrak s}$.
We obtain
\[
\mathfrak g_\theta=\mathfrak g_{\mathcal T\hat\theta}
=(\pi_*\mathcal T)_*\mathfrak g_{\hat\theta}\supseteq(\pi_*\mathcal T)_*\pi_*\bar{\mathfrak s}
=\pi_*(\mathcal T_*\bar{\mathfrak s}).
\]
In view of the above maximality of~$\bar{\mathfrak s}$, we get $\mathcal T_*\bar{\mathfrak s}\subseteq\bar{\mathfrak s}$.
Similarly, $(\mathcal T^{-1})_*\bar{\mathfrak s}\subseteq\bar{\mathfrak s}$,
which is equivalent to $\bar{\mathfrak s}\subseteq\mathcal T_*\bar{\mathfrak s}$,
and thus the statement follows.
\end{proof}

We can also consider Lie algebras instead of Lie groups.
Thus, we look at the infinitesimal version of Proposition~\ref{Qu:propStabilizer}.
First, let us recall that the equivalence group and the equivalence algebra of a class of differential equations
may not be finite-dimensional and therefore we should actually speak of its Lie equivalence pseudogroup
and the associated Lie equivalence algebroid.
To prove the next assertion, we recall a crucial property of Lie pseudogroups and Lie algebroids.
Namely, they are defined as a set of local solutions of systems of (formally integrable) differential equations
\cite[Chapter~5.B]{Pommaret1994}.
In fact, these systems for the equivalence group and for the equivalence algebra
of a class of differential equations are simply the associated systems of determining equations.

Let $\Psi$ be a differential vector-function in components of vector fields that defines the Lie algebroid~$\bar {\mathfrak s}$,
i.e., a vector field~$\mathrm u$ belongs to~$\bar{\mathfrak s}$, $\mathrm u\in\bar{\mathfrak s}$,
if and only if $\Psi(\mathrm u)=0$.
The function $\Psi$ may also be seen as a tuple of linear differential operators
acting on the components of its argument.

\begin{corollary}\label{Qu:lemmaNormalizer}
$\mathfrak g^\sim_{\mathcal L}\cap\mathfrak g^\sim_{\mathcal L^{\mathfrak s}}=\mathrm N_{\mathfrak g^\sim_{\mathcal L}}(\bar{\mathfrak s})$,
where $\mathrm N_{\mathfrak g^\sim_{\mathcal L}}(\mathfrak s)$ is the normalizer of~$\bar{\mathfrak s}$ in $\mathfrak g^\sim_{\mathcal L}$.
\end{corollary}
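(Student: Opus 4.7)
My plan is to derive Corollary~\ref{Qu:lemmaNormalizer} directly from Proposition~\ref{Qu:propStabilizer} by passing from the Lie (pseudo)group level to the Lie algebroid level. The two general facts underpinning this passage are: (i) the Lie algebroid of an intersection of two Lie pseudogroups coincides with the intersection of their Lie algebroids, and (ii) the Lie algebroid of the stabilizer subgroup of a Lie subalgebroid~$\bar{\mathfrak s}$ under the adjoint-like action (pushforward of vector fields) equals the normalizer of~$\bar{\mathfrak s}$ in the ambient Lie algebroid. Fact~(i) is transparent from the determining equations: the determining system for the intersection is simply the union of the two systems, and its solution space linearizes at the identity to the intersection of the solution spaces of the linearized systems. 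Fact~(ii) is the usual formula $[\mathrm u,\bar{\mathfrak s}]\subseteq\bar{\mathfrak s}$ obtained by linearizing the condition $\mathcal T_*\bar{\mathfrak s}=\bar{\mathfrak s}$ at the identity; it extends from the finite-dimensional Lie-group setting to Lie pseudogroups because, as recalled just before the corollary, Lie pseudogroups and their algebroids are governed by formally integrable systems of determining equations.

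Combining Proposition~\ref{Qu:propStabilizer} with facts~(i) and~(ii) yields the chain
\[
\mathfrak g^\sim_{\mathcal L}\cap\mathfrak g^\sim_{\mathcal L^{\mathfrak s}}
=\mathrm{Lie}\bigl(G^\sim_{\mathcal L}\cap G^\sim_{\mathcal L^{\mathfrak s}}\bigr)
=\mathrm{Lie}\bigl(\mathrm{St}_{G^\sim_{\mathcal L}}(\bar{\mathfrak s})\bigr)
=\mathrm N_{\mathfrak g^\sim_{\mathcal L}}(\bar{\mathfrak s}).
\]
Alternatively, and as a sanity check, one can imitate the two inclusions in the proof of Proposition~\ref{Qu:propStabilizer} in infinitesimal form. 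For~$\supseteq$, given~$\mathrm u\in\mathrm N_{\mathfrak g^\sim_{\mathcal L}}(\bar{\mathfrak s})$ and $\theta\in\mathcal S^{\mathfrak s}$, the local one-parameter family generated by~$\mathrm u$ maps $\theta$ to tuples $\theta_\varepsilon\in\mathcal S$ with $\mathfrak g_{\theta_\varepsilon}\supseteq\pi_*\bar{\mathfrak s}$ to first order in~$\varepsilon$, because $[\mathrm u,\bar{\mathfrak s}]\subseteq\bar{\mathfrak s}$ is exactly the condition that the pushforward action of~$\mathrm u$ preserves~$\pi_*\bar{\mathfrak s}$; so $\mathrm u\in\mathfrak g^\sim_{\mathcal L^{\mathfrak s}}$. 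For~$\subseteq$, the infinitesimal counterpart of the maximality argument used in the proof of Proposition~\ref{Qu:propStabilizer} gives $[\mathrm u,\bar{\mathfrak s}]\subseteq\bar{\mathfrak s}$ from the assumption that the flow of~$\mathrm u$ preserves $\mathcal S^{\mathfrak s}$.

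The main obstacle is the legitimacy of the group-to-algebroid passage in a possibly infinite-dimensional setting, in particular the exponentiation step implicit in~(ii). This is handled by invoking the fact that both $G^\sim_{\mathcal L}$ and~$\mathfrak g^\sim_{\mathcal L}$ are cut out of the appropriate jet spaces by the same formally integrable systems of determining equations (the finite and the linearized ones), so that the standard dictionary between a Lie pseudogroup and its Lie algebroid—via prolongation and pointwise linearization at the identity transformation—applies. Apart from this structural subtlety, no further computation is required beyond what has already been recorded in Proposition~\ref{Qu:propStabilizer}.
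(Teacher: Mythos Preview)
Your proposal is correct and follows essentially the same route as the paper: both derive the corollary from Proposition~\ref{Qu:propStabilizer} by passing between the pseudogroup and algebroid levels via exponentiation and linearization. The only difference is presentational---the paper carries out your fact~(ii) explicitly, using the Lie-derivative formula $[\mathrm v,\mathrm w]=\lim_{\varepsilon\to0}(\mathrm w_\varepsilon-\mathrm w_0)/\varepsilon$ together with the defining differential function~$\Psi$ of~$\bar{\mathfrak s}$ and its continuity for the inclusion~$\subseteq$, and the adjoint exponential series $(\exp(\varepsilon\mathrm v))_*\mathrm w=\sum_n\varepsilon^n(\mathop{\rm ad}\mathrm v)^n(\mathrm w)/n!$ for the inclusion~$\supseteq$---whereas you package these computations as a general principle about Lie pseudogroups defined by formally integrable determining systems.
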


\begin{proof}
Let $\mathrm v\in\mathfrak g^\sim_{\mathcal L}\cap\mathfrak g^\sim_{\mathcal L^{\mathfrak s}}$.
Then $\exp(\varepsilon\mathrm v)\in G^\sim_{\mathcal L}\cap G^\sim_{\mathcal L^{\mathfrak s}}$.
By Proposition~\ref{Qu:propStabilizer}, for any $\mathrm w\in\bar{\mathfrak s}$
we have $(\exp(\varepsilon\mathrm v))_\ast\mathrm w\in\bar{\mathfrak s}$.
Recall that
\[
[\mathrm v,\mathrm w]
=\mathscr L_{\mathrm v}\mathrm w
=\frac{\mathrm d\mathrm w_\varepsilon}{\mathrm d\varepsilon}\Big|_{\varepsilon=0}
=\lim\limits_{\varepsilon\to0}\frac{\mathrm w_\varepsilon-\mathrm w_0}\varepsilon,
\quad\mbox{where}\quad
\mathrm w_\varepsilon\big|_p:=\big(\exp(-\varepsilon\mathrm v)_\ast\mathrm w\big)\big|_{\exp(\varepsilon\mathrm v)(p)}
\]
for any point $p$ in the space coordinatized by $(x,u^{(r)},\theta)$.
The vector field~$\mathrm u_\varepsilon=(\mathrm w_\varepsilon-\mathrm w_0)/\varepsilon$
clearly belongs to~$\bar{\mathfrak s}$ for any~$\varepsilon\ne0$.
Thus, $\Psi(\mathrm u_\varepsilon)=0$ and recalling that components of $\mathrm u_\varepsilon$
are smooth in $(x,u^{(r)},\theta,\varepsilon)$,
we use the continuity of~$\Psi$ in components of vector fields to show
\[
\Psi([\mathrm v,\mathrm w])=\Psi\left(\lim\limits_{\varepsilon\to0}\mathrm u_\varepsilon\right)=
\lim\limits_{\varepsilon\to0}\Psi(\mathrm u_\varepsilon)=0,
\]
implying that $[\mathrm v,\mathrm w]\in\bar{\mathfrak s}$ and thus
$\mathrm v\in\mathrm N_{\mathfrak g^\sim_{\mathcal L}}(\bar{\mathfrak s})$.

Conversely, if $\mathrm v\in\mathrm N_{\mathfrak g^\sim_{\mathcal L}}(\bar{\mathfrak s})$,
then for any $\mathrm w\in\bar{\mathfrak s}$ and for any~$n\in\mathbb N_0$
we have $(\mathop{\rm ad}\mathrm v)^{(n)}(\mathrm w)\in\bar{\mathfrak s}$ as well.
This implies
\[
(\exp(\varepsilon \mathrm v))_\ast\mathrm w=\sum_{n=0}^\infty\frac{\varepsilon^n(\mathop{\rm ad}\mathrm v)^n(\mathrm w)}{n!}
\in\bar{\mathfrak s}
\]
due to the continuity argument as above and the convergence of the sequence of partial sums of the series.
Thus $\exp(\varepsilon\mathrm v)$ lies in the stabilizer of~$\bar{\mathfrak s}$ in~$G^\sim_{\mathcal L}$,
which is the same as the Lie pseudogroup~$G^\sim_{\mathcal L}\cap G^\sim_{\mathcal L^{\mathfrak s}}$
in view of Proposition~\ref{Qu:propStabilizer}. In turn, this implies that $\mathrm v$ belongs to
its Lie algebroid~$\mathfrak g^\sim_{\mathcal L}\cap \mathfrak g^\sim_{\mathcal L^{\mathfrak s}}$.
\end{proof}

Corollary~\ref{Qu:lemmaNormalizer} is analogous to a well-known fact in Lie theory, cf.~\cite[Chapter~11]{HilgertNeeb2012}.

\begin{corollary}
If the class~$\mathcal L$ is normalized, then
$G^\sim_{\mathcal L^{\mathfrak s}}=\mathrm {St}_{G^\sim_{\mathcal L}}(\bar{\mathfrak s})$
and $\mathfrak g^\sim_{\mathcal L^{\mathfrak s}}=\mathrm N_{\mathfrak g^\sim_{\mathcal L}}(\bar{\mathfrak s})$.
\end{corollary}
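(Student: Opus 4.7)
The plan is to reduce the corollary to the preceding Proposition~\ref{Qu:propStabilizer} and Corollary~\ref{Qu:lemmaNormalizer} by showing that normalization of~$\mathcal L$ forces the equivalence (pseudo)group of the subclass $\mathcal L^{\mathfrak s}$ to be contained in that of~$\mathcal L$. Once this containment is established, the intersections $G^\sim_{\mathcal L}\cap G^\sim_{\mathcal L^{\mathfrak s}}$ and $\mathfrak g^\sim_{\mathcal L}\cap \mathfrak g^\sim_{\mathcal L^{\mathfrak s}}$ collapse to $G^\sim_{\mathcal L^{\mathfrak s}}$ and $\mathfrak g^\sim_{\mathcal L^{\mathfrak s}}$ themselves, and the identifications with the stabilizer and the normalizer follow immediately.

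First, I would unwind the definition of normalization. An admissible transformation $(\theta,\varphi,\tilde\theta)$ in the subclass $\mathcal L^{\mathfrak s}$ is, by virtue of $\mathcal S^{\mathfrak s}\subseteq\mathcal S$, also an admissible transformation of the ambient class~$\mathcal L$. Since $\mathcal L$ is normalized, every such admissible transformation is induced by an element of $G^\sim_{\mathcal L}$; in other words, each element of $G^\sim_{\mathcal L^{\mathfrak s}}$ lifts (canonically, because the arbitrary elements depend only on the independent variables) to an element of $G^\sim_{\mathcal L}$. This gives the inclusion $G^\sim_{\mathcal L^{\mathfrak s}}\subseteq G^\sim_{\mathcal L}$. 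On the other hand, the very definition of the equivalence group of $\mathcal L^{\mathfrak s}$ gives the trivial inclusion $G^\sim_{\mathcal L^{\mathfrak s}}\subseteq G^\sim_{\mathcal L^{\mathfrak s}}$, so together we obtain $G^\sim_{\mathcal L^{\mathfrak s}}\subseteq G^\sim_{\mathcal L}\cap G^\sim_{\mathcal L^{\mathfrak s}}$. The reverse inclusion is immediate. Plugging the resulting equality into Proposition~\ref{Qu:propStabilizer} yields $G^\sim_{\mathcal L^{\mathfrak s}}=\mathrm{St}_{G^\sim_{\mathcal L}}(\bar{\mathfrak s})$.

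Second, the statement for the equivalence algebras is obtained by passing to infinitesimal generators. A vector field $\mathrm v$ lies in $\mathfrak g^\sim_{\mathcal L^{\mathfrak s}}$ if and only if its local one-parameter flow $\exp(\varepsilon\mathrm v)$ lies in $G^\sim_{\mathcal L^{\mathfrak s}}$ for all small~$\varepsilon$. By the first part this is equivalent to $\exp(\varepsilon\mathrm v)\in G^\sim_{\mathcal L}\cap G^\sim_{\mathcal L^{\mathfrak s}}$, i.e., to $\mathrm v\in\mathfrak g^\sim_{\mathcal L}\cap \mathfrak g^\sim_{\mathcal L^{\mathfrak s}}$. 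Corollary~\ref{Qu:lemmaNormalizer} then identifies this intersection with $\mathrm N_{\mathfrak g^\sim_{\mathcal L}}(\bar{\mathfrak s})$.

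The only subtle point, and hence the main obstacle to address carefully, is the identification of $G^\sim_{\mathcal L^{\mathfrak s}}$ with a literal subgroup of $G^\sim_{\mathcal L}$ rather than just a Lie pseudogroup acting on a possibly reparameterized arbitrary-element space. One must verify that when $\mathcal L^{\mathfrak s}$ is regarded as a subfamily of $\mathcal L$ (without any reparametrization), the lifted transformation from the normalization argument coincides with the original element of $G^\sim_{\mathcal L^{\mathfrak s}}$ on the $(x,u)$-coordinates and acts on the arbitrary elements in the same way. This is where the assumption that the arbitrary elements depend only on the independent variables and the pointwise nature of the equivalence action are used. After this identification is secured, both equalities of the corollary are immediate consequences of the preceding two results.
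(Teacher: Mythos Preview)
Your proposal is correct and follows essentially the same route as the paper's proof: normalization of~$\mathcal L$ gives the inclusions $G^\sim_{\mathcal L^{\mathfrak s}}\leqslant G^\sim_{\mathcal L}$ and $\mathfrak g^\sim_{\mathcal L^{\mathfrak s}}\subseteq\mathfrak g^\sim_{\mathcal L}$, after which Proposition~\ref{Qu:propStabilizer} and Corollary~\ref{Qu:lemmaNormalizer} yield the result immediately. The paper states this in a single sentence, whereas you spell out why normalization produces the inclusion and flag the identification issue between an element of $G^\sim_{\mathcal L^{\mathfrak s}}$ and its lift in $G^\sim_{\mathcal L}$; this extra care is appropriate but does not change the underlying argument.
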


\begin{proof}
If the class~$\mathcal L$ is normalized, then $G^\sim_{\mathcal L^{\mathfrak s}}\leqslant G^\sim_{\mathcal L}$
and $\mathfrak g^\sim_{\mathcal L^{\mathfrak s}}\subseteq\mathfrak g^\sim_{\mathcal L}$,
from which the statement readily follows.
\end{proof}

To summarize the above, in order to gauge arbitrary elements of the subclass~$\mathcal L^{\mathfrak s}$ up to $G^\sim_{\mathcal L}$-equivalence
we need to know the group~$G^\sim_{\mathcal L^{\mathfrak s}}\cap G^\sim_{\mathcal L}$.
We may compute it either as a subgroup of~$G^\sim_{\mathcal L}$ that preserves the subclass~$\mathcal L^{\mathfrak s}$ or
as the stabilizer subgroup~$\mathrm{St}_{G^\sim_{\mathcal L}}(\bar{\mathfrak s})$ of~$G^\sim_{\mathcal L}$ with respect to~$\mathfrak s$.
The identity component of the stabilizer can be found
via computing the infinitesimal generators forming the Lie algebra~$\mathrm N_{\mathfrak g^\sim_{\mathcal L}}(\bar{\mathfrak s})$
first and exponentiating thereafter but then the other components of the stabilizer are missed.

However, not all these transformations are essential for the gauging procedure
because the projections of some of them under $\pi_*$ may be symmetry transformations
for all equations in~$\mathcal L^{\mathfrak s}$ and thus do not change their form.
In fact, these projections constitute the group~$G^\cap_{\mathcal L^{\mathfrak s}}\cap \pi_*G^\sim_{\mathcal L}$,
which is a normal subgroup of~$\pi_*\left(G^\sim_{\mathcal L^{\mathfrak s}}\cap G^\sim_{\mathcal L}\right)$.
Here $G^\cap_{\mathcal L^{\mathfrak s}}$ denotes the kernel Lie point symmetry group of the class~$\mathcal L^{\mathfrak s}$.
Therefore, to carry out the gauging procedure efficiently we have to factor out these transformations.

\section*{Acknowledgments}

The authors are grateful to the reviewer for useful suggestions. 
The authors also thank Michael Kunzinger, Dmytro Popovych and Galyna Popovych for helpful discussions.
SO acknowledges the support of the Natural Sciences and Engineering Research Council of Canada
and the hospitality of the University of Vienna.
The research of AB was undertaken, in part, thanks to funding from the Canada Research Chairs program and the NSERC Discovery Grant program.
The research of ROP was supported by the Austrian Science Fund (FWF), projects P25064 and P30233.

\footnotesize

\end{document}